%

\documentclass{siamltex}

\usepackage[breaklinks,colorlinks=true,linkcolor=blue,citecolor=red,backref=page]{hyperref}

\usepackage{amsfonts} 
\usepackage{amssymb,amsmath} 
\usepackage{graphicx,color}
\usepackage{fancybox}

\DeclareMathAlphabet{\itbf}{OML}{cmm}{b}{it}

\def\EE{\mathbb{E}}
\def\PP{\mathbb{P}}

\def\RR{\mathbb{R}}

\def\eps{\varepsilon}

\def\H{{ \tilde{\rm H}}}
\def\N{{ \tilde{N}}}

\newcommand{\nn}{\nonumber}

\newcommand{\ch}{\check{h}}

\newcommand{\ea}{\end{eqnarray}}  
\newcommand{\ba}{\begin{eqnarray}}  
\newcommand{\ee}{\end{equation}}  
\newcommand{\be}{\begin{equation}}  
\newcommand{\ean}{\end{eqnarray*}}  
\newcommand{\ban}{\begin{eqnarray*}}

\sloppy 

\begin{document}  
\bibliographystyle{plainnat}
\title{\
Optimal hedging under fast-varying stochastic volatility
  }

\author{Josselin Garnier\footnotemark[1]
 and Knut S\O lna\footnotemark[2]}   

\maketitle

\renewcommand{\thefootnote}{\fnsymbol{footnote}}

\footnotetext[1]{Centre de Math\'ematiques Appliqu\'ees,
Ecole Polytechnique, 91128 Palaiseau Cedex, France
{\tt josselin.garnier@polytechnique.edu}}

\footnotetext[2]{Department of Mathematics, 
University of California, Irvine CA 92697
{\tt ksolna@math.uci.edu}}

\renewcommand{\thefootnote}{\arabic{footnote}}

\maketitle

\begin{abstract}
 {  In a market with a rough or Markovian mean-reverting stochastic
volatility there is no perfect hedge.  Here it is shown how various 
delta-type hedging  strategies perform and can be evaluated  
in such markets in the case of European options.  
A precise characterization of the hedging cost,
the replication cost caused
by the volatility fluctuations,  is presented in an asymptotic regime
of rapid mean reversion for the volatility fluctuations.  
The optimal dynamic asset based hedging 
 strategy in the considered regime is identified as the so-called 
 ``practitioners'' delta hedging scheme. 
It is moreover shown that the  performances of the delta-type hedging schemes are
essentially independent of the regularity of the volatility paths in the 
considered regime and that the hedging costs are related to a  Vega 
risk martingale whose magnitude is proportional to a new market risk parameter.
It is also shown via numerical simulations that the proposed hedging schemes 
which derive from option price approximations in the  regime of rapid mean reversion, 
 are robust:
the  ``practitioners'' delta hedging scheme that is identified as being optimal by our asymptotic analysis when the mean reversion time is small
 seems to be optimal  with arbitrary mean reversion times.}  
\end{abstract}

\begin{keywords}
Stochastic volatility, Rough volatility,  Hedging, Risk Quantification, 
\end{keywords}

\begin{AMS}
91G80, 
60H10, 
60G22, 
60K37. 
\end{AMS}

\section{Introduction}

We consider an incomplete market with stochastic volatility model for the underlying. 
Our main objective is to characterize the performance of option hedging schemes
in such markets.  
The rather general class of stochastic volatility models that we consider 
incorporates standard Markovian volatility models and 
also rough volatility models that have received a lot of attention recently, see
 \cite{alos2,viens3,gatheral1,jacquier,bayer,F15} and the literature reviews  in
 \cite{sv2,sv3}. 
 In the context of portfolio optimization Markovian models 
 have been considered for instance in \cite{fsz},  while recently
 the non-Markovian case was considered in \cite{fH,fH2,fH3}.
   
Here  we model the volatility as a smooth function of a 
volatility factor that is a stationary Volterra type Gaussian process.
In the standard volatility model the volatility factor is a mean-reverting Markov 
process such as an Ornstein-Uhlenbeck process.
In the  rough volatility  model the correlation function of 
the volatility factor decays rapidly at the origin, faster than the decay
associated with a Markov process, producing  rough paths.  
The decay rate is characterized by 
the Hurst exponent $H$.  
The Gaussian volatility factor may be chosen for instance as a fractional 
Ornstein-Uhlenbeck process with Hurst exponent $H < 1/2$. 
 The main asymptotic context that we consider  is a rapidly 
 mean-reverting volatility situation.  The results presented here build  on
 and extend  those presented in \cite{sv3} regarding 
 option pricing for such models. 
 Here we extend this framework to a more general
 class of volatility models and analyze the performance of 
 a large class of hedging strategies for European options
 that we call dynamic asset (DA) based hedging schemes.
 A DA scheme is based on a replicating portfolio 
made of some number of underlyings and some amount in the bank account.
In particular, this class contains the  ``delta'',
 $\delta$, hedging strategies, in which the number 
 of underlyings in the portfolio is  the $\delta$ of the price,  that is,  the partial derivative
 of the option price with respect to the underlying price.   
  For the classic Black-Scholes model 
 with a constant volatility this strategy makes it possible to trade in a self-financing manner
 in the underlying and the bank account to perfectly replicate the 
 payoff of the option.  
 In the situation when the volatility is stochastic such a scheme 
 accumulates extra cost during the lifetime of the option due to the fluctuations
 in the volatility. We consider  here two main market situations: (I)
 the option trades at the Black-Scholes option price at the ``effective
 volatility'' or a {\it Black-Scholes market}, this is discussed in Section \ref{sec:delta1};  
 (II)  the market incorporates the effects of rapid volatility fluctuations
 and trades at a corrected price or a {\it corrected market}, 
 this is discussed in Section \ref{sec:delta2}.
 Here (I) the effective volatility refers to  the root mean square of the volatility
 process averaged with respect to the invariant distribution of the
 volatility factor and (II) the corrected price refers to the Black-Scholes
 price at the effective volatility with a correction which follows
 from an asymptotic analysis of the rapidly mean-reverting situation,
 see Proposition \ref{prop:main}.
 We  assume that the mean reversion time of the volatility
 factor is small relative to the diffusion  time of the underlying price. 
 We remark that the distinction between the market situations (I) and (II) is
 important in the case  of early exercise. 
  Note  moreover that we consider several canonical ways of computing 
  the effective $\delta$ of the replication strategy.  These are described
  in more detail below. 
  In the case that ``vol-of-vol'' is zero,
  that is in the limit of small volatility fluctuations, 
   these $\delta$s  become the standard
  Black-Scholes $\delta$ and the hedging strategies become the standard 
  self-financing replicating strategy.   
  In  the case of a fluctuating volatility 
  we present here a novel and precise characterization
   of the  extra  {\it hedging cost} that accumulates
   due to the fluctuations.             
 For  the strategy (I) this extra cost is semimartingale with 
 in general a non-zero mean and variance that we quantify, while
 for  the strategy
 (II) the extra cost is a {\it true martingale} and we compute its variance. 
  We   compute the costs for the DA hedging strategies and we 
  identify the optimal  hedging strategy within the DA class that minimizes 
  the variance of the hedging cost in our  regime.
  We allow for early exercise when
  evaluating the cost and we show how the cost depends on the relative 
  exercise time.
  It is important to note that our results are universal
 in that they hold for both  rough ($H<1/2$) and classic Markovian  
 stochastic volatility factors in the regime of rapid mean reversion.  
However, in a regime of slow, rather than fast mean  reversion, or when 
$H > 1/2$,  this picture  changes qualitatively and results regarding these regimes
will be presented elsewhere.   Note,  moreover,  that  we here consider the case with
``leverage'', which means that the volatility factor is correlated with the Brownian motion driving
the underlying price.  In fact,
 in the situation with zero correlation all the hedging approaches coincide 
 and the cost is characterized fully by the  Vega risk  martingale.

The role of stochastic volatility for delta 
hedging schemes in the uncorrelated  case has been discussed in \cite{touzi}.
Underhedged and overhedged situations are discussed there
and  we  revisit such a characterization
here in the correlated  case.    
Superheging schemes   provide an upper bound 
for the replication cost  \cite{PH,touzi2}.
Here we present  a statistical characterization of the hedging cost 
 which can be used for  a ``value at risk'' type  characterization of the hedging cost.
 When stochastic volatility is mixing and
rapidly mean-reverting 
the hedging cost was discussed in \cite{sircar00} in the case without leverage and in 
\cite{fouque00} in the case with leverage. 
We extend here this discussion to get explicit expressions for the 
hedging cost and consider more general DA hedging schemes. 
While we here consider hedging schemes with a view toward minimizing
replication cost, portfolio construction from the point of view of utility optimization
is discussed in \cite{fsz} in the context of stochastic volatility in various
asymptotic regimes.
Our objective is indeed  to characterize analytically the performance 
of classic (including delta) hedging schemes which plays an important
role in practical risk mitigation schemes \cite{P1}. 
In \cite{HWnew} the importance of the leverage in determining  risk in
 hedging schemes is emphasized and  explored from an
empirical perspective. Here we give an analytic description
of hedging risk (mean and variance of the hedging cost) 
in particular for the delta hedging  schemes discussed in \cite{HWnew} in the context of leverage and 
rapid mean reversion.    

 {\it Outline of paper:}
 First, in Section \ref{sec:sum} we summarize the main result of the paper.
Then, in Section \ref{sec:svmodel},  we discuss the details of the 
modeling of the market with a fast 
mean-reverting stochastic volatility and in Section \ref{sec:price} 
we give the leading order
stochastic volatility  
price  correction for  a European option in this model. 
 { Note that when we refer to  { leading order}  below we refer
to terms of order $\sqrt{\eps/T}$ or larger with $\eps$ being the mean reversion
time of the volatility factor and $T$ the time to maturity. 
Then we present the main result of the paper in Section
\ref{sec:hedging}  
on the characterization of the hedging costs
for the various  hedging schemes  that we consider. 
 The hedging strategies are  computed relative to  leading order price approximations,
which closely approximates the price  in the asymptotic regime we consider 
with $\eps/T \ll 1$.   
 We discuss in more detail the  main effective parameters 
 that are necessary to implement the strategies and those that characterize the hedging costs 
 in Sections \ref{sec:est} and  \ref{sec:expou}.
We specialize to   the  case  of a call option in Section \ref{sec:call}
 and we present numerical illustrations of the asymptotic results. In Section \ref{sec:sim}
 we present some  Monte Carlo  simulations where we compute the actual hedging
 costs for the various hedging schemes    in the case of call options. 
 We find that the hedging schemes we have set forth based on the asymptotic theory
 in the regime of rapid mean reversion perform well also when the mean reversion time 
 is of the same order as the time to maturity.
   We finally provide some concluding remarks in Section \ref{sec:concl}.   }
 
\section{Summary of Main Results}
\label{sec:sum}

We consider in this section hedging of a European option with payoff $h(X_T)$ with 
$T$ the maturity and $X_t$ the underlying. The underlying is assumed 
to follow a diffusion process with a stochastic volatility as described in Section 
\ref{sec:svmodel},  Eq.~(\ref{eq:Xdef}). In this paper we do not consider
short rate effects,  corresponding to assuming as numeraire  the zero 
coupon bond with maturity~$T$. Moreover, we do not consider effects associated
with dividends, transaction cost or  
 market price of volatility risk. 
An important assumption is, however, that we assume a non-zero ``leverage'',
which means that the volatility factor is driven by a Brownian motion that is correlated with the 
Brownian motion driving the underlying, see Eq. (\ref{eq:corr}) below.
Our main objective is to identify analytically the {\it hedging cost}.
We assume a regime where the mean reversion time of the 
volatility factor is  small relative to the diffusion time of the underlying which is
on the scale of the maturity $T$, that is, we consider a {\it rapidly mean-reverting stationary}
volatility.  We present  asymptotic results
in the regime of rapid mean reversion and below we make precise the sense
of the approximation.
Our class of volatility models  incorporates standard Markovian volatility models and 
rough volatility models.

 {Let the root mean square or ``historical'' volatility be denoted by
$\bar\sigma$ (see Eq.~(\ref{def:barsigma})  below for the  definition). } 
Moreover, let $Q^{(0)}(t,x;\sigma)$ be the standard Black-Scholes 
(European option) price at 
volatility level $\sigma$ evaluated at time $t$ and current value 
$x$ for the underlying.  Then the price  that incorporates 
the leading order correction due to the rapidly mean-reverting  stochastic
volatility is:
\begin{equation}
\label{eq:Pcorrected}
   P(t,x )  
    =  Q^{(0)}(t,x; \bar\sigma) + 
   D(T-t) \big( x \partial_x  (x^2 \partial_x^2 ) \big) Q^{(0)}(t,x;\bar\sigma) ,
\end{equation}
see Section \ref{sec:price}.
Here $D$ is an {\it effective pricing parameter} that can be calibrated 
from observations of the implied volatility skew, see 
  Section \ref{sec:est}.

%
%
%
%
   
\medskip

We construct a replicating portfolio so that $a_t$ is the number of
underlyings at time $t$ and $b_t$ is the amount in the bank account.  
The value of the portfolio is then
\begin{equation}
\label{eq:generalformportfolio}
   V_t = a_t X_t + b_t .
\end{equation}
The portfolio is required to replicate the price of the option 
so it replicates the payoff at maturity $V_T = h(X_T) $.
The net payment stream provided by the market over the time interval $(0,T)$ 
due to changes in the price of the underlying  is
$$
   \int_0^T  a_s dX_s .
$$
The change in the portfolio value that is not  ``financed'' by the market
has to be paid by the portfolio holder and we call this  the cost function:
$$
  E_T  = h(X_T) - \int_0^T a_s dX_s .
$$
This hedging scheme is called a DA scheme if  $a_t $ is a function of $t$ and $X_t$.
The general class of DA hedging schemes contains the delta hedging strategies, that is to say, 
the strategies in which  the number $a_t = \delta(t,X_t)$ of
underlyings in the portfolio at time $t$ is the derivative of the price of the option
with respect to the value of the underlying.
We consider  first two main delta hedging strategies characterized by 
the chosen ``delta'':

(HW):   The delta of the corrected price: 
\ba\label{eq:dMV} 
 \delta^{\rm HW}(t,x)=\partial_x
 P(t,x ) ,
\ea
with $P$ given by (\ref{eq:Pcorrected}).

(BS):    The delta of the Black-Scholes price at the implied volatility: 
\ba
\label{eq:dBS}
  \delta^{\rm BS}(t,x)=\partial_x 
  Q^{(0)}(t,x;\sigma)|_{\sigma=\sigma(t,x)},
  \ea
 with the implied volatility $\sigma(t,x)$ solving 
\begin{equation}
\label{def:impliedvol}
P(t,x ) =  Q^{(0)}(t,x;\sigma(t,x))  . 
\end{equation}

\medskip

 {
Note  that we  here  define the  implied volatility relative to the corrected price 
$P$.  }

In the case that the volatility is constant and equal to $\bar\sigma$, corresponding
to the standard  Black-Scholes model,  these    approaches coincide
and the portfolios are self-financing. 
In the case that the volatility is fluctuating,  the model is incomplete   
and we accumulate additional hedging cost 
during the lifetime  of the option.  
We remark that with no leverage effect (which means that
the volatility factor is independent of the Brownian motion driving
the underlying price), then $D=0$ and the two approaches 
 coincide and give the same hedging cost.

By (\ref{def:impliedvol}), the delta of hedging scheme (HW) corresponds to 
\ban
   \delta^{\rm HW} (t,x)= \partial_x Q^{(0)}(t,x;\sigma(x,t)) 
   + \partial_\sigma Q^{(0)}(t,x;\sigma(x,t))  
   \times \partial_x \sigma(x,t)   .
\ean
 This scheme is referred to as the {\it minimum variance delta} 
  in the recent paper \cite{HWnew}  by Hull and White.
   They find by empirical comparison of a few strategies that 
   this hedging approach is the one associated with minimum 
   hedging risk or cost variance.
 In \cite{HWnew} the minimum variance delta and enhanced performance
 is motivated by the presence  of leverage.
 Here we  quantify the means and variances of the hedging costs analytically
 and correspondingly identify analytically the hedging approach with minimum  hedging cost variance 
in our setting, which is not the (HW) scheme.

 The costs of the hedging strategies are characterized by the three market parameters
\[
 \bar\sigma, \quad D, \quad \Gamma ,
\]
see 
  Section \ref{sec:est}.
The first and second are sufficient to characterize the price as we have remarked above,
the third is a {\it hedging risk} parameter. 
 Consider the situation when we construct a hedging
 portfolio of value $P(t,X_t)$ and write the total hedging  cost 
 at maturity  $T$ by
\ba\label{eq:cost1}
 E_T^{\rm C}=   P(0,X_0) + Y_T^{\rm C} , \quad {\rm C=HW,BS},
\ea  
for the  two choices of hedging delta. 
Here  $X_0$ is the underlying value at initiation time $t=0$
and $P(0,X_0)$ the initiation cost of the portfolio.
Then in a sense made precise  below   the 
random  part of the cost  at maturity $Y_T^{\rm HW}$ is 

\ban
       Y_T^{\rm HW}    =  \Gamma 
       \int_0^T \big( x^2 \partial_x^2 \big)^2 Q^{(0)}(s,X_s;
       \bar{\sigma}) dB_s =
      {\Gamma} 
       \int_0^T \frac{  \partial_\sigma  Q^{(0)}(s,X_s;\bar{\sigma})  }{T-s} dB_s
        ,
\ean 
for $B$ a standard Brownian motion. 
If the price sensitivity to volatility changes, the  Vega, 
is small, then the  Vega risk  is  small as well. 
The sensitivity to  Vega in the cost accumulation becomes  larger 
as one approaches maturity.
The cost does not depend on the market pricing parameter $D$,  and hence
it does not depend on the leverage correlation parameter $\rho$ either
($\rho$ is the correlation between the 
volatility factor and the Brownian motion driving
the underlying price, see Eq.~(\ref{eq:corr}) below). 
However, it is proportional to the hedging risk parameter $\Gamma$ which does not depend on $\rho$ and which is
the central new parameter.   
 Thus, the hedging  approach is {\it leverage compensating}  in that it  ``immunizes'' 
the portfolio with respect to ``leverage risk''.
In the particular case of a European call option with strike $K$, i.e. $h(x)=(x-K)^+$, we have 
$\EE[ Y_T^{\rm HW} \mid {\cal F}_0 ]=0$ and 
\begin{equation}
\label{eq:varMVT}
     {\rm Var}\left(  Y_{T}^{\rm HW}  \mid {\cal F}_0 \right)  =   \Big( \frac{K  \Gamma}{\bar\sigma}  \Big)^2
       \frac{1}{2\pi}
   \int_0^{1} \exp\Big(-\frac{d_{_-}^2}{1+s} \Big) \frac{ 1}{\sqrt{1-s^2}}  ds  ,         
\end{equation}
with the standard Black-Scholes parameter 
\begin{equation}
\label{eq:bsparameter0}
  d_{\pm} = \frac{\log(X_0/K)}{\sqrt{\tau}}  \pm \frac{\sqrt{\tau}}{2}  , 
  \quad \quad \tau = \bar\sigma^2T .
\end{equation}
Here the expectation and variance are taken conditionally on the information at time zero.
We show this hedging cost variance at maturity in Figure \ref{fig_2} 
 as a function of relative time to maturity,  $\tau=\bar{\sigma}^2 T$, and moneyness,  $m=X_0/K$. 
\begin{figure}
\begin{center}
\begin{tabular}{c}
\includegraphics[width=8.4cm]{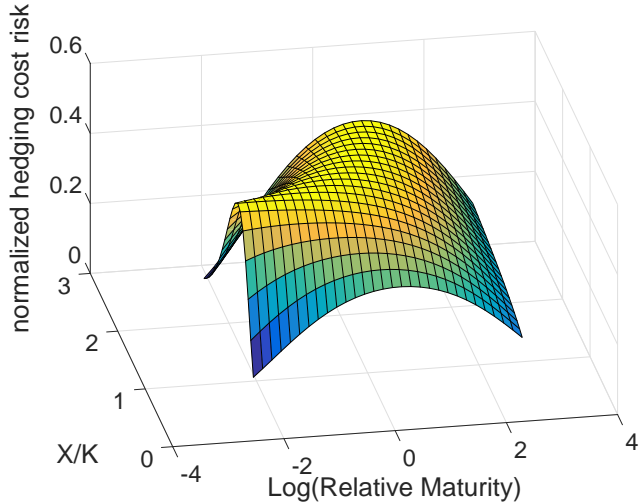}
\end{tabular}
\end{center}
\caption{     The figure shows the normalized  hedging cost standard deviation
$ St.Dev( Y_{T}^{\rm HW} ) \bar\sigma/(K\Gamma)$ as function of relative time to maturity $\tau=\bar{\sigma}^2 T$ and
moneyness $m=X_0/K$.
\label{fig_2} }
\end{figure}

We next state the important result that leverage 
makes the ``practitioners'' hedging approach superior.
We have   explicitly $\EE [  Y_{T}^{\rm BS}  \mid {\cal F}_0]=0$ and
\ba\label{eq:main}
  {\rm Var}\left(  Y_{T}^{\rm BS}   \mid {\cal F}_0\right)    =  
  {\rm Var}\left(  Y_{T}^{\rm HW}  \mid {\cal F}_0 \right) \left(  1 -  \Big(\frac{D}{\bar\sigma \Gamma}\Big)^2  \right)   ,
  \mbox{ with } \left|\frac{D}{\bar\sigma \Gamma}\right| \leq |\rho| \leq 1  ,
\ea
which implies $ {\rm Var}\left(  Y_{T}^{\rm BS}  \mid {\cal F}_0 \right)  \leq  
  {\rm Var}\left(  Y_{T}^{\rm HW} \mid {\cal F}_0  \right)$.
  The main result of this paper is then set forth in Section \ref{sec:opt}, Proposition \ref{prop:optim}:
 {\it the (BS) hedging scheme minimizes the hedging cost variance among all DA
  hedging  schemes}, thus is the true minimum variance hedging scheme
  in the regime discussed here!  
  {This result is proved in the regime of fast mean reversion and confirmed by the numerical
  simulations reported in Section \ref{sec:sim}.
  These simulations also reveal that the result is robust with respect to the scaling regime:  the hedging cost variance  
  of the (BS) strategy is always smaller than (or equal to) the one of the  (HW) strategy.}

  In Section \ref{sec:expou} we discuss the explicit 
  expressions of the effective market parameters when the volatility model is the exponential of a standard or fractional 
  (with Hurst exponent $H<1/2$) Ornstein-Uhlenbeck  process.  In this case we have 
\begin{equation}
     \frac{D}{\bar\sigma \Gamma}  \approx \rho .
\end{equation}
 
 Note that the implementation of the delta hedging schemes
  (HW) and (BS) requires the knowledge of the two effective market parameters
  $\bar\sigma$ and $D$.    
  Below we will also discuss the case when
  we choose  a ``homogenized'' or ``historical''  delta:
  
  (H):    The delta of the Black-Scholes price at the historical volatility: 
  \begin{equation}\label{eq:dH}
\delta^{\rm H} (t,x)=  \partial_x   Q^{(0)}(t,x;\bar{\sigma}) . 
\end{equation}
This  hedging scheme (H) can be implemented with 
{\it only the knowledge of $\bar\sigma$} and does not 
require calibration based on  pricing data.
However, in all cases implementing
the hedging scheme and  simultaneously characterizing
the hedging cost mean and variance requires the knowledge of all three market 
parameters $(\bar\sigma, D, \Gamma)$.   For the scheme (H)
we can write
as in Eq. (\ref{eq:cost1}) for the hedging cost at maturity:
\ba\label{eq:cost2}
  E_T^{\rm H}=  P(0,X_0) + Y_T^{\rm H}  ,
\ea  
and it follows from Proposition \ref{prop:optim} that  
${\rm Var}\left(  Y_{T}^{\rm H} \mid{\cal F}_0 \right) \geq {\rm Var}\left(  Y_{T}^{\rm BS}  \mid{\cal F}_0 \right)$.  
In particular for a European call with strike $K$ we have $\EE \left[   Y_{T}^{\rm H}  \mid {\cal F}_0\right] = 0$ and 
\ba
  {\rm Var}\left(  Y_{T}^{\rm H}  \mid {\cal F}_0 \right)    =  
  {\rm Var}\left(  Y_{T}^{\rm HW}  \mid {\cal F}_0 \right)  +   
  \Big(\frac{K D}{\bar\sigma^2}\Big)^2  \hat{w}^{\rm H}( d_{_{-}})  ,
\ea
with $d_{_{-}}$ given by (\ref{eq:bsparameter0}) and
\begin{eqnarray}
\nonumber
 \hat{w}^{\rm H}(d )   &=& 
    \frac{ 2 }{ \pi } 
 \int_0^1    \exp\left(- \frac{d^2}{1+s}\right)  
   \frac{1}{\sqrt{1-s^2}}   \Big[
d^4 \frac{(1-s)^2}{(1+s)^4}  +
 d^2 \frac{6s(1-s)}{(1+s)^3}   +
  \frac{3s^2 }{(1+s)^2} 
  - \frac{1}{2}
      \Big] ds\\
      &&
           -    \frac{   d^2  \exp(-d^2 ) }{ {2\pi} }    .
\end{eqnarray}
  {
We present numerical simulations in the case of European call options in  Section \ref{sec:sim}.
We find that the (BS) hedging scheme performs well even beyond the regime
of rapid mean reversion which is the asymptotic regime from which it derives. 
We summarize the  form of the deltas introduced in the case 
of European call options:
   \ban
 \delta^{\rm H}(t,x)  & = &     {\cal N}(d_+) ,  \\
    \delta^{\rm BS}(t,x)    & = &  \delta^{\rm H}(t,x)   +  {\cal D}  \frac{  d_{_-}^2  \exp(-d_{_-}^2/2) }{x \sqrt{\tau} }  ,   \\
   \delta^{\rm HW}(t,x)  & = &    \delta^{\rm H}(t,x)   +  {\cal D}  \frac{ (d_{_-}^2-1)   \exp(-d_{_-}^2/2) }{x \sqrt{\tau} }   ,
 \ean 
 with  ${\cal N}$ the cumulative normal distribution and
 $d_\pm, \tau$ given by (\ref{eq:bsparameter0}).
  Here   ${\cal D}$ is   a canonical hedging parameter. This parameter  can in fact be calibrated from the implied volatility skew, while 
the calibration approach we promote here is to calibrate this  from historical price paths so  as to minimize the hedging
cost with respect to this  parameter.       }

Below we will also present the results for the hedging costs 
in the case with  early exercise $t<T$.     
 Before we  present such hedging risk characterizations
in  cases  with  general payoffs and exercise times  in Section \ref{sec:hedging}
we discuss the modeling of the stochastic volatility in Section 
\ref{sec:svmodel} and the asymptotic pricing  formula in Section~\ref{sec:price}.   
     
    
\section{A Class of  Fast Mean Reverting Rough Volatility Models}  
\label{sec:svmodel}

Consider  the price of the risky asset which follows,  under the historical measure,  the stochastic differential equation:   {
\begin{equation}
\label{eq:Xdef}
dX_t = X_t    \left(  d\mu_t    + \sigma_t^\eps   dW^*_t \right) ,
\end{equation}
 with $W^*$ a standard Brownian motion.    
In this paper we assume that  the short term interest rate $r=0$
and that the drift is negligible, so we set $\mu=0$. 
The stochastic volatility is a stationary process of the form
\begin{equation}
\label{def:stochmodel}
\sigma_t^\eps = F(Z_t^\eps) .
\end{equation}
 The stochastic volatility is not a Gaussian process but it is
a function of the volatility factor 
$Z_t^\eps$ that is a  scaled stationary Gaussian process: \begin{equation}
\label{eq:Zgen}
Z^\eps_t = \sigma_{\rm z} \int_{-\infty}^t {\cal K}^\eps(t-s) dW_s,
\end{equation}
where  $W_t$ is a standard Brownian motion under the historical measure and
\ba
\label{def:Keps}
{\cal K}^\eps(t) &=&  \frac{1}{\sqrt{\eps}} {\cal K}\Big(\frac{t}{\eps}\Big). 
 \ea
We have introduced the mean reversion  time scale $\eps$
 which is the small time scale in our problem. 
 It means in particular that we consider contracts whose time to maturity is long compared
 to the natural time scale of the volatility factor. 
 Thus, we refer to the volatility factor
 and associated volatility process  as {\it rapidly mean-reverting}.   }

We make the following assumption regarding the volatility model:
 \begin{enumerate}
\item[(i)]
${\cal K} \in L^2(0,\infty)$ with $\int_0^\infty {\cal K}^2(u) du = 1$
and ${\cal K} \in L^1(0,\infty)$.
  \item[(ii)] There is a $d>1$ so that: 
\begin{equation}
\label{eq:defd}
|{\cal K} (t)| = O(t^{-d}) \quad \hbox{as}  \quad t \to \infty. 
\end{equation}
 \item[(iii)]   $F$ is smooth increasing  and bounded from below (away from zero) and from above. 
\end{enumerate}

Under  these conditions $Z_t^\eps$ has mean zero and variance $\sigma_{{\rm z}}^2$.
We assume that 
 $W^*_t$ is a Brownian motion that is correlated to the stochastic volatility through
\begin{equation}\label{eq:corr} 
W^*_t = \rho W_t + \sqrt{1-\rho^2} W'_t ,
\end{equation}
where the Brownian motion $W'_t$ is independent of $W_t$.
The function $F$ is assumed to be one-to-one, positive-valued,
smooth, bounded and with bounded derivatives.
Accordingly, the filtration ${\cal F}_t $ generated
by $(W_t',W_t)$ is also the one generated by $X_t$.
Indeed, it is equivalent to the one generated by $(W^*_t,W_t)$, or $(W^*_t,Z^\eps_t)$.
Since $F$ is one-to-one, it is equivalent to the one generated by $(W^*_t,\sigma_t^\eps)$.
Since $F$ is positive-valued,
it is equivalent to the one generated by $(W^*_t,(\sigma_t^\eps)^2)$, or $X_t$. 

The volatility may thus be a mixing process or a  
rough process with rapid decay of correlations at the origin.  In the latter  case
the volatility is neither a martingale nor a Markov process.
We discuss next some particular volatility models.

\subsection{Standard Ornstein-Uhlenbeck Model}
\label{subsec:standardOU}
Here we discuss the standard model where 
$Z_t^\eps$ is the scaled Ornstein Uhlenbeck (OU) process.
It has the form (\ref{eq:Zgen}-\ref{def:Keps})  with
${\cal K}(t) = \sqrt{2} \exp(-t) $.
 The OU process $Z_t^\eps$ is a centered  Gaussian process 
with covariance of the form
\begin{equation}
\label{eq:expresscovZ}
\EE [ Z^\eps_t Z^\eps_{t+s}  ] = \sigma^2_{{\rm z}} {\cal C}_Z \Big(\frac{s}{\eps}\Big) ,
\end{equation}
with $ {\cal C}_Z(s)  =  \exp(-|s|)$.
It solves a Langevin equation driven by standard Brownian motion.
It is a martingale and a Markov process, which allows for the use of stochastic calculus \cite{fouque00}.

\subsection{Rough Volatility Models}
\label{sec:rough}
We discuss here the model where
$Z_t^\eps$ is the scaled fractional Ornstein Uhlenbeck (fOU) process with Hurst exponent $H \in (0,1/2)$. 
This process is described in more detail in  Appendix \ref{app:fOU},
it has the form (\ref{eq:Zgen}-\ref{def:Keps})  with
\ba
\label{def:Keps0}
{\cal K}(t) &=& \frac{\sqrt{2 \sin(\pi H)}}{\Gamma(H+\frac{1}{2})} 
 \Big[ t^{H - \frac{1}{2}} - \int_0^t (t-s)^{H - \frac{1}{2}} e^{-s} ds \Big] .
\ea
 The fOU process $Z_t^\eps$ is a centered  Gaussian process 
with covariance of the form (\ref{eq:expresscovZ})
with ${\cal C}_Z(0)=1$,
 see Eq. (\ref{eq:sou2}).  
 Compared to the standard OU process addressed in the previous subsection,
we allow here for more general
 volatility factors to capture the situations discussed in a number of  recent empirical findings that the volatility
 process is rough corresponding to rapid decay of  ${\cal C}_Z$ at the origin  \cite{gatheral1}.
 We arrive at such a situation by assuming that the OU process is driven by
 a fractional Brownian motion with Hurst exponent $H \in (0,1/2)$ rather than a standard 
 Brownian motion \cite{cheridito03}.
 As described  in Appendix  \ref{app:fOU}  this gives a volatility factor that is {\it rough}.
 We have  specifically  
now  that the  covariance function ${\cal C}_Z$ is rough at zero in the sense:
\begin{equation}
\label{eq:OUa1}
 {\cal C}_Z(s) =  1 - 
 \frac{1}{  \Gamma(2H+1)}  s^{2H}
+ o\big( s^{2H}\big) 
, \quad \quad s \ll 1,
\end{equation}
while it is integrable and it decays as $s^{2H-2}$ at infinity:
\begin{equation}
\label{eq:OUa2}
 {\cal C}_Z(s) =   
 \frac{1}{  \Gamma(2H-1)}  s^{2H-2}
+ o\big( s^{2H-2}\big) 
, \quad \quad s \gg 1,
\end{equation}
see Figure \ref{fig_OUcorr}.
This behavior of the covariance function is inherited by the volatility process $\sigma_t^\eps$ 
itself, see Eqs.~(\ref{eq:corrY12}) and (\ref{eq:corrY12b}).
 \begin{figure}
\begin{center}
\begin{tabular}{c}
\includegraphics[width=8.4cm]{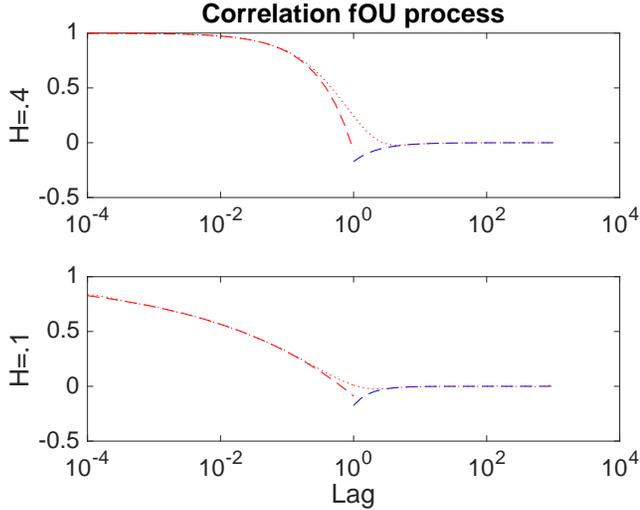}
\end{tabular}
\end{center}
\caption{     The dotted lines show the correlation functions  ${\cal C}_Z$  for
$H=.4$ (top plot) and $H=.1$ (bottom plot). Note that for large lags the correlation
function is slightly negative. 
Note also the rapid decay of the correlations at the origin, increasingly so
with smaller $H$. 
The dashed lines are the approximations in Eqs.
(\ref{eq:OUa1}) and (\ref{eq:OUa2}).   
    \label{fig_OUcorr}
}
\end{figure}
  For more details regarding this model we refer to \cite{sv3}. 

\section{Prices of European Options}
\label{sec:price}
We are interested in  computing the option price defined as the martingale
\begin{equation}\label{eq:p1}
M_t =\EE^\star\big[ h(X_T) \mid {\cal F}_t \big]  ,
\end{equation}
where $h$ is a smooth function, $0 \leq t \leq T$.
 {In fact weaker assumptions are possible for $h$,
as we only need to control the function $Q^{(0)}_t(x)$ defined below rather than $h$,
as is discussed in  \cite[Section 4]{sv1} where the extension to more general $h$ such as $h(x)=(x-K)^+$ is addressed.
The expectation in Eq. (\ref{eq:p1}) is computed with respect to the pricing measure $\PP^\star$.
 {Recall that we assume that the short term interest  rate $r=0$, moreover that the drift $\mu=0$ under the historical 
measure.  We make here one more assumption in that the market price of volatility risk is assumed to be zero 
so that  $\PP=\PP^\star$ and indeed the models for $X_t$   coincide under the pricing measure $\PP^\star$ and the historical measure $\PP$.}
We remark first that the case with non-zero  interest rate and drift 
(under the historical measure)  could  have been analyzed in the framework presented below,
however,  for simplicity of expression we do not include this generality here.   
We remark second that the case with a non-zero market price of risk gives slightly more 
involved option price correction formulas than those presented below, see \cite{fouque00,fouque11}.  
This distinction is relevant in the case that we want to use information from the observed prices and  the
associated  implied volatility skew  for calibration of the hedging strategy, see the discussion
in Section \ref{sec:est}.  
In this paper we compute the  hedging cost statistics under the historical measure. 
}
   
We introduce the standard Black-Scholes operator at zero interest rate and
constant volatility $\sigma$:
\begin{equation}
{\cal L}_{\rm BS} (\sigma) = \partial_t +\frac{1}{2} \sigma^2 x^2 \partial_x^2 .
\end{equation}
We exploit the fact that the price process is a
martingale to obtain an approximation, via constructing an explicit
function  $P(t,x)$ so that $P(T,x)=h(x)$ and so that $P(t,X_t)$ 
is a martingale up to first order corrective terms in $\eps$. 
Then, indeed $P(t,X_t)$ gives the approximation for $M_t$ up to first  order in $\eps$.
The leading order price is the price at the homogenized or constant 
parameters. 
The following proposition gives the first-order correction to the 
expression for  the martingale $M_t$ in the regime of $\eps$  small.

\begin{proposition}
\label{prop:main}
We have
\begin{equation}
\lim_{\eps \to 0}  \eps^{-1/2} \sup_{t\in [0,T]}   \EE\left[ | M_t  - P(t,X_t) |^2 \right]^{1/2}  = 0  ,
\end{equation} 
 where
\begin{equation}
\label{def:Qt}
P(t,x)  = Q_t^{(0)}(x) +\sqrt{\eps}   \rho Q_t^{(1)}(x)   ,
\end{equation}
$Q_t^{(0)}(x) $ is deterministic and given by the Black-Scholes formula with constant 
volatility~$\bar{\sigma}$,
\begin{equation}
\label{eq:bs0}
{\cal L}_{\rm BS} (\bar\sigma) Q_t^{(0)}(x) =0,   \quad \quad Q_T^{(0)}(x) = h(x),
\end{equation}
with
\begin{equation} 
\label{def:barsigma}
\bar{\sigma}^2 = \left< F^2\right> = \int_\RR F(  \sigma_{\rm z}  z )^2 p(z) dz,
\end{equation}
$p(z)$ the pdf of the standard normal distribution,
$Q_t^{(1)}(x) $ is the deterministic correction solving
\begin{equation}
\label{eq:bs1}
{\cal L}_{\rm BS} (\bar\sigma) Q_t^{(1)}(x) =
-\overline{D}  \big( x \partial_x  (x^2 \partial_x^2 ) \big) Q_t^{(0)} (x),   \quad \quad Q_T^{(1)}(x) = 0.
\end{equation}
 The deterministic correction is
 \begin{equation}
\label{def:Q1t}
Q_t^{(1)}(x) =(T-t)\overline{D}  \big( x \partial_x  (x^2 \partial_x^2 ) \big) Q_t^{(0)} (x)   ,
\end{equation}
where the coefficient $\overline{D}$ is defined by 
\begin{equation}
\overline{D} =   \sigma_{\rm z}  \int_0^\infty
\Big[\iint_{\RR^2} F(\sigma_{\rm z}  z )  
(FF')(\sigma_{\rm z}  z') p_{{\cal C}_{\cal K}(s,0)} (z,z') d z dz'\Big]
 {\cal K}(s) ds,
\label{def:DtT}
\end{equation}
with $p_C(z,z')$ the pdf of the bivariate normal distribution with mean zero and covariance matrix 
\begin{equation}
\label{eq:Cmat}
\begin{pmatrix}
1 & C\\ C & 1
\end{pmatrix}   ,
\end{equation}
and 
\ba
\label{Cdef}
{\cal C}_{\cal K}(s,s')  =  \int_0^{\infty}   {\cal K}(s+v)  {\cal K}(s'+v)  dv .
\ea
 \end{proposition}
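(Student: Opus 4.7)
The plan is to construct $P(t,x)$ as a perturbative ansatz engineered so that $P(t,X_t)$ is a martingale up to an $L^2$-remainder of size $o(\sqrt{\eps})$ uniformly on $[0,T]$. Since by construction $P(T,X_T) = Q_T^{(0)}(X_T) = h(X_T) = M_T$, taking conditional expectations of the remainder and using the $L^2$-contraction property of conditional expectation then yields the claimed sup-$L^2$ bound. This is the scheme developed for the fractional OU case in \cite{sv3}, which I would extend to the general Volterra volatility factor (\ref{eq:Zgen})--(\ref{def:Keps}).

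First, I would apply It\^o's formula to $Q_t^{(0)}(X_t) + \sqrt{\eps}\rho\, Q_t^{(1)}(X_t)$. Using the Black-Scholes PDE (\ref{eq:bs0}) at the effective volatility $\bar\sigma$, the $Q^{(0)}$-piece yields the martingale increment $\partial_x Q^{(0)}\, dX_t$ plus the drift $\frac{1}{2}\big((\sigma_t^\eps)^2 - \bar\sigma^2\big) X_t^2 \partial_x^2 Q_t^{(0)}\, dt$, while (\ref{eq:bs1}) gives the $\sqrt{\eps}\rho Q^{(1)}$-increment with drift $-\sqrt{\eps}\rho\, \overline{D}\, (x\partial_x(x^2\partial_x^2))Q_t^{(0)}\, dt$ plus a $\sqrt{\eps}$-multiplied fluctuation term. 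Integrating from $t$ to $T$, conditioning on ${\cal F}_t$, and using $Q_T^{(1)}=0$, the error $P(t,X_t) - M_t$ is reduced to the conditional expectation of
\ban
 \int_t^T \Big[ \tfrac{1}{2}\big( (\sigma_s^\eps)^2 - \bar\sigma^2\big) X_s^2 \partial_x^2 Q_s^{(0)}(X_s) - \sqrt{\eps}\rho\, \overline{D}\, \big(x\partial_x(x^2 \partial_x^2)\big) Q_s^{(0)}(X_s) \Big] ds
\ean
plus a $\sqrt{\eps}$-order $Q^{(1)}$-fluctuation integral which must itself be shown to be $o(\sqrt{\eps})$.

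Second and centrally, I would show that the leading-order behaviour of the first term in the bracket, after conditioning, equals exactly the second term, so that the cancellation identifies $\overline{D}$. Writing $\sigma_s^\eps = F(Z_s^\eps)$ with $Z_s^\eps = \sigma_{\rm z}\int_{-\infty}^s {\cal K}^\eps(s-u)\, dW_u$ and splitting $W^* = \rho W + \sqrt{1-\rho^2}W'$, only the $\rho W$-component drives the correlation between $X_s$ and $(\sigma_s^\eps)^2 - \bar\sigma^2$. I would apply Gaussian integration by parts (Stein/Malliavin calculus) with respect to $W$ on the conditioned integrand: this brings down one factor of cross-covariance between an increment $dW_u$ (driving $X$) and $Z_s^\eps$, producing the kernel ${\cal K}^\eps$. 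After rescaling $u = s - \eps v$ and using $\int_0^\infty {\cal K}^2 = 1$ one recovers the joint-Gaussian covariance ${\cal C}_{\cal K}(s,0)$ of (\ref{Cdef}), the bivariate density with covariance (\ref{eq:Cmat}), and precisely the double integral appearing in (\ref{def:DtT}).

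The principal obstacle is the non-Markovian character of $Z_t^\eps$: unlike the standard OU case, a Poisson-equation / cell-problem argument is unavailable, and one must instead rely on quantitative decay-of-correlation estimates for $Z_s^\eps$ inherited from assumptions (i)--(ii) on ${\cal K}$ --- in particular the $L^1 \cap L^2$ integrability and the bound $|{\cal K}(t)| = O(t^{-d})$ with $d>1$ --- together with the boundedness of $F$ and its derivatives. These estimates are what control the residual from the Stein expansion and from the auxiliary $\sqrt{\eps}\rho Q^{(1)}$ fluctuation integral as $o(\sqrt{\eps})$. The needed $x$-derivative bounds for $Q^{(0)}$ and $Q^{(1)}$ on $[0,T]$ follow from smoothness of $h$ in the stated proposition, and the extension to payoffs such as $(x-K)^+$ is handled by the regularization procedure pointed out in the paper following \cite[Section 4]{sv1}.
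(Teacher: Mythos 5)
Your proposal is correct in outline and follows the same overall scheme as the paper's: a perturbative ansatz $P=Q^{(0)}+\sqrt{\eps}\rho Q^{(1)}$ engineered so that $P(t,X_t)$ is a martingale up to an $o(\sqrt{\eps})$ remainder in $L^2$, with the constant $\overline{D}$ identified from the leverage covariance and the remainder controlled by the decay-of-correlation estimates that follow from ${\cal K}\in L^1\cap L^2$ and $|{\cal K}(t)|=O(t^{-d})$, $d>1$. Note, however, that the paper does not actually prove Proposition \ref{prop:main}: it cites the Markov case to Fouque et al.\ and the rough case to \cite{sv3}, asserting the general Volterra case is a straightforward modification. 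The machinery the paper does display (the decomposition (\ref{eq:1})--(\ref{eq:2}) in the proof of Lemma \ref{thm1}, together with Appendix A) realizes your key step differently: instead of a direct Gaussian integration by parts on the conditional expectation of $\tfrac12((\sigma_s^\eps)^2-\bar\sigma^2)\,x^2\partial_x^2 Q_s^{(0)}(X_s)$, it introduces the explicit corrector $\phi_t^\eps\,(x^2\partial_x^2)Q_t^{(0)}(X_t)$ with $\phi_t^\eps=\EE[\tfrac12\int_t^T((\sigma_s^\eps)^2-\bar\sigma^2)ds\mid{\cal F}_t]$, and the $\overline{D}$-term emerges as the quadratic covariation of the martingale $\psi^\eps$ with $W$, via $\vartheta_t^\eps=\sigma_{\rm z}\int_t^T\EE[FF'(Z_s^\eps)\mid{\cal F}_t]{\cal K}^\eps(s-t)\,ds$ and the averaging $\eps^{-1/2}\sigma_t^\eps\vartheta_t^\eps\to\overline{D}$ (Lemma \ref{lem:K2}). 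This is a Clark--Ocone--type martingale representation, so it is the same covariance computation as your Stein/Malliavin step, merely organized so that the boundary term $\phi_t^\eps(x^2\partial_x^2)Q^{(0)}$ (of size $O(\eps^{(d-1/2)\wedge 1})=o(\sqrt{\eps})$, Lemma \ref{lem:A2}) and the residual martingales are isolated explicitly; the corrector route buys a cleaner bookkeeping of which terms survive conditioning uniformly in $t$, while your direct route is more transparent about where the bivariate density $p_{{\cal C}_{\cal K}(s,0)}$ and the kernel ${\cal K}(s)$ in (\ref{def:DtT}) come from. Both yield the same $\overline{D}$, and your proposal correctly flags the two points that require care: the non-Markovian replacement of the Poisson-equation argument by correlation-decay bounds, and the separate $o(\sqrt{\eps})$ estimate for the $Q^{(1)}$-fluctuation integral.
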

%
%
%

 {The mixing (Markov) case is proved in \cite{fouque00,fouque11} 
and the rough case is derived in \cite{sv3}. More precisely, the above statement concerns 
a  generalization of the  volatility model addressed in \cite{sv3} 
and can be derived via a straightforward modification of the proof presented there.}
Thus,  we see that the effect of the
volatility fluctuations gives a price modification that is of the order of $\eps^{1/2}$ 
and which is determined by the effective parameter $\overline{D}$ only.  
 The main result of this paper  is a precise statistical 
 characterization of  hedging  cost  in the context of fast 
 mean-reverting stochastic volatility. 
 Our novel analysis  uses  the analytic framework  set forth in  \cite{sv3}.
As for the case of  option prices the hedging cost results are  for the general  volatility model
 (\ref{def:stochmodel}).   Therefore they apply in particular  in a uniform
way to the cases of Markov and rough volatility. 
 
 {We remark  that the rough volatility case $H<1/2$ and the mixing case are qualitatively similar.
As a matter of fact, the parameter $\overline{D}$ for the standard OU process of Subsection \ref{subsec:standardOU}
is the limit as $H \nearrow 1/2$ of the  parameter $\overline{D}$ of the fOU process of Subsection \ref{sec:rough} 
(this can be shown  by using the dominated convergence theorem and the convergence of (\ref{def:Keps0}) to $\sqrt{2}\exp(-t)$).
However,  the ``long-memory''  case     
  addressed in \cite{sv2}, corresponding to $H>1/2$,  is different.
   In this case
  the volatility ``history'' plays a crucial role  and gives  a qualitatively
  different picture from the point of view of pricing and hedging.
  This is also the case for small volatility fluctuations as presented 
  in \cite{sv1} which in fact is quite similar  in its analysis to 
  a slow volatility factor.   In the case  of a slow volatility factor, slow relative to the
  maturity horizon,  the volatility will in fact appear as non-stationary on the time scale of
  the maturity.  These  other cases will be discussed  elsewhere.}

\section{Hedging Cost Accumulation}  
\label{sec:hedging}  

In the following sections we derive  the results for  the costs associated
with the hedging schemes introduced above in the context of European 
options.  We summarize 
in the next proposition these results. 
We introduced the hedging schemes (H),  (HW),  (BS) in Section 
\ref{sec:sum}.  In Section \ref{sec:mH} we introduce the modified  scheme 
($\H$) where the delta is chosen to be $\delta^{\rm H}$ 
as in the  (H) scheme, however,
the value of the portfolio is chosen to be $P(t,x)$ rather than $Q^{(0)}_t(x)$ as in the 
(H) scheme. 
The following proposition follows directly from Propositions 
\ref{prop:main2}, \ref{prop:main2b2}, \ref{prop:c}, 
\ref{prop:main2b2b} and Section \ref{sec:mH}. 
It gives the leading-order 
expressions of the expectations and the variances of the hedging costs 
(the leading order is $\sqrt{\eps}$ for the expectation and $\eps$ for the variance).

\begin{proposition}
\label{prop:sum}
If we write the hedging cost in the form 
 \ba
 \label{eq:costN}
  E_t^{\rm C}=   P(0,X_0) + Y_t^{\rm C} , \quad \mbox{ for } {\rm C=H, HW,  BS, \H},
\ea
then we have
\ba
\label{eq:meanYH}
&&
\lim_{\eps \to 0}\EE\left[\left(\eps^{-1/2}  \EE\big[   Y_t^{\rm H}  \mid {\cal F}_0 \big]   -  \frac{(t-T)}{T}    \frac{\rho \overline{D}}{\bar\sigma^2}   g (X_0,T)
\right)^2\right]^{1/2}
= 0 ,
   \\
&&\lim_{\eps \to 0}\EE\left[\left( \eps^{-1/2}  \EE\big[   Y_t^{\rm C}  \mid {\cal F}_0 \big] \right)^2\right]^{1/2}    =  0 , \quad   \mbox{ for }    {\rm C=HW,  BS, \H}   ,
\ea
and 
 \ba
&&  \lim_{\eps \to 0}\EE\left[\left|  \eps^{-1} {\rm Var}\big(   Y_t^{\rm HW}   \mid {\cal F}_0\big) -  \frac{\overline{\Gamma}^2}{\bar\sigma^2}    v(X_0,t,T)
   \right|\right] = 0
       ,\\
&& \lim_{\eps \to 0}\EE\left[\left| 
\eps^{-1} {\rm Var}\big(   Y_t^{\rm C}  \mid {\cal F}_0 \big)  -  \frac{\overline{\Gamma}^2}{\bar\sigma^2}    v(X_0,t,T)- 
  \frac{\rho^2 \overline{D}^2}{\bar\sigma^4}  w^{\rm C}(X_0,t,T)    \right|\right] = 0   ,
 \ea
 for  $ {\rm C=H,  BS, \H} $, 
 where $\overline{D}$ and $\overline{\Gamma}$ are the parameters given by (\ref{def:DtT}) and (\ref{def:barGamma})
 and $g,v,w^{\rm C}$ are cost mean and variance functions that depend on the
 payoff function $h$.
 \end{proposition}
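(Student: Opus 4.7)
The statement consolidates the variance-expansion results established scheme by scheme in the four cited propositions and in Section \ref{sec:mH}. Rather than reproducing each case, the plan is to set up a common framework and indicate how specialisation to each scheme yields the stated limits.

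First, I would express each hedging cost $Y_T^{\rm C}$ via It\^o's formula applied to the scheme's reference value function $R$---namely $R=P$ for (HW), (BS), and ($\H$), and $R=Q^{(0)}$ for (H). Using the payoff-matching condition $V_T=h(X_T)$ together with the PDEs (\ref{eq:bs0}) and (\ref{eq:bs1}), this produces the identity
\ba
Y_T^{\rm C} &=& C_0 + \int_0^T \tfrac{1}{2}X_s^2\big[(\sigma_s^\eps)^2-\bar\sigma^2\big]\partial_x^2 R(s,X_s)\,ds \nn\\
&&{}-\sqrt{\eps}\rho\overline{D}\int_0^T \big(x\partial_x(x^2\partial_x^2)Q^{(0)}\big)(s,X_s)\,ds \nn\\
&&{}+ \int_0^T \big[\partial_x R-\delta^{\rm C}\big](s,X_s)\,dX_s + o(\sqrt\eps),
\ea
where the initiation offset $C_0$ vanishes for the three schemes whose initial capital is $P(0,X_0)$ and equals $-\sqrt{\eps}\rho Q^{(1)}_0(X_0)$ for (H); the mismatch integral vanishes precisely for (HW).

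Next, I would invoke the asymptotic representation lemma that underpins \cite{sv3}: for every smooth bounded $G$ with stationary mean zero and every regular slow integrand $\psi$,
\ba
\int_0^T G(Z_s^\eps)\psi(s,X_s)\,ds = \sqrt{\eps}\bigl[c_G\rho\, I_T^{\rm lev}(\psi) + \gamma_G\, I_T^{\rm mart}(\psi)\bigr] + o(\sqrt{\eps}),
\ea
where $I_T^{\rm lev}$ is a pathwise functional driven by $X$ and $I_T^{\rm mart}=\int_0^T \psi(s,X_s)\,dB_s$ with $B$ a Brownian motion independent of the limiting $X$-filtration; the constants $c_G,\gamma_G$ are explicit through the Volterra bracket ${\cal C}_{\cal K}(s,s')$ of (\ref{Cdef}). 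Specialising to $G=\tfrac{1}{2}(F^2-\bar\sigma^2)$ one identifies $c_G=\overline{D}$, in agreement with (\ref{def:DtT}), and $\gamma_G=\tfrac{1}{2}\overline{\Gamma}$. Substituting into the expression above, the $\overline{\Gamma}$-martingale contribution is universal across schemes and yields the $\overline{\Gamma}^2\bar\sigma^{-2}v$ part of the variance. The leverage drift combines scheme-by-scheme with the explicit $-\sqrt{\eps}\rho\overline{D}$ corrector term and with the mismatch integral: for (HW) these three terms cancel identically, producing the pure $\overline{\Gamma}^2$ variance; for the other schemes the uncancelled $\rho\overline{D}$ contribution generates the additional $\rho^2\overline{D}^2\bar\sigma^{-4}w^{\rm C}$ variance term, with the kernel $w^{\rm C}$ recording the scheme-specific mismatch. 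The non-trivial leading-order mean for (H) arises as the sum of the residual $C_0$ and the expectation of the mismatch integral, combining to yield the $(t-T)T^{-1}\rho\overline{D}\bar\sigma^{-2}g(X_0,T)$ contribution.

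The principal technical obstacle is the non-Markovian nature of $Z^\eps$ in the rough regime $H<1/2$, which precludes a Poisson-equation inversion of the generator; one must instead use the Malliavin-type Clark--Ocone representation adapted to Volterra processes developed in \cite{sv3}. Given that representation, the $L^1$ and $L^2$ convergences follow by Cauchy--Schwarz and dominated convergence using the decay $|{\cal K}(t)|=O(t^{-d})$ with $d>1$ and the boundedness of $F$ and its derivatives. The delicate boundary-layer issue is uniform control as $s\to T$ where the Greeks of $Q^{(0)}$ may blow up for non-smooth payoffs; this is handled via the payoff-regularisation argument referenced in \cite[Section 4]{sv1}.
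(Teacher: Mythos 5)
Your overall architecture matches the paper's: Proposition \ref{prop:sum} is indeed proved there by aggregating scheme-by-scheme results (Lemmas \ref{thm1}, \ref{lem:Ec}, \ref{thm1b} and Propositions \ref{prop:main2}--\ref{prop:main2b2b}), each obtained by an It\^o decomposition against the relevant value function, the PDEs (\ref{eq:bs0})--(\ref{eq:bs1}), and an averaging step that turns $\int \frac12((\sigma_s^\eps)^2-\bar\sigma^2)(x^2\partial_x^2)Q^{(0)}_s(X_s)\,ds$ into a $\sqrt{\eps}\rho\overline{D}$ drift plus the Vega martingale $N^{(1)}$. However, your averaging lemma contains a structural error that breaks the computation of $w^{\rm C}$ for ${\rm C}={\rm BS},\H$ and of the cross term in $w^{\rm H}$: you assert that the martingale component $I_T^{\rm mart}$ is driven by a Brownian motion \emph{independent} of the limiting $X$-filtration. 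In the paper the martingale part is $N^{(1)}_t=\int_0^t (x^2\partial_x^2)Q^{(0)}_s(X_s)\,d\psi^\eps_s$ with $d\langle \psi^\eps,W\rangle_t=\vartheta^\eps_t\,dt$ and $\eps^{-1/2}\sigma^\eps_t\vartheta^\eps_t\to\overline{D}$ (Lemmas \ref{lem:1b} and \ref{lem:K2}); consequently $N^{(1)}$ has an order-$\eps\rho\overline{D}$ covariance with every additional $dW^*$-integral appearing in a hedging cost (the $N^{(2)}$, $\tilde{N}$, $\check{N}$ terms). This correlation is not a technicality: it is precisely what yields $w^{\rm BS}=-v\le 0$ and the variance \emph{reduction} of (BS) below (HW). Under your independence assumption the extra martingale contributed by any scheme other than (HW) would be orthogonal to $N^{(1)}$, forcing ${\rm Var}(Y_t^{\rm C}\mid{\cal F}_0)\ge {\rm Var}(Y_t^{\rm HW}\mid{\cal F}_0)$ for all ${\rm C}$, which contradicts both the statement being proved and the paper's main optimality result.

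Two smaller inaccuracies in the same vein: (i) your generic It\^o identity includes the $-\sqrt{\eps}\rho\overline{D}\int(x\partial_x(x^2\partial_x^2))Q^{(0)}\,ds$ corrector for all schemes, but for (H) the reference function is $Q^{(0)}$, not $P$, so no such term arises; the nonzero mean $\frac{t-T}{T}\rho\overline{D}\bar\sigma^{-2}g$ comes from the \emph{uncancelled} averaged drift $+\sqrt{\eps}\rho\overline{D}\int(x\partial_x(x^2\partial_x^2))Q^{(0)}_s(X_s)\,ds$ (whose conditional expectation is linear in $t$ by the argument in Proposition \ref{prop:main2}) combined with the initiation offset, not from the expectation of the $dX$-mismatch integral, which is a zero-mean martingale. (ii) The mismatch integral $\int(\partial_x R-\delta^{\rm C})\,dX$ also vanishes for (H) (with $R=Q^{(0)}$), not only for (HW). To repair the proposal you should replace the independent-Brownian averaging lemma by the paper's representation of the fast component as the martingale $\psi^\eps$ with the explicit brackets (\ref{def:varthetaepsb}), and compute the cross-covariances via Lemma \ref{lem:K2}.
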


The explicit forms  of $g,v,w^{\rm C}$ are given in Section \ref{sec:call}, Proposition \ref{prop:sum2},
in the case of European call options $h(x)=(x-K)^+$.  

{\it Remark.}
In the following we show that,  up to terms  of order $o(\sqrt{\eps})$:
\begin{align*}
 {E}^{\rm HW}_t - P(0,X_0) =& \,  N^{(1)}_t    ,  \\
 {E}^{\rm BS}_t - P(0,X_0) =&\, N^{(1)}_t  + \sqrt{\eps} \rho N^{(2)}_t    ,  \\
  {E}^{\rm \H}_t - P(0,X_0) =&\, N^{(1)}_t  + \sqrt{\eps} \rho \tilde{N}^{(2)}_t   ,
\end{align*}
where  ${N}^{(1)}$, resp. ${N}^{(2)}$,   $\tilde{N}^{(2)}$, are the martingales 
defined in Eq.~(\ref{eq:Ndef1}), resp. Eq.~(\ref{def:M}), Eq.~(\ref{eq:Ntdef}).
It is in fact the negative correlation between $N^{(1)}$ and $N^{(2)}$
that makes the (BS)  scheme  superior, see Section \ref{sec:opt}. 
In the case of the scheme (H) the hedging cost is characterized by
$$
 {E}_t^{\rm H} - Q_0^{(0)}(X_0)  =
             N^{(1)}_t +      
          \sqrt{\eps} \rho  \overline{D} \int_0^t 
           \big( x \partial_x  (x^2 \partial_x^2 ) \big) Q_s^{(0)} (X_s) ds ,
$$
with 
\ban
    \EE\left[ 
           \sqrt{\eps} \rho  \overline{D} \int_0^t 
           \big( x \partial_x  (x^2 \partial_x^2 ) \big) Q_s^{(0)} (X_s) ds \mid {\cal F}_0
      \right]
          =   
          \left( \frac{t}{T} \right)  
             \left(  P(0,X_0)   -   Q_0^{(0)} (X_0)  \right) .   
\ean
Here and below $\big( x \partial_x  (x^2 \partial_x^2 ) \big) Q_s^{(0)} (X_s)$ stands for $\big( x \partial_x  (x^2 \partial_x^2 ) \big) Q_s^{(0)} (x)$
evaluated at $x=X_s$. We next derive these results.

\subsection{Hedging Cost Process with (H) Hedging Strategy}
\label{sec:delta1}

Consider the (H) hedging scheme.
We assume that the {\it effective volatility $\bar \sigma$  is known}  and 
choose here the number of underlyings in the replicating portfolio  as the ``$\delta$'' of the Black-Scholes 
price evaluated at the {\it effective volatility} and the current price for the underlying.
Thus,  we consider here  the situation with ``homogenized''  or ``historical'' delta:  
\begin{equation}
\label{eq:D1}
a_t^{\rm H}= \delta^{\rm H} (t,X_t) ,\quad \quad  \delta^{\rm H}(t,x) = \partial_x   Q_t^{(0)}(x) ,
\end{equation}
as in Eq. (\ref{eq:dH}).
Moreover, in this section we choose the value  of the portfolio $V_t^{\rm H}$  to replicate
the Black-Scholes price $Q^{(0)}_t(X_t)$ evaluated at the  effective volatility:
\ba\label{eq:V0}
V^{\rm H}_t=Q_t^{(0)}(X_t) , \quad  0 \leq t \leq T ,
\ea
  and  
$b_t^{\rm H} = Q_t^{(0)}(X_t)  - a_t^{\rm H} X_t $.
As mentioned this hedging scheme can then be implemented knowing only
$\bar\sigma$. As we will show though in order to characterize the hedging
cost mean and variance we need to know also the effective market parameters $(D,\Gamma)$.
The portfolio replicates the payoff at maturity
$V_T^{\rm H} = Q^{(0)}_T(X_T) = h(X_T)$.
The cost function is:
\begin{equation}\label{eq:bc}
  E^{\rm H}_t  = V_t^{\rm H} - \int_0^t  a_s^{\rm H} dX_s ,
\end{equation}
with in particular
$  E^{\rm H}_0  =  Q_0^{(0)}(X_0) $.
We aim  to understand how this cost can be characterized.  

Using the fact that $Q^{(0)}$ solves the Black-Scholes equation we find
\ba\nonumber
     dE^{\rm H}_t &=& dV_t^{\rm H} -a_t^{\rm H} dX_t 
     = \Big( \partial_t +\frac{1}{2} (\sigma_t^\eps)^2 \big(x^2 \partial_x^2\big) \Big)
        Q^{(0)}_t(X_t)  dt + \partial_x Q^{(0)}_t(X_t)  dX_t - a_t^{\rm H} dX_t 
        \\  & = & \frac{1}{2} \left( (\sigma_t^\eps)^2  - \bar\sigma^2 \right) 
        \big(x^2 \partial_x^2\big) Q^{(0)}_t(X_t)  dt  . \label{eq:Edef}
\ea
We remark that we can write
\ban
dE^{\rm H}_t     =
        \frac{1}{2} \left( (\sigma_t^\eps)^2  - \bar\sigma^2 \right)  
        \frac{\nu_t(X_t) }{\bar\sigma (T-t)}  dt ,
\ean
where we introduced the  `` Vega'':
\begin{equation}
\nu_t(x) = \partial_{\bar\sigma} Q^{(0)}_t(x)  =  \bar\sigma (T-t) \big(x^2 \partial^2_x \big) Q^{(0)}_t(x) .
\end{equation}
Note that in the special case of constant volatility 
we have  $\sigma_t^\eps \equiv \bar\sigma$ and thus $dE^{\rm H}_t = 0$, which means that the cost 
is deterministic and given by the Black-Scholes price:
$$
\EE \big[ E^{\rm H}_t \mid {\cal F}_0\big] = Q_0^{(0)}(X_0) ,\quad 
{\rm Var}\big( E^{\rm H}_t \mid {\cal F}_0\big) = 0, \quad 0 \leq t \leq T.
$$
In the rapid  stochastic volatility case (\ref{def:stochmodel}), we can identify the leading-order terms of the cost.
Two equivalent expressions can be determined as shown in Lemma \ref{thm1}.
They will be useful to compute the mean and variance of the cost in the next propositions.
 
\begin{lemma}
\label{thm1}%
The hedging cost satisfies
\begin{equation}
   \label{def:hE1a}
\lim_{\eps \to 0}  \eps^{-1/2} \sup_{t\in [0,T]}   \EE\left[ | E^{\rm H}_t  - \hat{E}^{\rm H}_t   |^2   \right]^{1/2}  = 0  ,
\end{equation} 
 where
\begin{equation} 
\hat{E}_t^{\rm H} 
   \label{def:hE1}
=   Q_0^{(0)}(X_0)  
+ 
\eps^{1/2} \rho  \big(    Q_0^{(1)}(X_0)  
   -   Q_t^{(1)}(X_t)         \big)
   + N^{(1)}_t
+   
 \eps^{1/2}    \rho N^{(2)}_t       ,
\end{equation}
$N^{(1)}_t$ and $N^{(2)}_t $ are the martingales starting at zero
\begin{align}
\label{eq:Ndef1}
N^{(1)}_t &= \int_0^t \big( x^2 \partial_x^2 \big) Q_s^{(0)}(X_s) d\psi_s^\eps  ,\\
\label{def:M}
N^{(2)}_t   &= \int_0^t   \big( x \partial_x\big) Q_s^{(1)}  (X_s) \sigma^\eps_s dW_s^*   ,
\end{align}
with 
\begin{equation}
\label{def:Kt}
\psi_t^\eps = 
\EE \Big[  \frac{1}{2} \int_0^T \big( (\sigma_s^\eps)^2 -\overline{\sigma}^2 \big) ds \mid {\cal F}_t\Big] .
\end{equation}
We also have 
\begin{equation}
   \label{def:hE2a}
\lim_{\eps \to 0}  \eps^{-1/2} \sup_{t\in [0,T]}   \EE\left[ | E^{\rm H}_t  - \check{E}_t^{\rm H}  |^2   \right]^{1/2}  = 0  ,
\end{equation} 
 where
\ba
\check{E}_t^{\rm H}&=&
    Q_0^{(0)}(X_0)     +  
          \eps^{1/2} \rho  \overline{D} \int_0^t  \big( x \partial_x  (x^2 \partial_x^2 ) \big) Q_s^{(0)} (X_s) ds
            + N^{(1)}_t \label{def:hE2} 
     . 
  \ea
 \end{lemma}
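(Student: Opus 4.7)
The proof starts from the differential representation $dE_t^{\rm H} = \tfrac{1}{2}((\sigma_t^\eps)^2 - \bar\sigma^2) g(t, X_t) dt$ with $g(s,x) = (x^2\partial_x^2)Q_s^{(0)}(x)$ already established in (\ref{eq:Edef}), and isolates the martingale component of the rapidly oscillating drift via the process $\psi_t^\eps$ defined in (\ref{def:Kt}). By construction $\psi^\eps$ is an ${\cal F}_t$-martingale with $\psi_T^\eps = \tfrac{1}{2}\int_0^T ((\sigma_s^\eps)^2-\bar\sigma^2)ds$, and $\psi_0^\eps = 0$ by stationarity together with the definition (\ref{def:barsigma}) of $\bar\sigma$. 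The decomposition
\begin{equation*}
\tfrac{1}{2}((\sigma_s^\eps)^2 - \bar\sigma^2) ds = d\psi_s^\eps - dR_s^\eps, \qquad R_s^\eps := \EE\Big[\tfrac{1}{2}\int_s^T((\sigma_u^\eps)^2 - \bar\sigma^2)du \,\Big|\, {\cal F}_s\Big],
\end{equation*}
gives $E_t^{\rm H} - Q_0^{(0)}(X_0) = N_t^{(1)} - \int_0^t g(s, X_s) dR_s^\eps$, which already contains the $N^{(1)}_t$ contribution appearing in both (\ref{def:hE1}) and (\ref{def:hE2}).

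The main work is to evaluate $\int_0^t g(s, X_s) dR_s^\eps$ at order $\sqrt\eps$. Integration by parts yields
\begin{equation*}
\int_0^t g(s, X_s) dR_s^\eps = g(t, X_t) R_t^\eps - \int_0^t R_s^\eps \, dg(s, X_s) - \big[g(\cdot, X_\cdot), R^\eps\big]_t,
\end{equation*}
using $R_0^\eps = 0$. Mixing estimates on the fast factor $Z^\eps$, valid in both the Markov and the rough regimes, give $R_s^\eps = o(\sqrt\eps)$ in $L^2$ uniformly in $s$, so the boundary and drift-against-$R^\eps$ terms are $o(\sqrt\eps)$. For the quadratic covariation one identifies, via the Clark--Ocone formula, the integrand in the martingale representation $d\psi_s^\eps = \Xi_s^\eps dW_s$ as
\begin{equation*}
\Xi_s^\eps = \sigma_{\rm z} \EE\Big[\int_s^T F(Z_u^\eps) F'(Z_u^\eps)\, {\cal K}^\eps(u-s)\, du \,\Big|\, {\cal F}_s\Big],
\end{equation*}
and uses $W^*_s = \rho W_s + \sqrt{1-\rho^2} W'_s$ to obtain $d[g(\cdot, X_\cdot), R^\eps]_s = \rho (x\partial_x g)(s, X_s)\, \sigma_s^\eps \Xi_s^\eps\, ds$. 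Rescaling $u = s + \eps v$ extracts a factor $\sqrt\eps$, and an ergodic-averaging argument over the fast stationary Gaussian against the bivariate density with covariance (\ref{Cdef}) identifies the leading-order value of $\sigma_s^\eps \Xi_s^\eps$ with $\sqrt\eps\, \overline{D}$, with $\overline{D}$ as in (\ref{def:DtT}). This produces the $\check E_t^{\rm H}$ representation (\ref{def:hE2}) uniformly in $t\in[0,T]$.

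To pass from $\check E_t^{\rm H}$ to $\hat E_t^{\rm H}$, I apply It\^o's formula to $Q_t^{(1)}(X_t)$ and use the PDE (\ref{eq:bs1}) to eliminate $\partial_t Q^{(1)}$; after simplification
\begin{equation*}
Q_t^{(1)}(X_t) - Q_0^{(1)}(X_0) = -\overline{D}\int_0^t (x\partial_x(x^2\partial_x^2))Q_s^{(0)}(X_s) ds + N_t^{(2)} + \int_0^t \tfrac{1}{2}((\sigma_s^\eps)^2 - \bar\sigma^2)(x^2\partial_x^2) Q_s^{(1)}(X_s) ds.
\end{equation*}
The last integral is of CLT size $O(\sqrt\eps)$ in $L^2$, so after multiplication by $\sqrt\eps \rho$ it becomes $O(\eps)$, and rearranging converts $\check E_t^{\rm H}$ into $\hat E_t^{\rm H}$ up to $o(\sqrt\eps)$. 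The main obstacle is the ergodic averaging in the rough case $H<1/2$: there $Z^\eps$ is a non-Markovian Volterra process, $\Xi_s^\eps$ is a functional of the whole past of $W$, and standard Poisson-equation techniques are unavailable. Uniform-in-$t$ $L^2$ control of the error requires the kernel-based decorrelation estimates developed in \cite{sv3}; the Markov case reduces to the classical Fouque--Papanicolaou--Sircar cell-problem argument.
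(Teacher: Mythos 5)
Your proposal is correct and follows essentially the same route as the paper's proof: the decomposition of $\tfrac12((\sigma_s^\eps)^2-\bar\sigma^2)\,ds$ into the martingale increment $d\psi_s^\eps$ minus $d\phi_s^\eps$ (your $R_s^\eps$ is the paper's $\phi_s^\eps$), the identification of the bracket $d\langle \psi^\eps,W\rangle_t=\vartheta^\eps_t\,dt$ with $\vartheta^\eps$ exactly your $\Xi^\eps$ (Lemma \ref{lem:1b}), the averaging $\sigma^\eps_s\vartheta^\eps_s\to\sqrt{\eps}\,\overline{D}$ (Lemma \ref{lem:K2}), the $o(\sqrt{\eps})$ bound on $\phi^\eps$ (Lemma \ref{lem:A2}), and the It\^o step on $Q^{(1)}_t(X_t)$ to convert $\check{E}^{\rm H}$ into $\hat{E}^{\rm H}$ are all the paper's ingredients; your explicit integration by parts on $\int_0^t g\,d\phi^\eps_s$ is just a self-contained rederivation of what the paper imports as Eqs.~(31) and (36) of \cite{sv3}. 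One small correction: $\psi^\eps_0=\phi^\eps_0=\EE[\tfrac12\int_0^T((\sigma_s^\eps)^2-\bar\sigma^2)ds\mid{\cal F}_0]$ is \emph{not} identically zero (it is a zero-mean ${\cal F}_0$-measurable random variable, since the volatility factor has an infinite past), so the boundary term $g(0,X_0)R^\eps_0$ does not vanish exactly; it is, however, $o(\sqrt{\eps})$ in $L^2$ by the same estimate you invoke for $\sup_s R^\eps_s$, so the argument goes through unchanged.
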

Note that   the difference in Eq (\ref{def:hE1}) can be interpreted as the cost
of trading  the correction over the interval $(0,t)$ and $N^{(2)}$ is
(minus)  the martingale part of   this cost which gives Eq.~(\ref{def:hE2})
in view of the problems solved by $Q^{(0)}$ and  $Q^{(1)}$ as stated in Proposition
\ref{prop:main}.
Moreover, we can write from (\ref{def:Q1t}):
\ban
   \lim_{\Delta t \downarrow 0} 
    \frac{ \EE\left[  \check{E}_{t+\Delta t}^{\rm H} - \check{E}_t^{\rm H} \mid {\cal F}_t  \right]   }
       { \Delta t }  
    =   \frac{  \eps^{1/2} \rho Q_t^{(1)} (X_t)}{T-t} ,
\ean
so that the current  ``coherent cost flux'' corresponds to the accumulation of the cost of the correction
over the interval remaining until maturity.

\begin{proof}
\label{proof1}%
Let $\phi_t^\eps$ be defined as the expected accumulated square volatility deviation in between
the present and maturity:  
\begin{equation}
\label{def:phit}
\phi_t^\eps= \EE\Big[ \frac{1}{2} \int_t^T \big( (\sigma_s^\eps)^2 -\overline{\sigma}^2 \big) ds \mid {\cal F}_t\Big] .
\end{equation}
Then we have
$$
\phi_t^\eps = \psi_t^\eps - \frac{1}{2} \int_0^t \big( (\sigma_s^\eps)^2 -\overline{\sigma}^2 \big) ds ,
$$
where the martingale $\psi_t^\eps$ is defined by (\ref{def:Kt}).
$(\psi_t^\eps)_{t\in [0,T]}$ is a square-integrable martingale that satisfies the following properties:
\begin{itemize}
\item
The quadratic covariation of $\psi^\eps$ and $W$ is
\begin{equation}
\label{def:varthetaeps}
d \left< \psi^\eps, W\right>_t =  \vartheta^\eps_{t} dt ,
\quad \quad \vartheta^\eps_{t} = \sigma_{\rm z}
 \int_t^T \EE \big[ FF'(Z_s^\eps) \mid {\cal F}_t \big]{\cal K}^\eps(s-t) ds   ,
\end{equation}
with ${\cal K}^\eps$ of the form~(\ref{def:Keps}). 
\item
   There  exists a constant $K_T$ such that we have almost surely
\begin{equation}
\label{eq:estimvarthetaeps}
\sup_{t\in [0,T]} \big|   \vartheta^\eps_{t}  \big| \leq K_T \eps^{1/2} .
\end{equation}
\end{itemize}
The first part  was proved in \cite[Lemma B.1]{sv2}.
The second part follows from
the fact that ${\cal K}^\eps(t)={\cal K}(t/\eps)/\sqrt{\eps}$, ${\cal K} \in L^1(0,\infty)$.

We define the martingales starting from zero at time zero:
\ba\label{eq:Ndef0}
dN_t^{(0)} &= & ( x \partial_x) Q_t^{(0)} (X_t) \sigma_t^\eps dW_t^* , \\ 
dN^{(3)}_t &=& 
\big( x\partial_x (  x^2 \partial_x^2 )\big) Q_t^{(0)}(X_t) \sigma_t^\eps \phi_t^\eps dW_t^*
   .
\ea
Then Eqs. 
(31) and (36)
in \cite{sv3} read:
\ba\label{eq:1}
&& \frac{1}{2}\big( (\sigma_t^\eps)^2-\overline{\sigma}^2\big) \big( x^2 \partial_x^2 \big) Q_t^{(0)}(X_t) dt
 = 
dQ_t^{(0)}(X_t)
 -  dN^{(0)}_t ,  \\  && 
 \nonumber
 dQ_t^{(0)}(X_t) = 
- d \big[   \phi_t^\eps \big( x^2 \partial_x^2 \big) Q_t^{(0)}(X_t)
+\eps^{1/2} \rho  Q_t^{(1)}(X_t)  
\big]
\\
\nonumber
 &&
\hspace*{0.8in} +
   \frac{1}{2} 
\big( x^2\partial_x^2(  x^2 \partial_x^2)\big) Q_t^{(0)}(X_t) \big( (\sigma_t^\eps)^2 - \overline{\sigma}^2\big)\phi_t^\eps dt
\\  
&&
\nonumber
\hspace*{0.8in}
+\frac{\eps^{1/2}}{2} \rho  \big( x^2 \partial_x^2 \big) Q_t^{(1)}(X_t) \big( (\sigma_t^\eps)^2 -\overline{\sigma}^2 \big) dt 
\\  
&&
\nonumber
\hspace*{0.8in}
  +  \rho   \big( x\partial_x (  x^2 \partial_x^2  ) \big) Q_t^{(0)}(X_t)  
  \big( \sigma_t^\eps  {\vartheta}_{t}^{\eps} - \eps^{1/2} \overline{D}\big) dt \\
  &&\hspace*{0.8in}
  + d N^{(0)}_t + dN^{(1)}_t +\eps^{1/2} \rho dN^{(2)}_t+ dN^{(3)}_t  .
  \label{eq:2}
\ea
In \cite{sv3} it is shown that the third, fourth, and fifth terms of the right-hand side of (\ref{eq:2}) are smaller than $\eps^{1/2}$. 
That is, if 
we introduce for any $t \in [0,T]$:
\begin{eqnarray}
R^{(1)}_{t,T} &=& \int_t^T  \frac{1}{2} \big( x^2\partial_x^2(  x^2 \partial_x^2 )\big) Q_s^{(0)}(X_s)  \big( (\sigma_s^\eps)^2 - \overline{\sigma}^2\big) \phi_s^\eps
 ds , \\
 \label{def:R2}
R^{(2)}_{t,T} &=& \int_t^T \frac{\eps^{1/2}}{2} \rho  \big( x^2 \partial_x^2 \big) Q_s^{(1)}(X_s)  \big( (\sigma_s^\eps)^2 -\overline{\sigma}^2 \big)ds , \\
R^{(3)}_{t,T} &=& \int_t^T  \rho   \big( x\partial_x(  x^2 \partial_x^2 ) \big) Q_s^{(0)}(X_s) 
 \big(  \sigma_s^\eps\vartheta_{s}^\eps  -\eps^{1/2}  \overline{D} )ds  ,
\end{eqnarray}
 we have  for $j=1,2,3$,
\begin{equation}
\label{eq:estimeRj}
\displaystyle \lim_{\eps \to 0} \eps^{-1/2} \sup_{t \in [0,T]} \EE \big[ (R^{(j)}_{t,T})^2 \big]^{1/2} =0 .
\end{equation}
  From  Proposition \ref{prop:main} we have that 
 \ba
 \nonumber
    -d Q_t^{(1)}(X_t)   &=&
     \overline{D}  \big( x \partial_x  (x^2 \partial_x^2 ) \big) Q_t^{(0)} (X_t) dt
     -   dN_t^{(2)} \\
     && -   \frac{1}{2}   \big( x^2 \partial_x^2 \big) Q_t^{(1)}(X_t)  \big( (\sigma_t^\eps)^2 -\overline{\sigma}^2 \big)dt . 
     \label{eq:dQ1a}
 \ea  
 It then follows from (\ref{eq:Edef})-(\ref{eq:1})-(\ref{eq:2})-(\ref{eq:dQ1a}) that
 \ba
 \nonumber
    dE^{\rm H}_t &=&   \eps^{1/2} \rho  
     \overline{D}  \big( x \partial_x  (x^2 \partial_x^2 ) \big) Q_t^{(0)} (X_t) dt
         + d R_{t,T}^{(1)}+ d R_{t,T}^{(3)}  \\ 
    &&     \hbox{}  -d\big[ \phi_t^\eps \big( x^2 \partial_x^2 \big) Q_t^{(0)}(X_t) \big]   +   
       dN_t^{(1)} + dN_t^{(3)}   . 
       \label{eq:Kb}
   \ea 
 It follows from Lemma \ref{lem:A2}  that the first term in the second line
 of Eq. (\ref{eq:Kb})  is small: 
\begin{equation}
\lim_{\eps \to 0}  \eps^{-1/2} \sup_{t\in [0,T]}   
\EE\left[ \left|
      \int_0^t -d\big[ \phi_s^\eps \big( x^2 \partial_x^2 \big) Q_s^{(0)}(X_s) \big]    
 \right|^2 \right]^{1/2}  = 0  ,
\end{equation} 
and the third term, i.e. the martingale $N^{(3)}_t$, is small as well:
$$
\lim_{\eps \to 0}  \eps^{-1/2} \sup_{t\in [0,T]}   
\EE\left[ \big|N^{(3)}_t \big|^2 \right]^{1/2}=0
.
$$
We then get (\ref{def:hE2a}-\ref{def:hE2}).
 {Finally, by substracting (\ref{def:hE1}) from (\ref{def:hE2}), we obtain
$$
\check{E}^{\rm H}_t - \hat{E}^{\rm H}_t = \eps^{1/2} \rho \Big[
  \overline{D} \int_0^t  \big( x \partial_x  (x^2 \partial_x^2 ) \big) Q_s^{(0)} (X_s) ds - N^{(2)}_t \Big]
-
\eps^{1/2} \rho  \big(    Q_0^{(1)}(X_0)  
   -   Q_t^{(1)}(X_t)         \big) ,
$$
which gives with (\ref{def:R2})  and (\ref{eq:dQ1a})  that}
$$
\check{E}^{\rm H}_t = \hat{E}^{\rm H}_t + R^{(2)}_{0,T}-R^{(2)}_{t,T},
$$
so that (\ref{eq:estimeRj}) gives (\ref{def:hE1a}-\ref{def:hE1}).
\end{proof}
  
 We next consider the expected hedging cost. 
 We find that, if we exercise at some time $0 \leq t\leq T$, 
  the  extra 
 hedging cost  beyond the Black-Scholes price at the effective volatility
  is the fraction $t/T$
  of the price correction at the initiation time: 
  
  \begin{proposition}
  \label{prop:main2}
The mean hedging cost satisfies
\begin{equation}
\lim_{\eps \to 0}  \EE\left[\left(  \eps^{-1/2}
  \EE\big[ E_t^{\rm H} - E^{\rm H}_0   \mid {\cal F}_0 \big]  -
    \frac{t}{T}    \rho Q_0^{(1)}(X_0)  \right)^2\right]^{1/2}=0  ,
    \label{eq:prop:main2}
\end{equation} 
with $E_0^{\rm H}=Q_0^{(0)}(X_0)$.  
 \end{proposition}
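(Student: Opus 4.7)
The plan is to exploit the cleaner asymptotic expression $\check{E}_t^{\rm H}$ already provided by Lemma~\ref{thm1} and a commutation property that makes the integrand a martingale under the leading-order (Black-Scholes) dynamics, so that the conditional expectation collapses to an explicit quantity identifiable with $Q_0^{(1)}(X_0)$.

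\emph{Step 1.} By Lemma~\ref{thm1}, $E_t^{\rm H} = \check{E}_t^{\rm H} + o(\sqrt{\eps})$ in $L^2$ uniformly in $t\in[0,T]$, cf.~(\ref{def:hE2a})--(\ref{def:hE2}). Taking the conditional expectation given ${\cal F}_0$ of (\ref{def:hE2}) kills the martingale $N_t^{(1)}$ and leaves
\[
\EE\big[E_t^{\rm H} - Q_0^{(0)}(X_0)\,\big|\,{\cal F}_0\big]
=\sqrt{\eps}\,\rho \overline{D}\,\EE\Big[\int_0^t G(s,X_s)\,ds\,\Big|\,{\cal F}_0\Big] + o(\sqrt{\eps}),
\]
where $G(s,x):=\big(x\partial_x(x^2\partial_x^2)\big)Q_s^{(0)}(x)$.

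\emph{Step 2.} The key observation is that $G$ is itself Black-Scholes harmonic at the effective volatility, i.e.\ ${\cal L}_{\rm BS}(\bar\sigma)G=0$. Indeed, writing $L=x\partial_x$ one has $x^2\partial_x^2=L(L-1)$ and ${\cal L}_{\rm BS}(\bar\sigma)=\partial_t+\tfrac{1}{2}\bar\sigma^2 L(L-1)$, so any polynomial in $L$ commutes with ${\cal L}_{\rm BS}(\bar\sigma)$. Since $G=L\cdot L(L-1)Q^{(0)}$ and $Q^{(0)}$ solves (\ref{eq:bs0}), the claim follows.

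\emph{Step 3.} Applying It\^o's formula to $G(s,X_s)$ under the true dynamics (\ref{eq:Xdef}) and using ${\cal L}_{\rm BS}(\bar\sigma)G=0$ yields
\[
G(s,X_s)=G(0,X_0)+\int_0^s \tfrac{1}{2}\big((\sigma_u^\eps)^2-\bar\sigma^2\big)\big(x^2\partial_x^2 G\big)(u,X_u)\,du+\text{(martingale)}.
\]
Taking conditional expectations and invoking the same fast-averaging estimate that controls $R^{(1)}_{t,T}$ in (\ref{eq:estimeRj}) (essentially the bound $\sup_t|\phi_t^\eps|,|\psi_t^\eps|=O(\sqrt{\eps})$ together with the smoothness of $Q^{(0)}$) gives $\EE[G(s,X_s)\mid{\cal F}_0]=G(0,X_0)+O(\sqrt{\eps})$ uniformly in $s\in[0,T]$. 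Integrating over $[0,t]$ and recalling that $Q_0^{(1)}(X_0)=T\overline{D}\,G(0,X_0)$ by (\ref{def:Q1t}), we conclude
\[
\sqrt{\eps}\,\rho \overline{D}\,\EE\Big[\int_0^t G(s,X_s)\,ds\,\Big|\,{\cal F}_0\Big]
=\sqrt{\eps}\,\tfrac{t}{T}\,\rho\, Q_0^{(1)}(X_0)+o(\sqrt{\eps}),
\]
which combined with Step~1 gives (\ref{eq:prop:main2}).

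\emph{Main obstacle.} The only genuinely analytic input is the uniform-in-$s$ averaging estimate in Step~3, i.e.\ controlling $\EE[\int_0^s ((\sigma_u^\eps)^2-\bar\sigma^2) f(u,X_u)\,du\mid{\cal F}_0]$ by $O(\sqrt{\eps})$ for sufficiently smooth $f$. This is of the same character as the bounds on $R^{(j)}_{t,T}$ already established in the proof of Lemma~\ref{thm1}, so no new machinery is needed; the verification is the principal technical step. Everything else is algebraic: the commutation of $L$ with ${\cal L}_{\rm BS}(\bar\sigma)$ and the identification $Q_0^{(1)}=T\overline{D}G_0$ from (\ref{def:Q1t}).
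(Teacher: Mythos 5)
Your proof is correct, and it reaches the conclusion by a slightly different route than the paper. The paper's proof first replaces the true dynamics $X_s$ by the effective Black--Scholes dynamics $\tilde{X}_s$ of (\ref{eq:tXdef0}) via the effective-market Lemma \ref{lem:EMn} (Eq.~(\ref{eq:lemEMn1})), and then evaluates $\int_0^t \EE[(x\partial_x(x^2\partial_x^2))Q_s^{(0)}(\tilde X_s)\mid{\cal F}_0]\,ds$ in two ways: once through the PDE (\ref{eq:bs1}) for $Q^{(1)}$, which identifies the value at $t=T$ as $\rho Q_0^{(1)}(X_0)$, and once through It\^o's formula and the Black--Scholes harmonicity of $G=(x\partial_x(x^2\partial_x^2))Q^{(0)}$, which shows the integral is linear in $t$; together these pin down the answer as $(t/T)\rho Q_0^{(1)}(X_0)$. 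You instead keep the true dynamics throughout, apply It\^o directly to $G(s,X_s)$, absorb the $((\sigma_u^\eps)^2-\bar\sigma^2)$ remainder with the same fast-averaging estimate that underlies Lemma \ref{lem:EMn0} and the bounds on $R^{(j)}_{t,T}$, and then read off the constant from the explicit formula (\ref{def:Q1t}) rather than from the PDE. Both arguments hinge on the same key algebraic fact (that $x\partial_x$ commutes with ${\cal L}_{\rm BS}(\bar\sigma)$, so $G$ is BS-harmonic) and on comparable analytic input; your version is marginally more self-contained in that it bypasses Lemma \ref{lem:EMn} (whose proof invokes the pricing Proposition \ref{prop:main}), at the cost of having to justify the uniform-in-$s$ averaging bound directly, which you correctly flag as the one genuinely technical step and which is indeed available from the paper's machinery. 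Note also that you only need this remainder to be $o(1)$ in $L^2$ rather than $O(\sqrt{\eps})$, since it is multiplied by the prefactor $\sqrt{\eps}\rho\overline{D}$.
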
  
 
 Therefore, we have
 \begin{equation}
 \lim_{\eps \to 0}  \EE\left[\left(  \eps^{-1/2} \big(
  \EE\big[ E_t^{\rm H}     \mid {\cal F}_0 \big]  -P(0,X_0) \big) -
    \frac{t-T}{T}    \rho Q_0^{(1)}(X_0)  \right)^2\right]^{1/2}=0 ,
 \end{equation}
 which gives (\ref{eq:meanYH}).


 \begin{proof}
 From (\ref{def:hE2}) we have
 \begin{align}
 \eps^{-1/2} \EE\big[ \check{E}^{\rm H}_t - Q_0^{(0)}(X_0)    \mid {\cal F}_0 \big] 
  =       \rho  \overline{D} \int_0^t 
            \EE\big[ \big( x \partial_x  (x^2 \partial_x^2 ) \big) Q_s^{(0)} ({X}_s) 
           \mid {\cal F}_0  \big]
            ds  .
            \label{eq:prop:main2b}
 \end{align}
Using (\ref{def:hE2a}), Lemma  \ref{lem:EMn} (Eq.~(\ref{eq:lemEMn1})), and dominated convergence theorem, it follows that 
 \begin{align}
\lim_{\eps \to 0}  \EE\left[\left(
  \eps^{-1/2} 
  \EE\big[ E^{\rm H}_t - E^{\rm H}_0   \mid {\cal F}_0 \big] 
- \rho  \overline{D} \int_0^t 
            \EE\big[ \big( x \partial_x  (x^2 \partial_x^2 ) \big) Q_s^{(0)} (\tilde{X}_s) 
           \mid {\cal F}_0 \big] ds\right)^2 \right] =0 ,
  \label{eq:proofprop:main2a}           
 \end{align}
 with
 \begin{equation}
 \label{eq:tXdef0}
d\tilde{X}_t = \bar\sigma  \tilde{X}_t dW^*_t , \quad \quad  \tilde{X}_0 =  {X}_0  .
\end{equation}
On the one hand, from (\ref{eq:bs1})  we get
 \begin{align*}
\rho  \overline{D} \int_0^t 
            \EE\big[ \big( x \partial_x  (x^2 \partial_x^2 ) \big) Q_s^{(0)} (\tilde{X}_s) 
           \mid {\cal F}_0 \big] ds
    &=    -\rho  \EE\left[\int_0^t  {\cal L}_{\rm BS}(\bar{\sigma}) Q_s^{(1)}(\tilde{X}_s) ds   \mid {\cal F}_0 \right] 
\\&  
= -\rho \EE \big[   Q_t^{(1)} (\tilde{X}_t) - Q_0^{(1)} (\tilde{X}_0) 
           \mid {\cal F}_0  \big],
 \end{align*}
which is equal to $0$ at $t=0$ and equal to $\rho  Q_0^{(1)} (X_0)$ at $t=T$. \\
On the other hand, we have by It\^o's formula and  (\ref{eq:bs0}) that
\ban
    \EE\left[ \big( x \partial_x (x^2 \partial_x^2) \big) Q_s^{(0)} (\tilde{X}_s) \mid {\cal F}_0  \right]  =
      \big( x \partial_x ( x^2 \partial_x^2)\big)  Q_0^{(0)} ({X}_0) ,
\ean
which shows that the integral term in~(\ref{eq:proofprop:main2a}) is a linear function in $t$. Therefore it is equal to
 $(t/T) \rho Q_0^{(1)}(X_0)$, which completes the proof of (\ref{eq:prop:main2}).
\end{proof} 

We are also interested in the risk or uncertainty in the hedging cost
if we exercise at or before expiry.  We find that the magnitude of the
cost fluctuations is of order  $\sqrt{\eps}$.
We have an explicit integral
expression for the variance of the hedging cost fluctuations (to leading order $\eps$)
as explained in the following proposition:
  \begin{proposition}
  \label{prop:main2b2}
  The asymptotic variance of the cost fluctuations satisfies
\ba
\label{eq:cost0}
\lim_{\eps \to 0} \EE\left[\left| \eps^{-1}  {\rm Var} \big(
E^{\rm H}_t-E^{\rm H}_0   \mid {\cal F}_0\big)
-{\cal V}^{(1)}_t(X_0)-2{\cal V}^{(2)}_t(X_0)-{\cal V}^{(3)}_t(X_0) \right| \right]      =  0,
  \ea
with
\ba
&& {\cal V}^{(1)}_t(x_0) =  2  \rho^2 \overline{D}^2   \int_\RR dz  p(z)  \int_0^t ds(t-s) 
  \left(  \big( x \partial_x     (x^2 \partial_x^2   )\big)
  Q_s^{(0)} (x_0 e^{\bar{\sigma} \sqrt{s} z -\bar{\sigma}^2 s/2} ) \right)^2 
   \nn \label{eq:V1}  \\
  && \hspace*{0.65in}
- \Big(\frac{t}{T}\rho Q_0^{(1)}(x_0)\Big)^2 , \\
&& {\cal V}^{(2)}_t(x_0) =    \rho^2 \overline{D}^2 
 \int_\RR dz  p(z)  \int_0^t ds (t-s)
  \left( \big( (x \partial_x)^2  (x^2 \partial_x^2)  \big)
  Q_s^{(0)} (x_0 e^{\bar{\sigma} \sqrt{s} z -\bar{\sigma}^2 s/2}\right)   \nn \\
  &&  \hspace*{0.65in}   \times
   \left(  (  x^2 \partial_x^2  )
  Q_s^{(0)} (x_0 e^{\bar{\sigma} \sqrt{s} z -\bar{\sigma}^2 s/2}\right)  ,
   \label{eq:V2} \\
   && {\cal V}^{(3)}_t(x_0) =     
 {\overline{\Gamma}^2} \int_\RR dz  p(z)  \int_0^t ds 
  \left(  ( x^2 \partial_x^2  )
  Q_s^{(0)} (x_0 e^{\bar{\sigma} \sqrt{s} z -\bar{\sigma}^2 s/2}\right)^2   
   \label{eq:V3}  .
\ea
Here $p(z)$ is the pdf of the standard normal distribution, $\overline{\Gamma}$ is the parameter
 \begin{equation}
   \overline{\Gamma}^2  = 2   \sigma^2_{\rm z} 
\int_0^{\infty}  \int_{s}^{\infty}  
  \left[ 
   \iint_{\RR^2}  FF'( \sigma_{\rm z} z )  
        FF'(\sigma_{\rm z} z' ) 
p_{{\cal C}_{\cal K}(s,s')}(z,z') dzdz' \right]  {\cal K}(s) {\cal K}(s')  ds'ds , 
\label{def:barGamma}
\end{equation}
and $p_C$ is the pdf of the bivariate normal distribution with covariance matrix (\ref{eq:Cmat}) and
 ${\cal C}_{\cal K}(s,s')$ is defined by (\ref{Cdef}). 
 \end{proposition}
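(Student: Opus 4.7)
The plan is to start from the representation of Lemma \ref{thm1}, Eq.~(\ref{def:hE2}): with $G(t,x):=\bigl(x\partial_x(x^2\partial_x^2)\bigr)Q_t^{(0)}(x)$ and $H(t,x):=(x^2\partial_x^2)Q_t^{(0)}(x)$, I have, modulo $o_{L^2}(\sqrt{\eps})$,
\[
E_t^{\rm H}-E_0^{\rm H} \;=\; \eps^{1/2}\rho\overline{D}\,I_t \;+\; N_t^{(1)} ,\qquad I_t:=\int_0^t G(s,X_s)\,ds .
\]
To leading order the conditional variance splits as
\begin{align*}
{\rm Var}\bigl(E_t^{\rm H}-E_0^{\rm H}\mid{\cal F}_0\bigr)
\;=\;& \eps\,\rho^2\overline{D}^{\,2}\,{\rm Var}(I_t\mid{\cal F}_0)
+ 2\eps^{1/2}\rho\overline{D}\,{\rm Cov}(I_t,N_t^{(1)}\mid{\cal F}_0) \\
&+ {\rm Var}(N_t^{(1)}\mid{\cal F}_0),
\end{align*}
and I will show that these three terms produce $\mathcal V^{(1)}$, $2\mathcal V^{(2)}$, and $\mathcal V^{(3)}$ respectively.

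For $\mathcal V^{(1)}$: the operator $x\partial_x(x^2\partial_x^2)$ commutes with $\mathcal L_{\rm BS}(\bar\sigma)$, so $G$ again satisfies the Black-Scholes PDE (\ref{eq:bs0}), and $G(s,\tilde X_s)$ is a martingale under the effective dynamics $d\tilde X_s=\bar\sigma\tilde X_s dW^*_s$. The tower property then gives ${\rm Var}(I_t\mid{\cal F}_0)=2\int_0^t(t-s)\EE[G(s,\tilde X_s)^2\mid{\cal F}_0]\,ds - t^2 G(0,X_0)^2$, and the identity $G(0,X_0)=Q_0^{(1)}(X_0)/(T\overline{D})$ inherited from (\ref{def:Q1t}) turns the second term into $(\tfrac{t}{T}\rho Q_0^{(1)}(X_0))^2$ after multiplication by $\rho^2\overline{D}^{\,2}$; substituting the Gaussian law of $\tilde X_s$ produces the integral (\ref{eq:V1}).

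For $\mathcal V^{(3)}$: since $\psi^\eps$ is ${\cal F}^W$-measurable, $d\psi_t^\eps=\vartheta_t^\eps dW_t$ by the martingale representation theorem (compatibly with (\ref{def:varthetaeps})), hence $\langle N^{(1)}\rangle_t=\int_0^t H(s,X_s)^2(\vartheta_s^\eps)^2 ds$. The crux is a second-moment averaging lemma: for smooth bounded $f$,
\[
\lim_{\eps\to 0}\EE\!\left[\Bigl|\eps^{-1}\!\!\int_0^t f(s,X_s)(\vartheta_s^\eps)^2 ds - \overline{\Gamma}^{\,2}\!\!\int_0^t \EE[f(s,\tilde X_s)\mid{\cal F}_0]\,ds\Bigr|\right]=0,
\]
which follows from the stationary Gaussian structure of $Z^\eps$, the change of variables $u=(s-t)/\eps$ in (\ref{def:varthetaeps}), and evaluation of the stationary bivariate expectations $\EE[FF'(Z)FF'(Z')]$ with covariance $\mathcal C_{\cal K}(u,u')$; the formula (\ref{def:barGamma}) is precisely what this averaging yields.

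For $\mathcal V^{(2)}$: since $I$ has finite variation and $N^{(1)}$ is continuous, integration by parts gives
${\rm Cov}(I_t,N_t^{(1)}\mid{\cal F}_0)=\int_0^t\EE[N_s^{(1)}G(s,X_s)\mid{\cal F}_0]ds$.
A second Itô expansion of $N_s^{(1)}G(s,X_s)$, in which the two pure martingale increments have zero mean and the drift of $G(s,\tilde X_s)$ vanishes by the BS PDE, reduces the integrand at leading order to $\EE[\langle N^{(1)},G(\cdot,X_\cdot)\rangle_s\mid{\cal F}_0]$. Since the martingale parts of $N^{(1)}$ and $G(\cdot,X_\cdot)$ are driven by $d\psi^\eps=\vartheta^\eps dW$ and by $\bar\sigma X\partial_x G\,dW^*$ respectively, and $d\langle W,W^*\rangle=\rho\,dt$, the bracket equals $\int_0^s H(u,X_u)\,(x\partial_x G)(u,X_u)\,\rho\sigma_u^\eps\vartheta_u^\eps du$. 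Combining with the first-moment averaging $\eps^{-1/2}\EE[\sigma_u^\eps\vartheta_u^\eps]\to\overline{D}$ (the very averaging used to derive (\ref{def:DtT}) in \cite{sv3}) and the algebraic identity $x\partial_x G=(x\partial_x)^2(x^2\partial_x^2)Q^{(0)}$, one obtains, after multiplication by $2\eps^{1/2}\rho\overline{D}$ and an interchange of integration order, exactly $2\mathcal V^{(2)}_t(X_0)$.

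The main obstacle is the second-moment ergodic averaging that extracts $\overline{\Gamma}^{\,2}$ from $\eps^{-1}(\vartheta^\eps)^2$, uniformly in the Markovian and rough $(H<1/2)$ regimes. This extends the first-moment averaging of \cite{sv3} (which yielded $\overline{D}$) to the bilinear setting and must tolerate the slow polynomial decay of the covariance $\mathcal C_Z$ in the rough case; the decay assumption (\ref{eq:defd}) on $\mathcal K$ and the uniform boundedness of $F,F'$ provide the uniform integrability needed to pass to the limit in the double lag integral. Once this averaging is in hand, the rest of the argument is bookkeeping with Itô calculus and Lemma \ref{thm1}.
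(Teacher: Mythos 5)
Your proposal is correct and follows essentially the same route as the paper: the same three-term decomposition of the variance of $\eps^{1/2}\rho\overline{D}\int_0^t (x\partial_x(x^2\partial_x^2))Q_s^{(0)}(X_s)\,ds + N^{(1)}_t$, the same use of the martingale property of $(x\partial_x)^j(x^2\partial_x^2)Q^{(0)}$ under the effective dynamics for ${\cal V}^{(1)}$ and ${\cal V}^{(2)}$, and the same first- and second-moment averaging of $\sigma^\eps\vartheta^\eps$ and $(\vartheta^\eps)^2$ that the paper packages as Lemmas \ref{lem:K2} and \ref{lem:EMn2} (your "second-moment ergodic averaging" is exactly Lemmas \ref{lem:kest}--\ref{lem:kest2}). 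The only implicit step you should make explicit is the replacement of $X_s$ by $\tilde X_s$ in the one- and two-time moments, which costs $O(\sqrt\eps)$ and is supplied in the paper by Lemma \ref{lem:EMn}.
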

   \begin{proof}
From (\ref{def:hE1a}) and (\ref{def:hE2}), we can write
\ba 
&&  \eps^{-1}  {\rm Var} \big(
E^{\rm H}_t-E^{\rm H}_0   \mid {\cal F}_0\big)  =      V_1^\eps+2V_2^\eps + V_3^\eps +o(1) , \\
 && V_1^\eps=
 {\rm Var}\left( 
    \rho  \overline{D} \int_0^t  \big( x \partial_x  (x^2 \partial_x^2 ) \big) Q_s^{(0)} ({X}_s) ds    \mid {\cal F}_0    \right),\\
&&V_2^\eps =   \eps^{-1/2}  {\rm Cov}\left(
    \rho  \overline{D} \int_0^t  \big( x \partial_x  (x^2 \partial_x^2 ) \big) Q_s^{(0)} ({X}_s) ds,
     N^{(1)}_t  \mid {\cal F}_0 \right) ,
  \\
   &&
  V_3^\eps=  \eps^{-1} {\rm Var}\big( N^{(1)}_t \mid {\cal F}_0  \big)  .
   \label{def:V3}
\ea
Note that we have
$  x^2 \partial_x^2   =  \left(x \partial_x\right)^2 -x \partial_x$. It follows that 
\ban
 {\cal L}_{\rm BS}(\bar\sigma)   
 \big( \left(x \partial_x\right)^j  (x^2 \partial_x^2 ) \big) Q_t^{(0)} (x) = 0, \quad 
 j=0,1,\ldots . 
\ean
Then one can show that $ V_1^\eps$ converges in $L^1$ to 
$ {\cal V}^{(1)}_t(X_0)$ (given by Eq.~(\ref{eq:V1})) by Lemma  \ref{lem:EMn2}-Eq.~(\ref{eq:EM1}) and Proposition
\ref{prop:main2}. Similarly, using the expression (\ref{eq:Ndef1}) of $N^{(1)}$,  
one can show that $ V_2^\eps$ and $V_3^\eps$  converge in $L^1$
to $ {\cal V}^{(2)}_t(X_0)$ and $ {\cal V}^{(3)}_t(X_0)$ (given by  Eqs.~(\ref{eq:V2}) and (\ref{eq:V3})) by
Lemma \ref{lem:A2} and 
by Lemma \ref{lem:EMn2}-Eqs.~(\ref{eq:EM2}-\ref{eq:EM3})
respectively.  
  \end{proof}

We illustrate  the above result 
in the case  of a European call option in Section~\ref{sec:call}.
  
\subsection{Hedging Cost Process  using (HW) Hedging Strategy}
\label{sec:delta2}

In this section we analyze  the  hedging scheme
(HW)  described by Eq.~(\ref{eq:dMV}) where we use a ``corrected delta''  
 to  construct the portfolio. 
That is, we now use  the corrected Black-Scholes price in
Proposition \ref{prop:main}  and associated 
  delta  and value function.
 
Thus,  we construct a replicating portfolio so that $a^{\rm HW}_t$ is the number of
underlyings at time $t$ and $b^{\rm HW}_t$ is the amount in the bank account
according to the corrected strategy.
The value of the portfolio is now
\begin{equation}
   V^{\rm HW}_t = a^{\rm HW}_t X_t + b^{\rm HW}_t ,
\end{equation}
and we choose
\begin{equation}
\label{eq:D2}
a^{\rm HW}_t  = \delta^{\rm HW}(t,X_t),\quad\quad 
\delta^{\rm HW}(t,x) = \partial_x  P(t,x) = 
 \partial_x  \big( Q_t^{(0)} +\eps^{1/2}   \rho Q_t^{(1)}   \big) (x).
\end{equation}
We moreover require the portfolio to replicate  the corrected option price so that 
the value of the portfolio is
\ba\label{eq:V0c}
V^{\rm HW}_t=P(t,X_t), \quad  0 \leq t \leq T ,
\ea
  and  $
   b^{\rm HW}_t = P(t,X_t)   - a^{\rm HW}_t X_t$.
Again  the portfolio replicates the payoff at maturity
$    V_T^{\rm HW} = P(T,X_T) = h(X_T) $.
 The financing cost of the portfolio is
 \begin{equation}\label{eq:Ec}
  E^{\rm HW}_t  = V^{\rm HW}_t - \int_0^t a^{\rm HW}_s dX_s ,
\end{equation}
with in particular $
  E^{\rm HW}_0  =  P(0,X_0) $.
We aim  to understand how the cost  is affected by using the corrected 
strategy.  
The following lemma shows that, by using the corrected hedging strategy, we have 
in the incomplete market 
{\it restored the situation
with existence of a self-financing replicating portfolio
 to the order  of the approximation in the mean}.
 Moreover the hedging cost is characterized by the martingale 
 $N^{(1)}$ defined by (\ref{eq:Ndef1}).  

 \begin{lemma}
 \label{lem:Ec}
 The cost of the corrected hedging strategy satisfies
\begin{equation}
\label{eq:estimedEt}
\displaystyle \lim_{\eps \to 0} \eps^{-1/2} \sup_{t \in [0,T]} 
\EE \Big[ \big( E^{\rm HW}_t   - P(0,X_0)  - N_t^{(1)}  \big)^2 \Big]^{1/2}  = 0
    ,
\end{equation} 
where
$N^{(1)}_t$ is  the martingale defined in Lemma  \ref{thm1}, Eq.~(\ref{eq:Ndef1}).
\end{lemma}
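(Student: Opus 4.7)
The plan is to derive an expression for $dE_t^{\rm HW}$ via It\^o's formula, then reduce the analysis to that of the (H) scheme already treated in Lemma \ref{thm1}. Since $V_t^{\rm HW} = P(t,X_t)$ and $a_t^{\rm HW} = \partial_x P(t,X_t)$, It\^o's formula immediately yields
$$
 dE_t^{\rm HW} = dV_t^{\rm HW} - a_t^{\rm HW} dX_t = \Big(\partial_t + \tfrac{1}{2}(\sigma_t^\eps)^2 x^2 \partial_x^2\Big) P(t,X_t)\, dt,
$$
so the stochastic integral cancels exactly, just as in the standard self-financing computation. Substituting $P = Q^{(0)} + \sqrt{\eps}\rho Q^{(1)}$ and using the PDEs (\ref{eq:bs0}) and (\ref{eq:bs1}) to eliminate the time derivatives gives
$$
 dE_t^{\rm HW} = \tfrac{1}{2}\big((\sigma_t^\eps)^2 - \bar\sigma^2\big) \big(x^2\partial_x^2\big) Q^{(0)}_t(X_t)\, dt - \sqrt{\eps}\rho\overline{D} \big(x\partial_x(x^2\partial_x^2)\big) Q^{(0)}_t(X_t)\, dt + \tfrac{\sqrt{\eps}\rho}{2}\big((\sigma_t^\eps)^2 - \bar\sigma^2\big)\big( x^2\partial_x^2\big) Q^{(1)}_t(X_t)\, dt.
$$

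Next I would identify each piece. The first term is precisely $dE_t^{\rm H}$ by (\ref{eq:Edef}). The third term, when integrated over $[0,t]$, is exactly $R^{(2)}_{0,t}$ in the notation introduced in the proof of Lemma \ref{thm1}, so by the bound (\ref{eq:estimeRj}) it is uniformly $o(\sqrt{\eps})$ in $L^2$ on $[0,T]$ (the bound (\ref{eq:estimeRj}) controls $\sup_{t\in[0,T]} \EE[(R^{(2)}_{t,T})^2]^{1/2}$, but the same argument applies to $R^{(2)}_{0,t}$ by the triangle inequality $R^{(2)}_{0,t} = R^{(2)}_{0,T} - R^{(2)}_{t,T}$). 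It then remains to invoke Lemma \ref{thm1}, specifically the representation (\ref{def:hE2})--(\ref{def:hE2a}) for $\check{E}_t^{\rm H}$, which states
$$
 E_t^{\rm H} - Q_0^{(0)}(X_0) = \sqrt{\eps}\rho\overline{D}\int_0^t \big(x\partial_x(x^2\partial_x^2)\big)Q_s^{(0)}(X_s)\,ds + N_t^{(1)} + o(\sqrt{\eps})
$$
uniformly in $t\in[0,T]$ in the $L^2$ sense. Combining these three observations and noting $E_0^{\rm HW} = P(0,X_0)$, the drift correction of order $\sqrt{\eps}$ cancels against the analogous term coming from $\check E_t^{\rm H}$, leaving
$$
 E_t^{\rm HW} - P(0,X_0) = N_t^{(1)} + o(\sqrt{\eps}),
$$
which is exactly (\ref{eq:estimedEt}).

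The argument is essentially algebraic once Lemma \ref{thm1} is in hand; the main conceptual point worth emphasizing is the cancellation mechanism. The correction $\sqrt{\eps}\rho Q^{(1)}$ was designed via (\ref{eq:bs1}) so that its $\partial_t$ produces exactly the $-\sqrt{\eps}\rho\overline D (x\partial_x(x^2\partial_x^2)) Q^{(0)}$ drift needed to absorb the non-martingale (predictable) leading-order drift identified in the (H)-scheme analysis. The only technical obstacle is the uniform control of the residual $R^{(2)}_{0,t}$, which is handled by the same tools (bounded smooth derivatives of $Q^{(1)}$ away from maturity, together with the moment bounds on $(\sigma_s^\eps)^2 - \bar\sigma^2$) as were used in \cite{sv3} and in the proof of Lemma \ref{thm1}; no new estimate is required.
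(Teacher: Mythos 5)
Your proposal is correct and follows essentially the same route as the paper: the same It\^o computation with the PDEs (\ref{eq:bs0})--(\ref{eq:bs1}), the same identification of the residual $\frac{\sqrt{\eps}\rho}{2}((\sigma_t^\eps)^2-\bar\sigma^2)(x^2\partial_x^2)Q_t^{(1)}$ as an $o(\sqrt{\eps})$ term (the paper controls it via Lemma \ref{lem:EMn0} rather than by recycling the $R^{(2)}$ bound, but these are interchangeable), and the same cancellation of the $\sqrt{\eps}\rho\overline{D}$ drift against the representation (\ref{def:hE2a})--(\ref{def:hE2}) of $E^{\rm H}_t$. The paper merely packages the intermediate step through an auxiliary process $\tilde{E}^{\rm HW}$ instead of expanding $P$ at the outset.
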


\begin{proof}
In view of Eqs.  (\ref{eq:bs0})  and  (\ref{eq:bs1}) we find
\ban
     dE^{\rm HW}_t &=& dV^{\rm HW}_t -a^{\rm HW}_t dX_t  \\
     &=& \Big( \partial_t +\frac{1}{2} (\sigma_t^\eps)^2 x^2 \partial_x^2 \Big)
        P(t,X_t)
        dt + \partial_x 
        P(t,X_t)
        dX_t - a^{\rm HW}_t dX_t 
        \\  & = & \frac{1}{2} \left( (\sigma_t^\eps)^2  - \bar\sigma^2 \right) 
        \big( x^2 \partial_x^2\big) 
        P(t,X_t)
        dt 
         - \eps^{1/2} \rho \overline{D}  \big( x \partial_x  (x^2 \partial_x^2 ) \big) Q_t^{(0)} (X_t) dt .
\ean
We define $\tilde{E}^{\rm HW}$ by 
\begin{equation}
\label{eq:tEdef}
     d\tilde{E}^{\rm HW}_t  
= \frac{1}{2} \left( (\sigma_t^\eps)^2  - \bar\sigma^2 \right) 
        \big(x^2 \partial_x^2\big) Q^{(0)}_t(X_t)  dt   
                 - \eps^{1/2} \rho \overline{D}  \big( x \partial_x  (x^2 \partial_x^2 ) \big) Q_t^{(0)} (X_t) dt  ,
\end{equation}
starting from $  \tilde{E}^{\rm HW}_0 =  
P(0,X_0)$. 
Therefore
$$
{E}^{\rm HW}_t   - \tilde{E}^{\rm HW}_t  =
 \rho \eps^{1/2} \int_0^t  \frac{1}{2} \left( (\sigma_s^\eps)^2  - \bar\sigma^2 \right) 
        \big(x^2 \partial_x^2\big) Q^{(1)}_s(X_s)  ds  ,
$$
and we get from   Lemma \ref{lem:EMn0}:
\begin{equation}
\label{eq:estimedE}
\displaystyle \lim_{\eps \to 0} \eps^{-1/2} \sup_{t \in [0,T]} 
\EE \big[ ( {E}^{\rm HW}_t   - \tilde{E}^{\rm HW}_t    )^2 \big]^{1/2}  = 0 .
\end{equation} 
We have from (\ref{eq:Edef}) and  (\ref{eq:tEdef}):
$$
d \tilde{E}^{\rm HW}_t  -dE^{\rm H}_t =  
- \eps^{1/2} \rho \overline{D}  \big( x \partial_x  (x^2 \partial_x^2 ) \big) Q_t^{(0)} (X_t) dt .
$$
Using (\ref{def:hE2}) we get
\ban
 \tilde{E}^{\rm HW}_t  - E^{\rm H}_t + \check{E}^{\rm H}_t  
 &=&  \tilde{E}^{\rm HW}_0  - E_0^{\rm H} + \check{E}_0^{\rm H} +  N^{(1)}_t \\
 &=&   P(0,X_0) - Q^{(0)}_0(X_0) +Q^{(0)}_0(X_0) + N^{(1)}_t =
   P(0,X_0)+N^{(1)}_t.
\ean
Using  (\ref{def:hE2a}) we find that 
\ban
  && \lim_{\eps \to 0} \eps^{-1/2} \sup_{t \in [0,T]} 
\EE \left[ \left( {\tilde{E}}^{\rm HW}_t   -    P(0,X_0) - N^{(1)}_t    
\right)^2 \right]^{1/2}    
\\
&&
   =
\lim_{\eps \to 0} \eps^{-1/2} \sup_{t \in [0,T]} 
\EE \big[ (  E_t^{\rm H} -  \check{E}_t^{\rm H}   )^2 \big]^{1/2}   
 = 0 ,
\ean
which gives the desired result with Eq. (\ref{eq:estimedE}).
 \end{proof}
 
 This lemma allows us to characterize the mean and variance of the cost of the corrected hedging strategy.
 
\begin{proposition}
\label{prop:c}
The mean extra hedging cost beyond the corrected price is zero:
\begin{equation}
\lim_{\eps \to 0}  \EE\left[\left( \eps^{-1/2}
  \EE\big[ E_t^{\rm HW}- E_0^{\rm HW}   \mid {\cal F}_0 \big] \right)^2\right]^{1/2} = 
0  ,
\end{equation} 
with $E_0^{\rm HW}=P(0,X_0)$.  
The variance of the cost fluctuations satisfies
\ba 
\lim_{\eps \to 0}  \EE\left[\left| \eps^{-1}  {\rm Var} \big(
E_t^{\rm HW}-E_0^{\rm HW}   \mid {\cal F}_0
\big)   - {\cal V}^{(3)}_t(X_0) \right|\right]=0,
  \ea
  where ${\cal V}^{(3)}_t$ is given by  (\ref{eq:V3}).
 \end{proposition}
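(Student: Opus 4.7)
The plan is to reduce the entire computation to the behavior of the single martingale $N^{(1)}_t$. Indeed, Lemma \ref{lem:Ec} already does the heavy lifting: it provides the expansion
\[
E^{\rm HW}_t - P(0,X_0) = N^{(1)}_t + R^\eps_t,
\]
where $R^\eps_t$ satisfies $\sup_{t\in[0,T]} \EE[(R^\eps_t)^2]^{1/2} = o(\sqrt{\eps})$. Once this representation is in hand, everything else reduces to (i) recognizing that $N^{(1)}$ is a martingale starting from zero, hence has vanishing conditional mean, and (ii) computing the conditional variance of $N^{(1)}_t$ in the fast mean-reversion limit, which is already controlled by the tools developed for Proposition \ref{prop:main2b2}.

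For the mean statement, I would apply the conditional expectation $\EE[\cdot\mid\mathcal{F}_0]$ to both sides. Since $N^{(1)}$ is a martingale starting at zero (by Lemma \ref{thm1}, Eq.~\eqref{eq:Ndef1}), one has $\EE[N^{(1)}_t\mid\mathcal{F}_0]=0$. By Jensen's inequality,
\[
\EE\!\left[\bigl(\EE[R^\eps_t\mid\mathcal{F}_0]\bigr)^2\right] \leq \EE[(R^\eps_t)^2] = o(\eps),
\]
uniformly in $t\in[0,T]$. Dividing by $\sqrt{\eps}$ and taking square roots yields the claimed $L^2$ convergence of the conditional mean to zero.

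For the variance statement, I would expand
\[
\mathrm{Var}\bigl(E^{\rm HW}_t - P(0,X_0)\mid\mathcal{F}_0\bigr) = \mathrm{Var}\bigl(N^{(1)}_t\mid\mathcal{F}_0\bigr) + 2\,\mathrm{Cov}\bigl(N^{(1)}_t, R^\eps_t\mid\mathcal{F}_0\bigr) + \mathrm{Var}\bigl(R^\eps_t\mid\mathcal{F}_0\bigr).
\]
The last two terms are handled by the conditional Cauchy--Schwarz inequality together with the $L^2$ estimate on $R^\eps_t$ from Lemma \ref{lem:Ec} and the bound on $\EE[(N^{(1)}_t)^2]$ of order $\eps$ implied by Lemma \ref{lem:EMn2}; together they contribute $o(\eps)$ in $L^1$. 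The leading term $\mathrm{Var}(N^{(1)}_t\mid\mathcal{F}_0) = \EE[(N^{(1)}_t)^2\mid\mathcal{F}_0]$ is exactly the quantity $V^\eps_3$ appearing in the proof of Proposition \ref{prop:main2b2}, and Lemma \ref{lem:EMn2}, Eq.~\eqref{eq:EM3} guarantees its convergence in $L^1$ to $\mathcal{V}^{(3)}_t(X_0)$ given in~\eqref{eq:V3}.

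The main obstacle, as in the analogous statements for the (H) scheme, is not conceptual but bookkeeping: one must check that the $o(\sqrt{\eps})$ error in $L^2$ norm provided by Lemma \ref{lem:Ec} propagates correctly through the (conditional) variance -- i.e. that the cross-covariance term, which a priori could be of order $\eps$, is genuinely negligible. This is ensured by combining Cauchy--Schwarz in the conditional form with the uniform $L^2$ bound $\EE[(N^{(1)}_t)^2]^{1/2} = O(\sqrt{\eps})$, which follows from the integral representation~\eqref{eq:Ndef1} of $N^{(1)}$ and the quadratic-covariation identity~\eqref{def:varthetaeps} together with the bound~\eqref{eq:estimvarthetaeps}. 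No new analytic input is required beyond what has already been developed for Propositions \ref{prop:main2} and \ref{prop:main2b2}; the result follows directly.
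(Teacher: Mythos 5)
Your proposal is correct and follows essentially the same route as the paper: the paper's own proof is just the two-line observation that the mean statement follows from Lemma \ref{lem:Ec} together with $N^{(1)}$ being a zero-mean martingale, and the variance statement from Lemma \ref{lem:Ec} together with the asymptotic variance of $N^{(1)}_t$ already computed as $V_3^\eps \to {\cal V}^{(3)}_t(X_0)$ in Proposition \ref{prop:main2b2}. Your write-up merely spells out the bookkeeping (Jensen and conditional Cauchy--Schwarz to kill the remainder and cross terms), which is exactly what the paper leaves implicit; the only nitpick is that the uniform bound $\EE[(N^{(1)}_t)^2]=O(\eps)$ rests on the quadratic variation $d\langle \psi^\eps,\psi^\eps\rangle_t=(\vartheta^\eps_t)^2dt$ from Lemma \ref{lem:1b} rather than the covariation with $W$ in (\ref{def:varthetaeps}).
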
  
\begin{proof}
The result on the mean follows from Lemma \ref{lem:Ec} and the fact that $N^{(1)}_t$ is a zero-mean martingale.
The result on the variance follows from Lemma \ref{lem:Ec} and the formula for the asymptotic variance of $N^{(1)}_t$  obtained in Proposition \ref{prop:main2b2}.
\end{proof}

  
\subsection{Hedging Cost with (BS) Hedging Strategy} 
\label{sec:BS}     
 {We consider here the hedging scheme (BS) described in Section 
\ref{sec:sum}, that is using the delta of the BS price at the implied volatility $\delta^{\rm BS}$ defined by (\ref{eq:dBS}) and (\ref{def:impliedvol}). }
 Here $Q^{(j)}(t,x;\sigma ), ~j=0,1$ 
stands for $Q_t^{(j)}(x)$  with the constant volatility $\sigma$ instead of $\bar{\sigma}$.
 {Since we here evaluate the BS hedging scheme which is based
on computing the implied  volatility we  assume that the  Black Scholes  Vega, $\partial_\sigma Q^{(0)}$, is positive  in the domain of interest.
The problem of identifying  the implied volatility in the case of a small correction is then well posed, see below.
We remark that  for the European put and call options that we discuss below the  Vega is
positive, as shown by the following lemma proved in Appendix \ref{app:pos}.}
\begin{lemma}
 { The Black-Scholes  Vega,  $\partial_\sigma Q^{(0)}(t,x)$,  is  well defined and positive for $x> 0, t>0$   if  the payoff function $h: [0,\infty) \rightarrow  \RR$  is convex and not affine and of at most polynomial growth.}
\end{lemma}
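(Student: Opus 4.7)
The plan is to reduce the Vega to the Gamma via a classical identity, and then establish strict positivity of the Gamma using the full support of the log-normal distribution combined with the assumption that $h$ is convex and not affine.

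First I would establish the Vega-Gamma relation
\[
\partial_\sigma Q^{(0)}(t,x;\sigma) = \sigma(T-t)\, x^2 \partial_x^2 Q^{(0)}(t,x;\sigma).
\]
This follows from the observation that the Black-Scholes price admits the representation $Q^{(0)}(t,x;\sigma) = \mathbb{E}[h(x e^{\sigma\sqrt{T-t}Z - \sigma^2(T-t)/2})]$ with $Z \sim \mathcal{N}(0,1)$, so $Q^{(0)}$ depends on $(\sigma,t)$ only through $\tau = \sigma^2(T-t)$. Writing $Q^{(0)} = v(x,\tau)$ gives $\partial_\sigma Q^{(0)} = -\frac{2(T-t)}{\sigma}\partial_t Q^{(0)}$, and substituting the Black-Scholes equation $\partial_t Q^{(0)} = -\frac{1}{2}\sigma^2 x^2\partial_x^2 Q^{(0)}$ yields the identity. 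Well-definedness of the derivatives for $x>0$ and $T-t>0$ follows from differentiation under the integral sign, justified by the smoothness and rapid decay of the log-normal density together with the polynomial growth assumption on $h$.

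Next I would prove strict positivity of the Gamma. For the $C^2$ case the matter is transparent: differentiating twice in $x$ under the expectation gives
\[
x^2 \partial_x^2 Q^{(0)}(t,x;\sigma) = \mathbb{E}\bigl[h''(X_T)\,X_T^2\bigr],
\]
with $X_T = x\exp(\sigma\sqrt{T-t}Z - \sigma^2(T-t)/2)$. Convexity of $h$ yields $h'' \geq 0$, while the ``not affine'' hypothesis guarantees an open interval $I\subset(0,\infty)$ on which $h''>0$; since $X_T$ has full support $(0,\infty)$, we have $\mathbb{P}(X_T\in I)>0$, which forces the expectation to be strictly positive.

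For the general case where $h$ is only assumed convex (and not necessarily $C^2$), I would mollify: take $h_n = h*\rho_n$ with a smooth non-negative compactly supported mollifier. Each $h_n$ is smooth, convex, satisfies a uniform polynomial growth bound, and converges locally uniformly to $h$; non-affineness can be preserved by adding an infinitesimal quadratic perturbation $\eta_n x^2$ with $\eta_n\downarrow 0$ if needed. Dominated convergence (using polynomial growth and the Gaussian tails of the log-normal density) gives $Q^{(0)}_{h_n}\to Q^{(0)}_{h}$, together with convergence of the $x$-derivatives since the derivatives of the transition density provide integrable weights. Alternatively, and more directly, one may integrate by parts twice against the log-normal density to express $x^2\partial_x^2 Q^{(0)}$ as the pairing of the non-negative Radon measure $h''(dy)$ (distributional second derivative of the convex function $h$) against a strictly positive kernel on $(0,\infty)$; the non-trivial mass of $h''$ on some open interval then immediately produces a positive value.

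The main obstacle is the second case: making sure the approximation passes to the limit in a way that preserves the strict sign, given that $h''$ can be a mere measure (as for a call payoff $(x-K)^+$ whose distributional second derivative is $\delta_K$). The cleanest route to avoid delicate limits is the integration-by-parts representation, reducing everything to pairing a non-negative measure against the explicit positive Gamma-kernel of the log-normal density. Combining this with Step 1 gives $\partial_\sigma Q^{(0)}(t,x;\sigma)>0$ for $x>0$ and $T-t>0$, which is the claim.
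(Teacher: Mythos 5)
Your proposal is correct and follows essentially the same route as the paper: both reduce the Vega to the Gamma via the identity $\partial_\sigma Q^{(0)} = \sigma(T-t)\, x^2\partial_x^2 Q^{(0)}$ and then conclude from the fact that $x\mapsto \EE\big[h(xe^{\sigma\sqrt{T-t}X})\big]$ is smooth and strictly convex when $h$ is convex, not affine, and of polynomial growth. The only cosmetic difference is that the paper obtains the identity by a Fourier-transform computation in log-coordinates rather than your scaling argument in $\tau=\sigma^2(T-t)$, and your treatment of non-smooth payoffs (pairing the measure $h''(dy)$ against the positive log-normal kernel) spells out a step the paper leaves implicit.
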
 

Using a similar technique as in the derivation 
of Lemma \ref{thm1}  and  Proposition \ref{prop:main2b2} we then find the following result.

\begin{lemma}
\label{thm1b}%
The cost for  the hedging scheme (BS), $E^{\rm BS}$,  satisfies  
\begin{equation}
   \label{def:hE1ab}
\lim_{\eps \to 0}  \eps^{-1/2} \sup_{t\in [0,T]}  
 \EE\left[ \big( E^{\rm BS}_t  - \hat{E}_t^{\rm BS}  \big)^2 \right]^{1/2}  = 0  ,
\end{equation} 
 where
\begin{equation}
\hat{E}_t^{\rm BS}
   \label{def:hE1b}
=   P(0,X_0)  
    + N^{(1)}_t
+   
 \eps^{1/2}  \rho  \N_t       ,
\end{equation}
$N^{(1)}_t$ is  the martingale defined  by (\ref{eq:Ndef1}), and 
$\N_t  $ is the martingale defined by 
\ba\label{eq:Ntdef}
\N_t   &= & \overline{D}  \int_0^t  \tilde{\cal H}_s(X_s)   \sigma^\eps_s dW_s^*  ,\\
    \tilde{\cal H}_s(x)   &=&  \frac{1}{ \overline{D}} \left(  \big( x \partial_x\big)   
    - \Big(  \frac{x \partial_x \partial_\sigma Q^{(0)}(s,x;\bar\sigma)}
  {\partial_\sigma Q^{(0)}(s,x;\bar\sigma)} \Big)   
 \right)   Q^{(1)}(s,x;\bar\sigma)  .
 \label{def:tildeH}
\ea
   \end{lemma}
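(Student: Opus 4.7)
The plan is to reduce the analysis of $E^{\rm BS}$ to the (HW) case already handled in Lemma \ref{lem:Ec}, by exploiting that both schemes replicate the same corrected price. Indeed, by the defining relation (\ref{def:impliedvol}), $V_t^{\rm BS} = Q^{(0)}(t, X_t; \sigma(t, X_t)) = P(t, X_t) = V_t^{\rm HW}$, so the two cost processes differ only through the choice of delta:
\begin{equation*}
E^{\rm BS}_t - E^{\rm HW}_t \;=\; \int_0^t \bigl( \delta^{\rm HW}(s, X_s) - \delta^{\rm BS}(s, X_s) \bigr)\, dX_s.
\end{equation*}
Combining this with Lemma \ref{lem:Ec}, the proof of (\ref{def:hE1ab})--(\ref{def:hE1b}) reduces to showing that the stochastic integral on the right equals $\sqrt{\eps}\rho\, \N_t + o(\sqrt{\eps})$ uniformly in $t \in [0, T]$ in the $L^2$ sense.

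The key ingredient is an asymptotic expansion of the implied volatility. Differentiating $P(t, x) = Q^{(0)}(t, x; \sigma(t, x))$ in $x$, I would write
\begin{equation*}
\delta^{\rm HW}(t,x) - \delta^{\rm BS}(t,x) \;=\; \partial_x P(t,x) - \partial_x Q^{(0)}(t,x;\sigma(t,x)) \;=\; \partial_\sigma Q^{(0)}(t,x;\sigma(t,x))\, \partial_x \sigma(t,x),
\end{equation*}
which is meaningful because the Black--Scholes Vega $\partial_\sigma Q^{(0)}$ is strictly positive on the relevant domain by the lemma stated just above. Since $P = Q^{(0)}(\bar\sigma) + \sqrt{\eps}\rho\, Q^{(1)}(\bar\sigma)$, Taylor-expanding $Q^{(0)}$ in the volatility argument about $\bar\sigma$ yields $\sigma(t, x) = \bar\sigma + \sqrt{\eps}\rho\, Q^{(1)}(t,x;\bar\sigma)/\partial_\sigma Q^{(0)}(t,x;\bar\sigma) + O(\eps)$, and differentiating in $x$ and substituting gives
\begin{equation*}
\delta^{\rm HW}(t,x) - \delta^{\rm BS}(t,x) \;=\; \sqrt{\eps}\rho \Bigl[ \partial_x Q^{(1)} - \frac{\partial_x \partial_\sigma Q^{(0)}}{\partial_\sigma Q^{(0)}}\, Q^{(1)} \Bigr](t, x; \bar\sigma) + O(\eps).
\end{equation*}

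Substituting $dX_s = X_s \sigma_s^\eps\, dW_s^*$, absorbing the factor $X_s$ into the bracket, and comparing with the definition (\ref{def:tildeH}) of $\tilde{\cal H}_s$, I recognize the leading-order stochastic integral as $\sqrt{\eps}\rho\, \overline{D} \int_0^t \tilde{\cal H}_s(X_s) \sigma_s^\eps\, dW_s^* = \sqrt{\eps}\rho\, \N_t$, exactly as required by (\ref{def:hE1b}). To upgrade the pointwise delta expansion into the uniform $L^2$ bound (\ref{def:hE1ab}), the plan is to apply It\^o's isometry together with boundedness of $F$ and its derivatives, moment bounds on $X_s$, and smoothness of $Q^{(0)}, Q^{(1)}$ and $1/\partial_\sigma Q^{(0)}$, in the same spirit as the estimates used in the proofs of Lemmas \ref{thm1} and \ref{lem:Ec}. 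The main obstacle will be controlling the quadratic Taylor remainder in the implied-volatility expansion uniformly: specifically, one must verify that $1/\partial_\sigma Q^{(0)}(t, X_t; \bar\sigma)$ admits sufficient $L^p$ integrability (i.e. that the Vega is bounded away from zero on the effective range of $X_t$), so that the $O(\eps)$ remainder in the integrand really produces an $o(\sqrt{\eps})$ contribution to the integrated cost after taking the $L^2$ norm.
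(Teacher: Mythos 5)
Your argument is correct and reaches the stated expansion, but it routes through a different reference scheme than the paper does. The paper expands $\delta^{\rm BS}$ around the historical delta $\delta^{\rm H}$ (by Taylor-expanding $\partial_x Q^{(0)}(t,x;\sigma)$ in $\sigma$ about $\bar\sigma$ and substituting the implied-volatility expansion, which avoids differentiating $\sigma(t,x)$ in $x$), and then invokes the expansion (\ref{def:hE1}) of $E^{\rm H}_t$ from Lemma \ref{thm1}: there the martingale $N^{(2)}_t$ already supplies the $(x\partial_x)Q^{(1)}$ half of $\tilde{\cal H}$, the extra stochastic integral coming from the delta difference supplies the other half, and the boundary terms $\pm\sqrt{\eps}\rho\, Q^{(1)}_t(X_t)$ cancel so that $Q^{(0)}_0(X_0)+\sqrt{\eps}\rho\, Q^{(1)}_0(X_0)$ reassembles into $P(0,X_0)$. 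You instead compare to $E^{\rm HW}$ via Lemma \ref{lem:Ec}, exploiting that both portfolios replicate $P(t,X_t)$, so the entire correction $\sqrt{\eps}\rho\,\N_t$ arises in one stroke from $\int_0^t(\delta^{\rm HW}-\delta^{\rm BS})\,dX_s$ with $\delta^{\rm HW}-\delta^{\rm BS}=\partial_\sigma Q^{(0)}\,\partial_x\sigma$; the quotient-rule computation indeed reproduces $\overline{D}\tilde{\cal H}_s(x)/x$ with the right sign. Your bookkeeping is arguably cleaner (no boundary-term cancellation to track), at the mild price of having to justify term-by-term $x$-differentiation of the implied-volatility expansion, i.e.\ that the $O(\eps)$ remainder has an $O(\eps)$ $x$-derivative; this is on the same level of rigor as the paper's $o(\sqrt{\eps})$ bookkeeping given the standing smoothness assumptions. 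Your closing concern about $L^p$ control of $1/\partial_\sigma Q^{(0)}$ is the genuinely delicate point, and the paper does not resolve it either: it simply assumes the Vega is positive on the domain of interest (justified for convex non-affine payoffs by the lemma in Appendix \ref{app:pos}) and works with smooth bounded payoffs throughout.
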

 \begin{proof}
 The implied volatility $\sigma(t,x)$ is such that
\ban
 Q^{(0)}(t,x;\sigma(t,x))  =  P(t,x) = Q^{(0)}(t,x;\bar\sigma) +
\sqrt{\eps}   \rho Q^{(1)}(t,x;\bar\sigma)   .
\ean

 {Note that with $\partial_\sigma Q^{(0)}$, the Black-Scholes  Vega, being continuous and not zero at 
$\bar{\sigma}$,
we have by the implicit function theorem:}
  \ban
  \sigma(t,x) - \bar\sigma     &=&  
   \frac{ \sqrt{\eps}   \rho Q^{(1)}(t,x;\bar\sigma)}{\partial_\sigma Q^{(0)}(t,x;\bar\sigma)}  +o(\sqrt{\eps})  .
 \ean
 The (BS) delta is: 
\ban
 \delta^{\rm BS}(t,x) &=& 
  \left. \left( \partial_x  Q^{(0)}(t,x;\sigma) \right)\right|_{ \sigma=\sigma(t,x)} \\
   &=&
    \left. \left( \partial_x  \big(  Q^{(0)}(t,x;\bar\sigma)  +  
  \partial_\sigma Q^{(0)}(t,x;\bar\sigma)
   (\sigma-\bar\sigma)\big) \right)\right|_{\sigma=\sigma(t,x)}   +o(\sqrt{\eps})   , 
    \ean
so that we can write: 
 \ban
 \delta^{\rm BS}(t,x)  = 
     \delta^{\rm H}(t,x)  +    \sqrt{\eps}   \rho Q^{(1)}(t,x;\bar\sigma)
  \left(  \frac{\partial_x \partial_\sigma Q^{(0)}(t,x;\bar\sigma)}
  {\partial_\sigma Q^{(0)}(t,x;\bar\sigma)} \right)  +o(\sqrt{\eps}) .
     \ean
  Then it follows from Eqs. (\ref{eq:bc}) and (\ref{def:hE1})  that the  cost is
\ban\label{eq:EBS}
  {E}_t^{\rm BS} 
&=&   P(t,X_t) - \int_0^t \delta^{\rm BS}(s,X_s) d X_s \\
&=& E^{\rm H}_t +\sqrt{\eps} \rho Q^{(1)}_t(X_t) - \sqrt{\eps}   \rho  \int_0^t   
  \left(  \frac{x \partial_x \partial_\sigma Q^{(0)}(s,X_s;\bar\sigma)}
  {\partial_\sigma Q^{(0)}(s,X_s;\bar\sigma)} \right)  
  Q^{(1)}(s,X_s;\bar\sigma)
    \sigma_s^\eps  dW^*_s    +o(\sqrt{\eps})  \\
&=&
P(0,X_0) 
+     N^{(1)}_t
+    \sqrt{\eps}    \rho N^{(2)}_t     \\ && \hbox{}
  -   \sqrt{\eps}   \rho  \int_0^t   
  \left(  \frac{x \partial_x \partial_\sigma Q^{(0)}(s,X_s;\bar\sigma)}
  {\partial_\sigma Q^{(0)}(s,X_s;\bar\sigma)} \right)  
  Q^{(1)}(s,X_s;\bar\sigma)
    \sigma_s^\eps  dW^*_s   +o(\sqrt{\eps})  \\ 
    \nn   &= &P(0,X_0)  +     N^{(1)}_t  +  \sqrt{\eps}    \rho  \N_t +o(\sqrt{\eps})  ,
\ean
with $\N_t  $ defined by (\ref{eq:Ntdef}).
 \end{proof}
 
  This lemma allows us to characterize the mean and variance of the cost of the (BS) hedging scheme.

 \begin{proposition}
  \label{prop:main2b2b}
  The mean and variance of the cost fluctuations satisfy
\ba
 \label{eq:cost0b}
&& \hspace*{-0.25in}
 \lim_{\eps \to 0} \EE\left[ \left( \eps^{-1/2}  \EE \big[
E^{\rm BS}_t-E^{\rm BS}_0   \mid {\cal F}_0
  \big]  \right)^2\right]^{1/2}=  0,   \\
&& \hspace*{-0.25in}
\lim_{\eps \to 0} 
\EE
\left[ \left| 
\eps^{-1}  {\rm Var} \big(
E^{\rm BS}_t-E^{\rm BS}_0   \mid {\cal F}_0
 \big)   - \tilde{\cal V}^{(1)}_t(X_0)-2 \tilde{\cal V}^{(2)}_t(X_0)- \tilde{\cal V}^{(3)}_t(X_0) 
\right|\right]=0, 
  \ea
with $E_0^{\rm BS}=P(0,X_0)$, 
\ba
\label{eq:int2b}
&& \tilde{\cal V}^{(1)}_t(x_0) =   \rho^2 \overline{D}^2 \bar{\sigma}^2 \int_\RR dz  p(z)  \int_0^t ds 
     \left(  
  \tilde{\cal H}_s(x_0 e^{\bar{\sigma} \sqrt{s} z -\bar{\sigma}^2 s/2} ) \right)^2   , 
   \label{eq:V1b} \\
&& \tilde{\cal V}^{(2)}_t(x_0) =    \rho^2 \overline{D}^2 
 \int_\RR dz  p(z)  \int_0^t ds  \, 
        \tilde{\cal H}_s(x_0 e^{\bar{\sigma} \sqrt{s} z -\bar{\sigma}^2 s/2} )
      \nn \\
  &&  \hspace*{0.65in}   \times
   \left(    x^2 \partial_x^2  
  Q_s^{(0)} (x_0 e^{\bar{\sigma} \sqrt{s} z -\bar{\sigma}^2 s/2}\right)  ,
  \label{eq:V2b} \\
   && \tilde{\cal V}^{(3)}_t(x_0) =     
 {\overline{\Gamma}^2} \int_\RR dz  p(z)  \int_0^t ds 
  \left(   x^2 \partial_x^2  
  Q_s^{(0)} (x_0 e^{\bar{\sigma} \sqrt{s} z -\bar{\sigma}^2 s/2}\right)^2   
 \label{eq:V3b}  ,
\ea
where  $\overline{\Gamma}$ is defined by (\ref{def:barGamma}) and $\tilde{\cal H}_s(x)$ is defined by (\ref{def:tildeH}).
\end{proposition}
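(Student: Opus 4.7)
The plan is to start from Lemma \ref{thm1b}, which gives the $L^2$ approximation
$E^{\rm BS}_t - E_0^{\rm BS} = N^{(1)}_t + \sqrt{\eps}\,\rho\, \widetilde{N}_t + o_{L^2}(\sqrt{\eps})$,
and then extract the mean and variance statements from this decomposition. For the mean, both $N^{(1)}$ and $\widetilde{N}$ are zero-mean $\mathcal{F}_t$-martingales starting from zero (the first is a stochastic integral against the martingale $\psi^\eps$, the second an It\^o integral against $W^*$). Hence $\EE[N^{(1)}_t + \sqrt{\eps}\rho \widetilde{N}_t \mid \mathcal{F}_0] = 0$ exactly, and the $o(\sqrt{\eps})$ remainder in~(\ref{def:hE1b}) gives~(\ref{eq:cost0b}) after taking $L^2$ norms in the conditional expectation.

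For the variance, I would expand
\[
 {\rm Var}(N^{(1)}_t + \sqrt{\eps}\rho \widetilde{N}_t \mid \mathcal{F}_0)
 = {\rm Var}(N^{(1)}_t \mid \mathcal{F}_0)
   + 2\sqrt{\eps}\rho\,{\rm Cov}(N^{(1)}_t, \widetilde{N}_t \mid \mathcal{F}_0)
   + \eps\rho^2 {\rm Var}(\widetilde{N}_t \mid \mathcal{F}_0),
\]
divide by $\eps$, and identify the three terms with $\tilde{\mathcal{V}}^{(3)}$, $2\tilde{\mathcal{V}}^{(2)}$ and $\tilde{\mathcal{V}}^{(1)}$ respectively. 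The third term is the cleanest: by It\^o isometry
$\eps\rho^2 {\rm Var}(\widetilde{N}_t) = \eps\rho^2 \overline{D}^2 \EE[\int_0^t \tilde{\mathcal{H}}_s(X_s)^2 (\sigma^\eps_s)^2 ds]$,
and replacing $(\sigma^\eps_s)^2$ by its invariant mean $\bar\sigma^2$ and $X_s$ by $\tilde X_s$ (the geometric Brownian motion at volatility $\bar\sigma$) via the averaging/dominated convergence tools of Lemmas \ref{lem:A2} and \ref{lem:EMn2} yields $\tilde{\mathcal{V}}^{(1)}_t(X_0)$. The variance of $N^{(1)}_t$ is treated exactly as in the proof of Proposition \ref{prop:main2b2}: the quadratic variation $d\langle\psi^\eps\rangle_s$ delivers the factor $\overline{\Gamma}^2$ defined by (\ref{def:barGamma}), and the same averaging argument converts the expectation of the integrand into $\tilde{\mathcal{V}}^{(3)}_t(X_0)$, i.e.\ the expression (\ref{def:V3}) of Proposition \ref{prop:main2b2}, which coincides with $\tilde{\mathcal{V}}^{(3)}_t$ after the change of variable $(x^2\partial_x^2)^2 Q^{(0)}$.

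The main obstacle is the cross-term. Since $dN^{(1)}_s = (x^2\partial_x^2)Q^{(0)}_s(X_s)\,d\psi^\eps_s$ and $d\widetilde{N}_s = \overline{D}\,\tilde{\mathcal{H}}_s(X_s)\sigma^\eps_s\,dW^*_s$, the predictable covariation is
\[
 d\langle N^{(1)},\widetilde{N}\rangle_s
 = \overline{D}\,(x^2\partial_x^2)Q^{(0)}_s(X_s)\,\tilde{\mathcal{H}}_s(X_s)\,\sigma^\eps_s\,d\langle\psi^\eps,W^*\rangle_s
 = \rho\,\overline{D}\,(x^2\partial_x^2)Q^{(0)}_s(X_s)\,\tilde{\mathcal{H}}_s(X_s)\,\sigma^\eps_s\,\vartheta^\eps_s\,ds,
\]
by (\ref{def:varthetaeps}) and the decomposition (\ref{eq:corr}). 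The bound (\ref{eq:estimvarthetaeps}) shows $\vartheta^\eps_s = O(\sqrt{\eps})$, so $2\sqrt{\eps}\rho\,{\rm Cov}(N^{(1)}_t,\widetilde{N}_t)/\eps$ is $O(1)$. The nontrivial step is identifying the limit: one must show that $\sigma^\eps_s\vartheta^\eps_s/\sqrt{\eps}$ averages out in $L^1$ to the deterministic constant $\overline{D}$ (this is precisely the ergodic/mixing statement underlying the appearance of $\overline{D}$ in Proposition \ref{prop:main}, and is the content of Lemma \ref{lem:EMn2} applied to the relevant integrand). Combining this with the averaging of $X_s \to \tilde X_s$ produces
$2\rho^2\overline{D}^2 \int p(z)dz\int_0^t \tilde{\mathcal{H}}_s(\cdot)(x^2\partial_x^2)Q^{(0)}_s(\cdot)\,ds = 2\tilde{\mathcal{V}}^{(2)}_t(X_0)$. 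Finally, the $o(\sqrt{\eps})$ remainder in Lemma \ref{thm1b} contributes only $o(\eps)$ to the variance (by Cauchy--Schwarz and the $O(\sqrt{\eps})$ $L^2$-norm of $N^{(1)}_t + \sqrt{\eps}\rho\widetilde{N}_t$), which completes the verification of (\ref{eq:cost0b}).
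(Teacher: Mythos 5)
Your proposal is correct and follows essentially the route the paper intends: the paper states Proposition \ref{prop:main2b2b} without an explicit proof, deferring to Lemma \ref{thm1b} and "a similar technique as in \dots\ Proposition \ref{prop:main2b2}", which is precisely your three-term variance expansion combined with the effective-market lemmas (It\^o isometry plus the averaging statements (\ref{a31})--(\ref{a32}) and Lemma \ref{lem:EMn2}). Your handling of the cross term via the predictable covariation $d\langle\psi^\eps,W^*\rangle_s=\rho\vartheta^\eps_s\,ds$ and the identification $\eps^{-1/2}\sigma^\eps_s\vartheta^\eps_s\to\overline{D}$ is the correct adaptation of the argument to the martingale--martingale setting, and the remainder estimate via Cauchy--Schwarz is as required.
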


\subsection{Hedging Cost with a Modified (H) Hedging Strategy} 
\label{sec:mH}     

To facilitate comparison of  the schemes  at early exercise times 
we here consider the hedging scheme (H) using the delta at the Black-Scholes price 
at the effective volatility, $\delta^{\rm H}$,  however,  
modified in that the portfolio value is chosen to be the corrected price
$P(t,x)$ rather than the price $Q^{(0)}_t(x)$ at the effective volatility.   
We label this scheme ($\H$). 

Note that using Eq. (\ref{def:hE1})  we can write that the accumulated asymptotic 
hedging cost until time $t$  has the form:
\begin{equation}\label{eq:EH}
{E}_t^{\H} 
=   P(t,X_t)   - \int_0^t \delta^{\rm H}(s,X_s) dX_s  
=  P(0,X_0) 
    + N^{(1)}_t  +    
 \eps^{1/2}    \rho N^{(2)}_t    +o( \eps^{1/2})   .
\end{equation}
We then find that the hedging cost is characterized by
Lemma \ref{thm1b} and Proposition \ref{prop:main2b2b} upon the replacements:
$\N  \mapsto  N^{(2)} $ and $\overline{D} \tilde{\cal H}_t(x)  \mapsto     \big( x \partial_x\big) Q_t^{(1)}  (x) $.

\section{On Estimation of Effective Market Parameters}
\label{sec:est}

For the above results to be useful we must be able to 
estimate the three  market  parameters  discussed in Section 
\ref{sec:sum}
\ba\label{eq:pareff}
\bar\sigma, \quad \quad D=\sqrt{\eps}\rho \overline{D}, \quad\quad  \Gamma =  \sqrt{\eps} \, \overline{\Gamma} .
\ea 

We refer to $D=\sqrt{\eps}\rho \overline{D}$ as an  {\it effective pricing parameter}  with the price correction being scaled by this parameter. 
 The effective pricing parameter can together with the effective or historical
 volatility, $\bar\sigma$,  be calibrated from observation of vanilla option prices
 and the associated implied volatility skew.
    
   The parameter $\Gamma=\sqrt{\eps} \overline{\Gamma}$  is a  {\it hedging risk parameter}
 and the magnitude of   Vega risk martingale $N^{(1)}$ scales with this parameter. 
The hedging cost parameter can be calibrated from historical data.  Indeed,  by constructing the (HW) 
hedge for instance and 
recording  the accumulated cost over times $t_i, i=0,\ldots,n$ say, we will have an estimate
of the martingale  $N^{(1)}$ at these times from which the parameter 
$\sqrt{\eps} \overline{\Gamma}$ can be estimated via a least squares procedure that fits the empirical variance of the martingale  $N^{(1)}$
with the formula (\ref{def:V3})-(\ref{eq:V3}) in which only $ {\eps} \overline{\Gamma}^2$ is   unknown.
Then this ``historical'' hedging risk  parameter estimate can  
be used to project future hedging cost (mean and variance), thus,  the theory provides a bridge from 
historical to future hedging cost.  

 {In more complex market situations and modeling,  incorporating
for instance (random)  market  price of volatility risk and interest rate,  there 
will be additional parameters to estimate. 
The parameter $D$ can, however, be calibrated from 
the observed  volatility skew, even with a non-zero
market price of risk, see \cite{fouque00,fouque11} where 
  calibration based on the implied volatility skew is  discussed in detail.
The  historical volatility $\bar\sigma$  can be calibrated from historical 
    observations of the underlying price, while  a corrected effective volatility $\sigma^\star$
    can be calibrated from the implied volatility skew, and then the  difference of these volatility measures  leads to an estimate 
    of the market price of volatility risk, see \cite[Chapter 5]{fouque11}  for  details.   
    In \cite{25}  a data  calibration is carried out  and there
    a fast volatility factor on the scale of a few days was identified and effective parameters were estimated. 
    We stress  that the asymptotic regime  we consider here is one where the time  to maturity
    is large compared to the time scale of the volatility factor. Thus, we do not consider in this paper short time 
    to maturity asymptotics where the limit of small time to maturity is considered while other parameters are kept 
    fixed. One important consequence of our modeling and regime is that 
    the form of the price corrections and the hedging approach do not depend on the Hurst parameter
    in the rough case with $H\leq1/2$, the expressions  are in fact  ``universal'' as a consequence
    of the assumption of a fast mean reverting volatility factor.  The relevance of this regime
    and corroboration of the asymptotic results can be found in \cite{funa} which reports such a 
    universality based on numerical simulations.
   In \cite{funa} the authors find that the ``Hurst index under fractional volatility has a
crucial impact on option prices when the maturity is short and speed of mean reversion
is slow. On the contrary, the impact of the Hurst index on option prices reduces for
long-dated options'', and indeed it is the regime of long maturity horizons  that is considered here. 
   In the  numerical simulations in Section  \ref{sec:sim} we explore further the robustness of the  results with respect 
      to the assumption of fast mean reversion and indeed find that the (BS) hedging scheme presented here
      is robust with respect to the assumption  of fast mean reversion.   }

\section{Effective Market Parameters  Deriving from ExpfOU}
\label{sec:expou}

We discuss here the exponential fractional Ornstein-Uhlenbeck process
or ExpfOU model.
We then define the volatility by  $\sigma_t^\eps = F(Z_t^\eps)$ with 
\begin{equation}
   F(z) = \bar\sigma \exp \Big(  \frac{\omega z}{\sigma_{\rm z}}  - \omega^2\Big) ,
\end{equation}
which is such that $\left< F^2\right>  =\bar\sigma^2$.
Here, $\omega >0$ is a fluctuation parameter that measures the typical amplitude of the relative fluctuations
of the volatility:
\ban
    \frac{\left< F^4\right> -   \left< F^2\right>^2}{ \left< F^2\right>^2}   =  e^{4\omega^2} -1 .
 \ean
We introduce two parameters that summarize the information 
contained in ${\cal K}$ as defined in  (\ref{def:Keps}) (and the function ${\cal C}_{\cal K}$ defined in terms of ${\cal K}$ by (\ref{Cdef})):
\begin{eqnarray}
   \alpha &=&  \frac{\overline{D}}{\bar\sigma^3}  =
   \omega  e^{-  \frac{\omega^2}{2}} \int_0^\infty e^{2 \omega^2 {\cal C}_{\cal K}(s,0)} {\cal K}(s) ds 
  ,
\\
   \beta &=&    \frac{\overline{\Gamma}}{\bar\sigma^2}    =   
   \left(
     \omega^2   \int_0^\infty \int_0^\infty e^{4 \omega^2 {\cal C}_{\cal K}(s,s')}
      {\cal K}(s) {\cal K}(s') ds ds'  \right)^{1/2}
   .
 \end{eqnarray}
 These two parameters (with $\bar{\sigma}$) are necessary and sufficient to compute the 
 corrected price and hedging cost.
 In the  case of  a  ``classic'' ExpOU model with 
   ${\mathcal K}(t) = \sqrt{2}\exp(-t)$ they are given explicitly by:
 \ban
  \alpha =   e^{-\omega^2/2} \frac{  e^{2\omega^2 }  -   1 }{\sqrt{2} \omega }  , \quad \quad
 \beta  =  
\sqrt{\frac{1}{2} {E}_1(4\omega^2) -\frac{\gamma}{2} - \ln (2\omega) } ,
 \ean
 with $E_1(z)=\int_z^\infty \frac{e^{-t}}{t} dt$ the exponential integral function and $\gamma \simeq 0.577$ the Euler constant.
 We plot $\alpha$  and $\beta$ as function of  $\omega$ in the ExpOU case  in  Figure  \ref{fig_ab}.   
  Note that $\alpha/\beta  \leq 1 $ is nearly independent of $\omega$ and approximately
 equal to $1$ for $\omega \leq 1$.
   
 \begin{figure}
\begin{center}
\begin{tabular}{c}
\includegraphics[width=6.8cm]{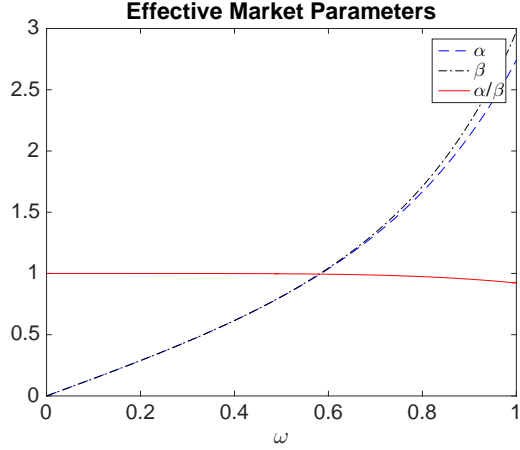}
\end{tabular}
\end{center}
\caption{      \label{fig_ab}
{Market parameters $\alpha$, $\beta$ in ExpOU case.}
}
\end{figure}

\section{Hedging Cost Statistics for European Call Options}\label{sec:call}
 
\subsection{The Call Price and its Delta and their Corrected Versions} 

 {
In Figure \ref{fig_relQ1} we show the normalized 
call price correction   $ \bar{\sigma}^2 Q^{(1)}_0/( K \overline{D})$  
and in Figure \ref{fig_bs} we show the Black-Scholes price relative to strike $Q^{(0)}_0/K$  for 
comparison. Note that for small maturities and moneyness the mean correction
is more important. 
Figure \ref{fig_relQ1b} corresponds to Figure \ref{fig_relQ1} only that we plot  the call price correction in terms
of a normalized implied volatility correction. 
 In Figure \ref{fig_deltabs} we show the  delta for the Black-Scholes 
price and in Figure \ref{fig_deltaQ1} we show the delta for the
normalized price correction.  If we assume a negative leverage parameter $\rho$
then  for short maturities and around the money the Black-Scholes delta  at the effective volatility  gives an {\it underhedged } situation in that the 
delta 
associated with the price correction is positive. We also see that for short maturities
and moneyness the Black-Scholes delta  gives an {\it overhedged } situation. 
  }

\begin{figure}
\begin{center}
\begin{tabular}{c}
\includegraphics[width=8.4cm]{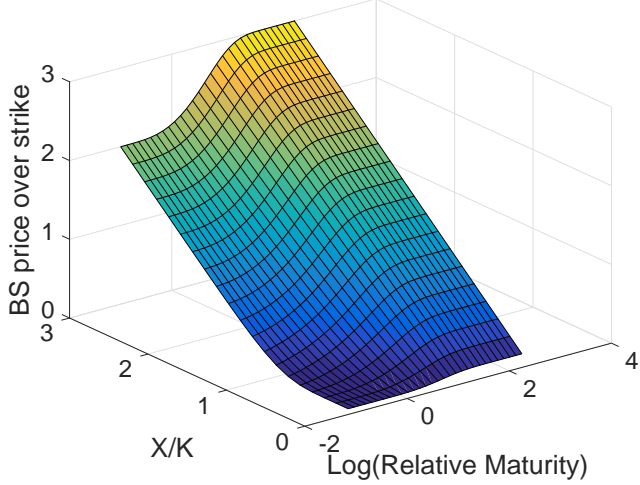}
\end{tabular}
\end{center}
\caption{
The figure shows the European call option price relative to strike: $Q_0^{(0)}/K$.
It is plotted as a function of {\rm Log relative maturity}, 
$\log_{10}(\tau) = \log_{10}(T\bar{\sigma}^2)$,  and 
moneyness,  $m=X_0/K$.  For short maturities we see the call payoff while there is a transition
regime to the large maturity limit,  the identity,   for relative maturity roughly around unity. 
    \label{fig_bs}
}
\end{figure}
 \begin{figure}
\begin{center}
\begin{tabular}{c}
\includegraphics[width=8.4cm]{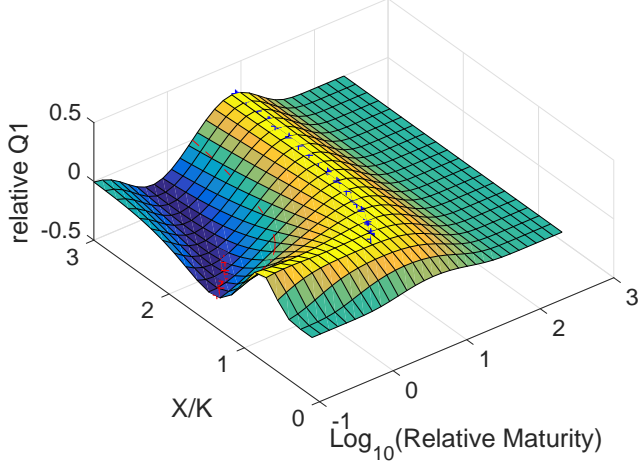}
\end{tabular}
\end{center}
\caption{ {
The figure shows the  normalized call price correction for the European call option: 
$\bar{\sigma}^2 {Q}^{(1)}_0/(K \overline{D})=-d_{_{-}}\exp(-d_{_{-}}^2/2)/\sqrt{2\pi}$.
It is plotted as a function of {\rm Log relative maturity}, 
$\log_{10}(\tau)= \log_{10}(T\bar{\sigma}^2)$,  and 
moneyness,  $X_0/K$ relative to the same domain as in Figure \ref{fig_bs}.  
We see that the correction is  large in the price transition zone
and that its maximal value is rather insensitive to the moneyness. 
We see moreover that when the time to maturity $T$ is large  relative to the diffusion time
$\bar\sigma^{-2}$ then the correction plays a minor role. 
 The red dashed line corresponds to $d_{_{-}}=0$, or $\tau =  2 \ln(m)$,  so that 
${Q}^{(1)}_0=0$ (with $m=X_0/K$). The blue and red crosses are asymptotic approximations, in $\ln(m)$,  for the partial derivative 
of $Q^{(1)}_0$ with respect to  maturity being zero. The blue crosses in the figure are 
$\tau=4+ 4\ln(m)$,
the red crosses are $\tau=\ln^2(m)$. }
    \label{fig_relQ1}
}
\end{figure}
\begin{figure}
\begin{center}
\begin{tabular}{c}
\includegraphics[width=8.4cm]{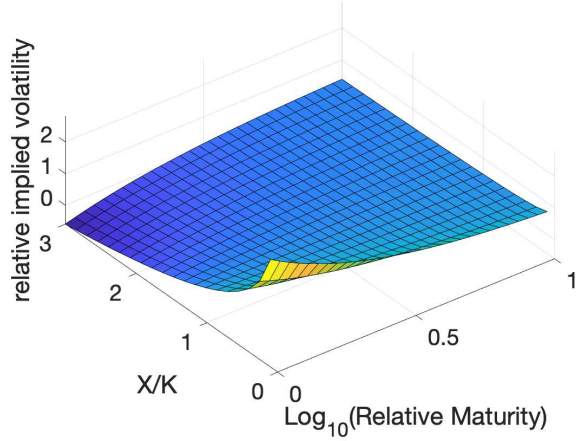}
\end{tabular}
\end{center}
\caption{  {
The figure plots  the call price  correction  as in the previous  figure however measured in terms of a relative implied volatility correction.
That is,  let $\bar{\sigma}+\Delta \sigma$ be the implied volatility associated with the price  correction, then the figure
plots $( \Delta \sigma  /  \bar{\sigma}    )( \bar{\sigma}^2/ (\sqrt{\eps} \rho \bar{D}  ) )=-d_{_-}/\sqrt{\tau}$.    }   \label{fig_relQ1b}
}
\end{figure}
   \begin{figure}
\begin{center}
\begin{tabular}{c}
\includegraphics[width=8.4cm]{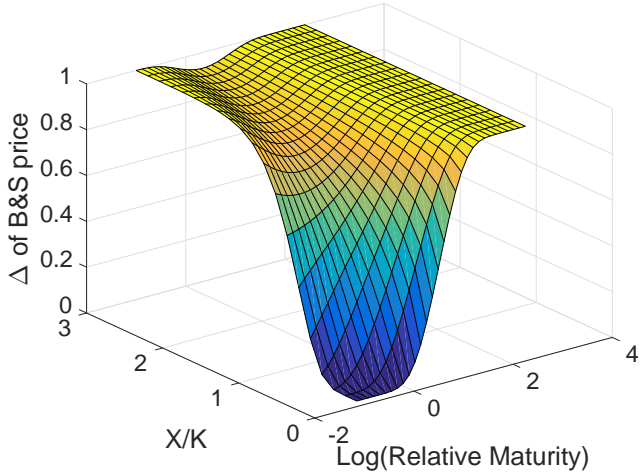}
\end{tabular}
\end{center}
\caption{   { The figure shows  the Black-Scholes delta at the effective volatility, that is 
$\partial_x Q_0^{(0)}$. 
It is plotted as a function of Log relative maturity, $\log_{10}(T\bar{\sigma}^2)$,  and 
moneyness,  $X_0/K$.   
For large moneyness  or maturity  
 this quantity  is close to unity corresponding to holding approximately a unit of the underlying
in the replicating portfolio, while for  small moneyness and maturity  this quantity  is close to zero corresponding to holding only cash.
By comparing with  Figure \ref{fig_relQ1}  it is seen that the price correction is small when approximately a unit of the underlying is held  in  the portfolio.  }
    \label{fig_deltabs}
}
\end{figure}
 \begin{figure}
\begin{center}
\begin{tabular}{c}
\includegraphics[width=8.4cm]{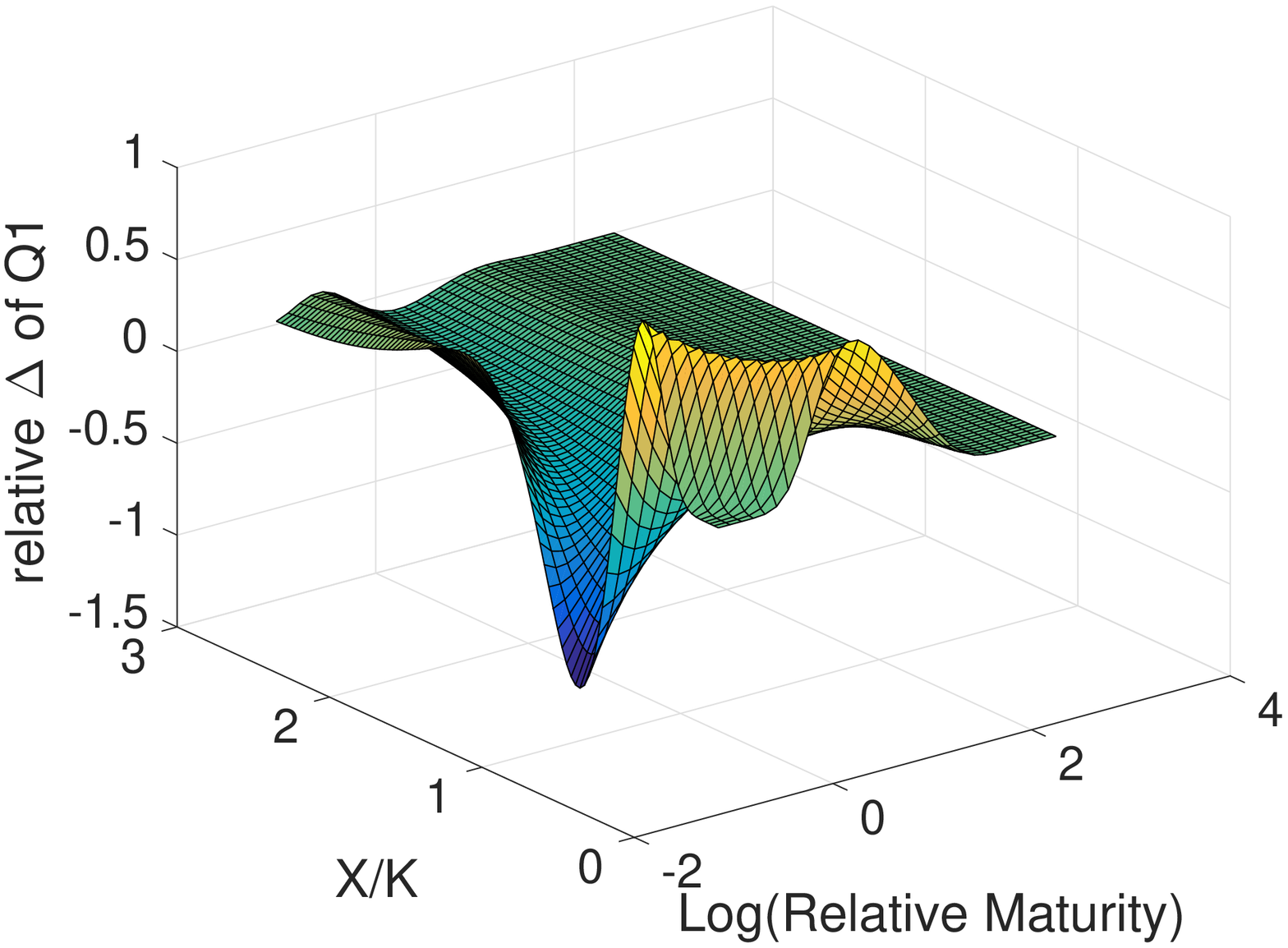}
\end{tabular}
\end{center}
\caption{    { The figure shows  the  delta  of the correction, that is 
$\bar{\sigma}^2  \partial_x Q^{(1)}_0 / \overline{D} $. 
It is plotted as a function of {Log relative maturity}, $\log_{10}(T\bar{\sigma}^2)$,  and 
moneyness,  $X_0/K$.    
We remark that far out of the money and for small maturities, moreover,  with $\rho<0$,  the correction to the price
gives a negative correction to the number of underlyings held in the portfiolio in the (HW) case.   
These  corrections   to the  portfolio weights will be of order $O(\sqrt{\eps})$ in our regime.  \newline \newline }
%
     \label{fig_deltaQ1}
}
\end{figure}

\subsection{Call Hedging Risk}


In Proposition \ref{prop:sum}  we gave the expressions of the means and variances of 
the hedging costs in the case of a general payoff. 
The explicit expressions for the normalized functions $g,v, w^{\rm C}$,
for ${\rm C=H, BS, \H}$, follow from the
propositions in Section \ref{sec:hedging}.
Here we consider the situation with a European call.
Then we can use the results in Appendix \ref{app:mom}, Eqs.~(\ref{eq:greek1}-\ref{eq:4th}),  to get explicit expressions 
for the  normalized  functions $g,v, w^{\rm C}$. 
 
 \begin{proposition}
 \label{prop:sum2}
In  the case of a European call option $h(x)=(x-K)^+$  and using  the  notation  in Proposition \ref{prop:sum},
the  normalized  functions  $g$, $v$, $w^{\rm C}$ depend on $d_{_-}$ and $\theta=t/T$ only:
\ban
  &&  \frac{g(d_{_{-}})}{K}    =    -  \frac{   d_{_{-}}  \exp(-d_{_{-}}^2/2) }{\sqrt{2\pi} }  , \\   
  &&  \frac{v(\theta;d_{_{-}})}{K^2} =     \frac{1}{2\pi}
   \int_0^\theta  \exp\Big( {-\frac{d_{_{-}}^2}{1+s}} \Big)   \frac{ds}{\sqrt{1-s^2}}   ,  \\
  &&  \frac{ w^{\rm H}(\theta;d_{_{-}})}{K^2}   = 
    \frac{ 1 }{ \pi } 
 \int_0^\theta    \exp\Big(- \frac{d_{_{-}}^2}{1+s}\Big)  
   \frac{(\theta-s)}{(1-s)^2}   \big[
        2 f_4(s,d_{_{-}})  -  f_0(s)
      \big] ds
           -     \theta^2   \frac{   d_{_{-}}^2  \exp(-d_{_{-}}^2) }{{2\pi} }   , \\   
           &&   w^{\rm BS}(\theta;d_{_{-}})   = -  v(\theta; d_{_-})  , \\
   && \frac{ w^{\rm \H}(\theta;d_{_{-}})}{K^2}   =    \frac{1}{2 \pi } 
 \int_0^\theta \exp\Big(- \frac{d_{_{-}}^2}{1+s}\Big)  \frac{1}{ (1-s) }   
\big[ f_4(s,d_{_{-}})   - f_0(s)\big] ds
     ,
\ean
with $f_j, j=0,2,4$ defined in  Proposition \ref{lem:gauss} and $d_{_-},\tau$ defined by (\ref{eq:bsparameter0}).
\end{proposition}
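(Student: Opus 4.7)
The plan is to specialize the general expressions for the mean and variance of the hedging cost contained in Propositions \ref{prop:main2}, \ref{prop:main2b2}, \ref{prop:c}, and \ref{prop:main2b2b} (together with the $(\tilde{\rm H})$ formulas in Section~\ref{sec:mH}) to the European call payoff $h(x)=(x-K)^+$, and to evaluate the resulting Gaussian integrals using the identities collected in Appendix~\ref{app:mom} (equations (\ref{eq:greek1})--(\ref{eq:4th})). The result is a matter of explicit Black--Scholes bookkeeping: nothing conceptually new is required beyond what is already proved in Section~\ref{sec:hedging}.

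First, for the call I would write all relevant greeks appearing in the integrands in closed form, all of which reduce to polynomials in $d_{_-}$ times $\phi(d_{_-})$ (or $\phi(d_+)$) up to factors of $\bar\sigma\sqrt{T-s}$. Concretely, the BS gamma $x^2\partial_x^2 Q_s^{(0)}(x)$ equals $x\phi(d_+)/[\bar\sigma\sqrt{T-s}]$; one and two applications of $x\partial_x$ produce the speed-type greeks that enter $\mathcal{V}^{(1)}$, $\mathcal{V}^{(2)}$ of Proposition~\ref{prop:main2b2}; and the price correction $Q^{(1)}(s,x;\bar\sigma)$ together with $\partial_x Q^{(1)}$, $\partial_\sigma Q^{(0)}$, $\partial_x\partial_\sigma Q^{(0)}$ all collapse to explicit expressions in $d_{_-}$ via (\ref{def:Q1t}) and standard identities. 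The normalized mean $g/K$ then follows immediately from Proposition~\ref{prop:main2}: using $(x\partial_x)(x^2\partial_x^2) Q^{(0)}_0(X_0)$ for the call one reads off $g(d_{_-})/K = -d_{_-}\exp(-d_{_-}^2/2)/\sqrt{2\pi}$.

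Next, for the variances, I would insert these greek identities into the formulas (\ref{eq:V1})--(\ref{eq:V3}) (and their tilde-versions for the (BS) scheme), substitute the evaluation point $x = X_0 \exp(\bar\sigma\sqrt{s}\,z - \bar\sigma^2 s/2)$, and carry out the $z$-integration. Under this substitution the values of $d_\pm$ at time $s$ become affine in $z$, so each integrand becomes a Gaussian in $z$ times a polynomial in $d_{_-}^{(s)}$. The integrals (\ref{eq:greek1})--(\ref{eq:4th}) from Appendix~\ref{app:mom} deliver these as the functions $f_0, f_2, f_4$ introduced there, producing the factor $\exp(-d_{_-}^2/(1+u))/\sqrt{1-u^2}$ after the change of variables $u = s/T$ and yielding $\theta = t/T$ as the upper limit. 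This immediately produces the claimed forms of $v$ and $w^{\rm H}$, and a completely analogous substitution in the $(\tilde{\rm H})$ version (with $\overline{D}\tilde{\mathcal{H}}_s \mapsto (x\partial_x)Q_s^{(1)}$) yields $w^{\rm \H}$.

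The identity $w^{\rm BS}(\theta;d_{_-}) = -v(\theta;d_{_-})$ is the most delicate point. I would attack it directly via Proposition~\ref{prop:main2b2b}: the function $\tilde{\mathcal{H}}_s$ in (\ref{def:tildeH}) is designed so that, for the call, the logarithmic derivative $x\partial_x\partial_\sigma Q^{(0)}/\partial_\sigma Q^{(0)}$ equals $-d_{_-}/(\bar\sigma\sqrt{T-s})\cdot \partial_x(\ln x - \ln K)$ (up to explicit terms), which, when combined with the call formula for $Q^{(1)}$, makes $\overline{D}\tilde{\mathcal{H}}_s$ proportional to $x^2\partial_x^2 Q_s^{(0)}$ times $-\rho\overline{D}/\bar\sigma$. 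The resulting integrands in $\tilde{\mathcal{V}}^{(1)}$ and $\tilde{\mathcal{V}}^{(2)}$ therefore collapse onto the same Gaussian template used for $v$, producing the cancellation $\tilde{\mathcal{V}}^{(1)} + 2\tilde{\mathcal{V}}^{(2)} = -(\rho^2\overline{D}^2/\bar\sigma^4)\, v$. As a consistency check this also matches (\ref{eq:main}) at $\theta=1$.

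The main obstacle is the bookkeeping for $w^{\rm H}$: one has ${\cal V}^{(1)}$, ${\cal V}^{(2)}$, and the square of the mean contribution (from the last term of (\ref{eq:V1})) all producing different polynomial degrees in $d_{_-}$, and matching them to the $f_0$ and $f_4$ pieces together with the $(\theta-s)/(1-s)^2$ weighting requires keeping careful track of the $\bar\sigma\sqrt{T-s}$ denominators that arise from the speed-type BS greeks and of the algebraic identity $x^2\partial_x^2 = (x\partial_x)^2 - x\partial_x$ used inside the time integral. Beyond this computation, everything reduces to substituting into formulas already proved.
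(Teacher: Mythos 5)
Your proposal follows essentially the same route as the paper, which itself only sketches the proof as "specialize the propositions of Section \ref{sec:hedging} to the call and evaluate via the Greek identities and Gaussian integrals of Appendix \ref{app:mom}"; in particular your key observation that $\overline{D}\tilde{\cal H}_s$ collapses to a multiple of $x^2\partial_x^2 Q_s^{(0)}$ (the correct factor is $-\overline{D}/\bar\sigma^2$, a minor slip in your parenthetical) is exactly what yields $\tilde{\cal V}^{(1)}+2\tilde{\cal V}^{(2)}=-(\rho^2\overline{D}^2/\bar\sigma^4)v$ and hence $w^{\rm BS}=-v$. The remaining computations for $g$, $v$, $w^{\rm H}$ and $w^{\tilde{\rm H}}$ are the intended bookkeeping with $f_0,f_2,f_4$, so the proof is correct.
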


It then  follows that, as $\eps \to 0$,
\ban
{\rm Var} \big( Y^{\rm BS}_t \mid {\cal F}_0 \big) =  \eps \left( \frac{\overline{\Gamma}^2}{\bar\sigma^2} 
 -  \frac{\rho^2 \overline{D}^2}{\bar\sigma^4} \right) v(\theta;d_{_-})  =  \left( \frac{\Gamma^2}{\bar\sigma^2} 
 -  \frac{D^2}{\bar\sigma^4} \right) v(\theta;d_{_-})  
 , 
\ean
which gives Eq. (\ref{eq:main}). 
In Figure \ref{fig_AMV} we plot $v$ as a function of normalized maturity and moneyness.
We see that $v$  is  large for  large exercise times 
 and small values of~$d_{_{-}}$. 
In Figures \ref{fig_AH} and \ref{fig_AtH} we show respectively 
$w^{\rm H}$ and $w^{\rm \H}$.   In the regime  of large exercise times  and 
small values of $d_{_{-}}$ these schemes offer a slight advantage relative to the
(HW) scheme in terms of cost variance.  
Note that at maturity the two schemes (H) and ($\H$) have the same cost.
Recall, however, that for the scheme (H) it is assumed
that the option can be traded at the price $Q^{(0)}$ so the schemes cannot be 
compared directly other than at maturity when  $Q^{(1)}_T=0$. 
In Figure \ref{fig_Amean} we show the function $g/K$ which describes the 
coherent cost correction as a function of $d_{_{-}}$, we see that this correction 
is maximal for $d_{_{-}}$ around unity.  
  
\begin{figure}
\begin{center}
\begin{tabular}{c}
\includegraphics[width=8.4cm]{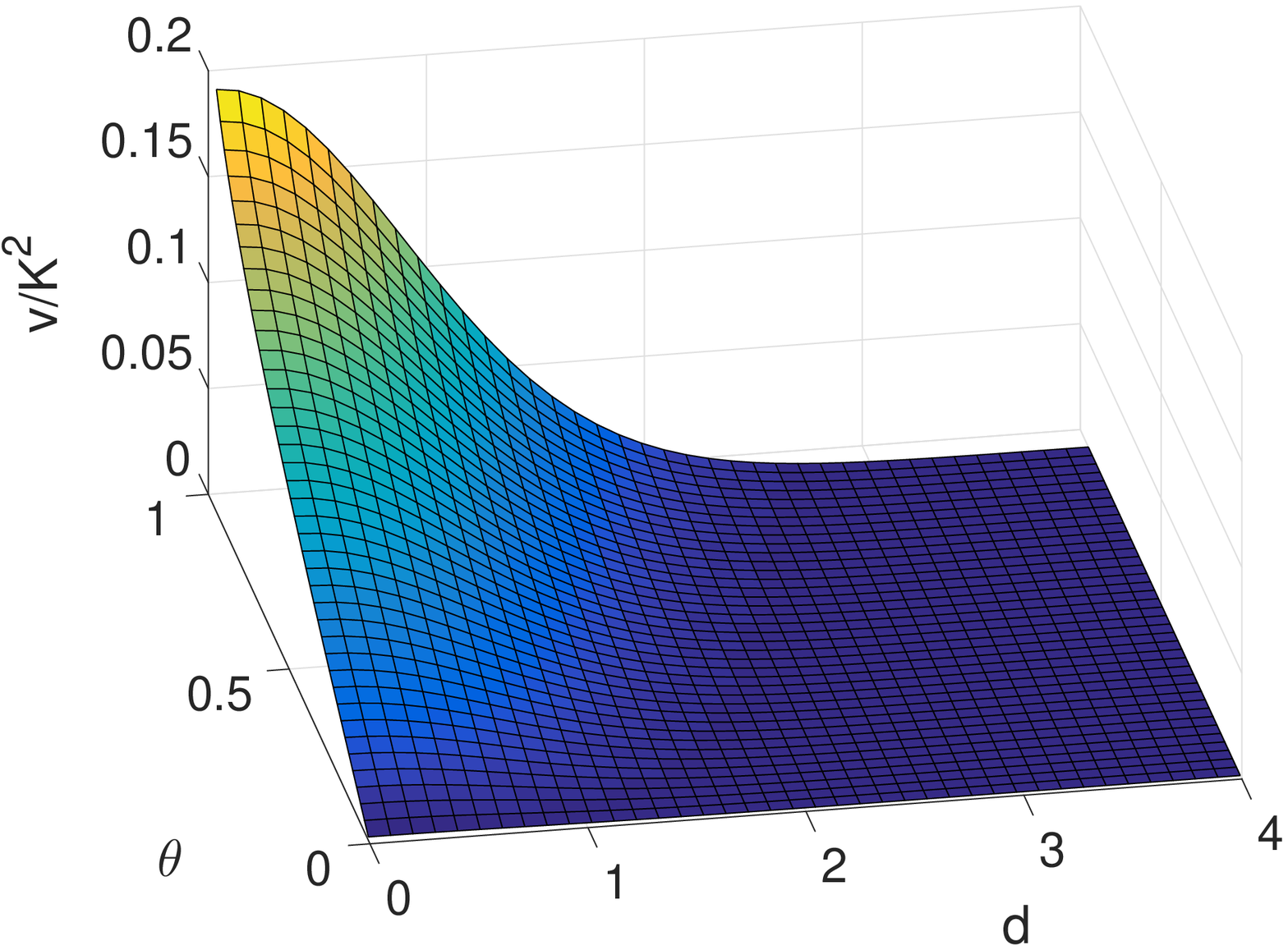}
\end{tabular}
\end{center}
\caption{The figure shows  the hedging cost variance function $v(\theta;d_{_{-}})/K^2$.   
    \label{fig_AMV}
}
\end{figure}

\begin{figure}
\begin{center}
\begin{tabular}{c}
\includegraphics[width=8.4cm]{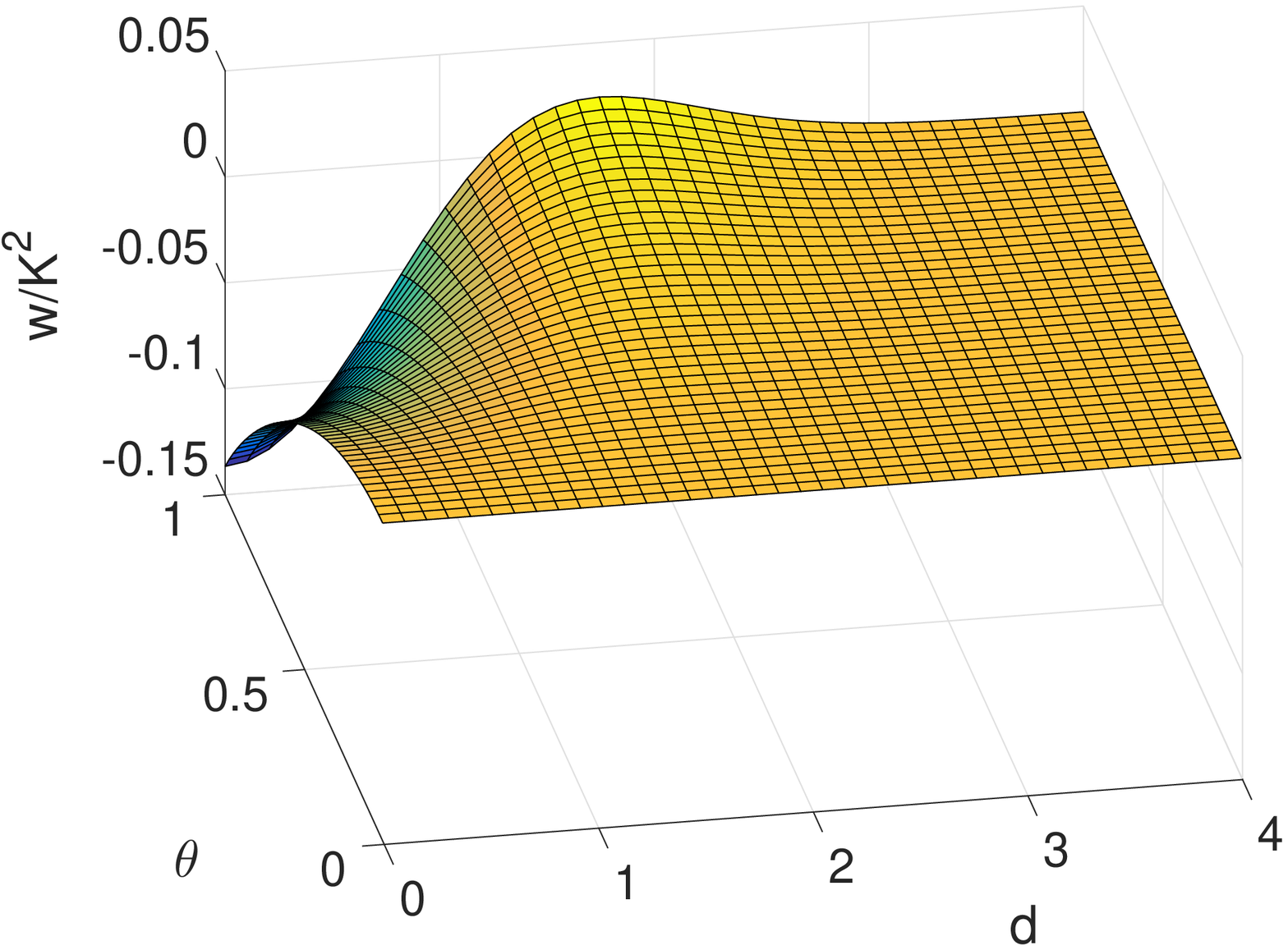}
\end{tabular}
\end{center}
\caption{The figure shows  the hedging cost variance function $w^{\rm H}(\theta;d_{_{-}})/K^2$.
    \label{fig_AH}
}
\end{figure}

\begin{figure}
\begin{center}
\begin{tabular}{c}
\includegraphics[width=8.4cm]{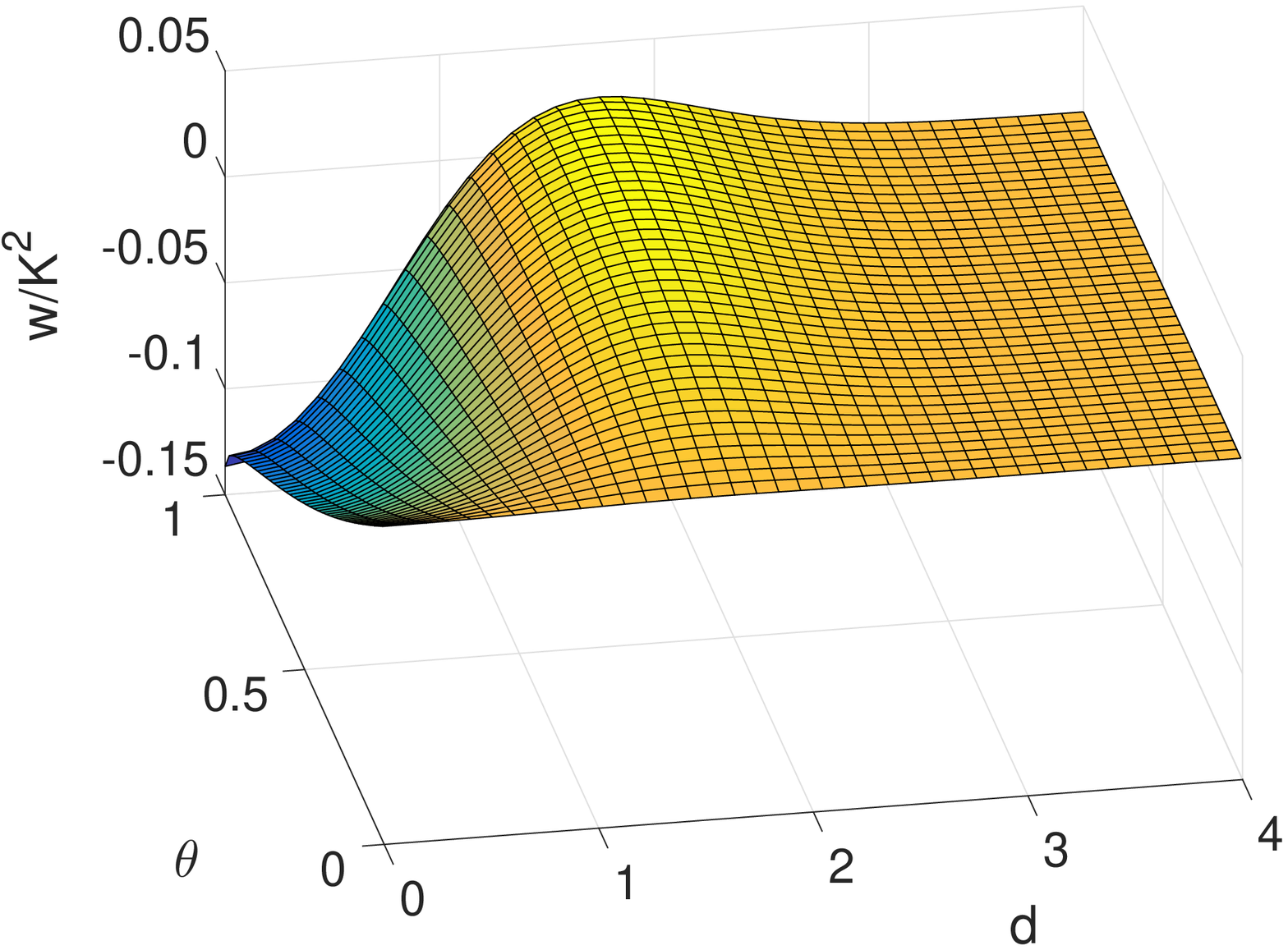}
\end{tabular}
\end{center}
\caption{The figure shows  the hedging cost variance function $w^{\rm \H}(\theta;d_{_{-}})/K^2$.   
    \label{fig_AtH}
}
\end{figure}

\begin{figure}
\begin{center}
\begin{tabular}{c}
\includegraphics[width=7.8cm]{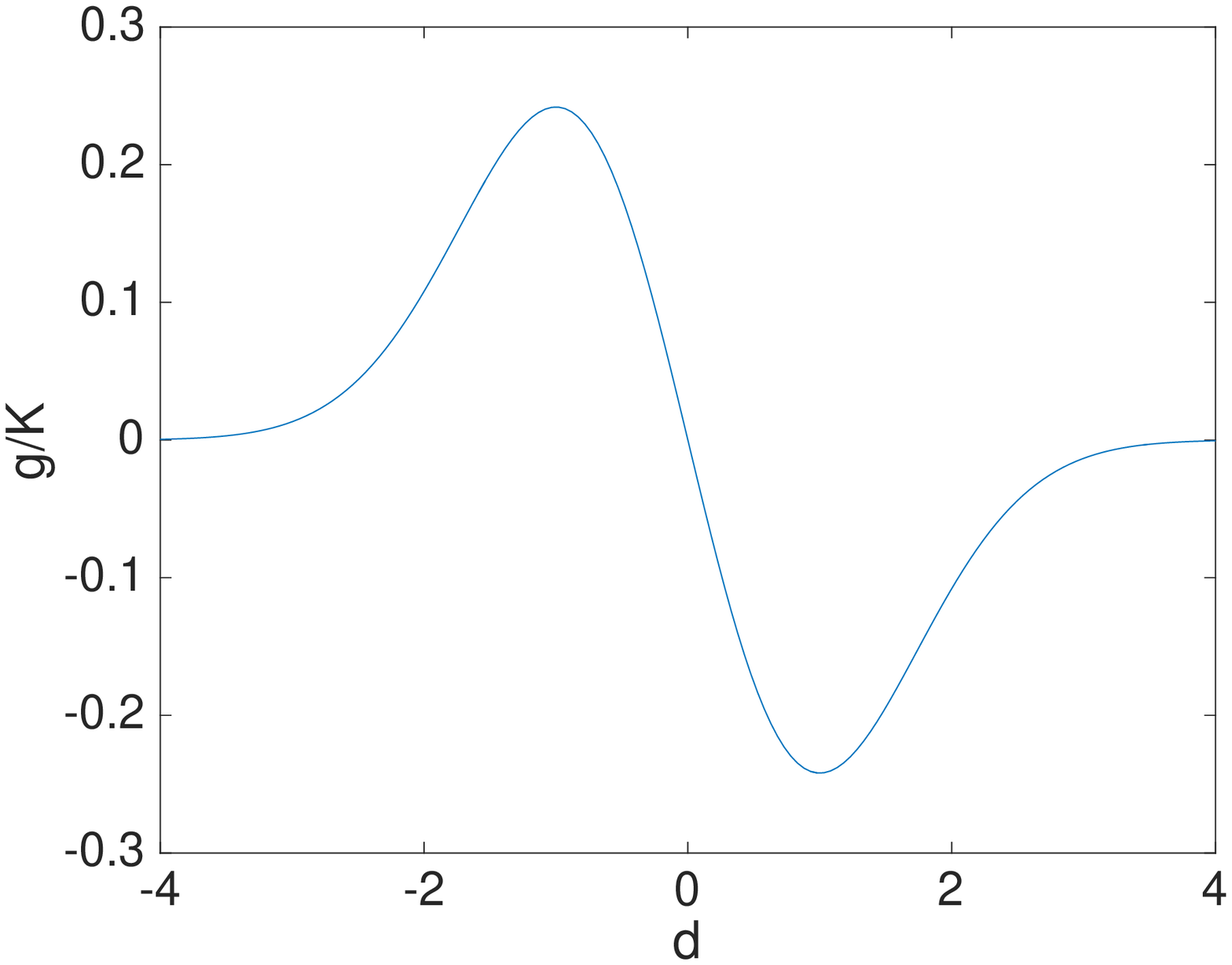}
\end{tabular}
\end{center}
\caption{  The figure shows  the coherent cost correction function $g(d_{_{-}})/K$.   
    \label{fig_Amean}
}
\end{figure}

\subsection{Optimality of Practitioners Scheme}
\label{sec:opt}   {
In the context of our modeling the practitioners scheme (BS) has the lowest 
risk (i.e. cost variance) among the schemes that we have considered (${\rm H, HW,  BS, \H}$). 
Here, we show that in fact the practitioners approach is the
optimal scheme amongst all DA  hedging strategies in the context of a call and for sufficiently 
small $\eps$.  } 

\begin{definition}
A DA hedging scheme is based on a replication  
portfolio of value $P$ of the form (\ref{eq:generalformportfolio}) with 
the number of underlyings $a_t$ being a smooth function of $t$ and $X_t$.
\end{definition}

\begin{proposition}
\label{prop:optim}
Let ${\cal A}(t,x)$ be a smooth and bounded function. 
Let $a_t ={\cal A}(t,X_t)$ be the number of underlyings in a replication 
portfolio of value $P(t,X_t)$. 
 Let 
\begin{equation}
     E^*_t   =  P(t,X_t)  - \int_0^t a_s dX_s
\end{equation}
be the cost associated to the hedging strategy $a_t$.
Then we have  {up to terms of order $o(\eps)$:}
\begin{equation}
\EE[E^*_t \mid {\cal F}_0 ] =  P(0,X_0), \quad  {\rm Var}(E^*_t \mid{\cal F}_0  )    \geq   {\rm Var}(E^{\rm BS}_t  \mid {\cal F}_0) , 
\quad  t \in [ 0,T]  .
\end{equation}
\end{proposition}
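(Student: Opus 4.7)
The plan is to reduce the minimization of the hedging variance over all DA strategies ${\cal A}(t,X_t)$ to a convex quadratic program in one scalar deviation function, and to verify that the (BS) delta sits at its minimum. Any ${\cal A}$ for which ${\cal A}-\delta^{\rm H}$ is not of order $\sqrt{\eps}$ produces hedging variance of order larger than $\eps$, making the inequality trivial, so it suffices to consider ${\cal A}(t,x) = \delta^{\rm H}(t,x) + \sqrt{\eps}\,\rho\,\phi(t,x)$ with $\phi$ smooth and bounded. Proceeding as in the proofs of Lemmas~\ref{thm1} and \ref{thm1b}, by combining It\^o's formula for $P(t,X_t)$, the pricing equations (\ref{eq:bs0})--(\ref{eq:bs1}), the identities (\ref{eq:1})--(\ref{eq:2}), and the residual estimates (\ref{eq:estimeRj}), I would obtain in $L^2$
\begin{equation*}
E^*_t - P(0,X_0) = N^{(1)}_t + \sqrt{\eps}\,\rho \int_0^t U(s,X_s)\,\sigma^\eps_s\,dW^*_s + o(\sqrt{\eps}),
\end{equation*}
with $U(s,x) := x\bigl(\partial_x Q^{(1)}(s,x;\bar\sigma) - \phi(s,x)\bigr)$. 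This recovers $U\equiv 0$ for the (HW) scheme of Lemma~\ref{lem:Ec} and $U = U^{\rm BS}:=\overline{D}\,\tilde{\cal H}$ for the (BS) scheme of Lemma~\ref{thm1b}. Since both terms on the right are zero-mean martingales, $\EE[E^*_t\mid{\cal F}_0] = P(0,X_0)$ at leading order follows at once, establishing the mean part of the proposition.

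For the variance, the key covariation is, with $g(s,x) := x^2\partial_x^2 Q^{(0)}(s,x)$ and $W^* = \rho W+\sqrt{1-\rho^2}W'$,
\begin{equation*}
d\bigl\langle N^{(1)},\,\int U\sigma^\eps dW^*\bigr\rangle_s = \rho\, g(s,X_s)\,U(s,X_s)\,\sigma^\eps_s\,\vartheta^\eps_s\,ds,
\end{equation*}
which together with the quadratic variation of $\int U\sigma^\eps dW^*$ yields
\begin{equation*}
{\rm Var}(E^*_t\mid{\cal F}_0) = {\rm Var}(N^{(1)}_t) + 2\sqrt{\eps}\,\rho^2\,\EE\!\left[\int_0^t g\,U\,\sigma^\eps_s\vartheta^\eps_s\,ds\right] + \eps\,\rho^2\,\EE\!\left[\int_0^t U^2 (\sigma^\eps_s)^2\,ds\right] + o(\eps).
\end{equation*}
Invoking the averaging $\sigma^\eps_s\vartheta^\eps_s \sim \sqrt{\eps}\,\overline{D}$ already encoded in the bound (\ref{eq:estimeRj}) on $R^{(3)}$, the homogenization of $(\sigma^\eps_s)^2$ to $\bar\sigma^2$ in time integrals from Lemma~\ref{lem:EMn2}, and the asymptotic variance of $N^{(1)}$ from Proposition~\ref{prop:main2b2}, the leading $\eps$-order variance becomes the strictly convex quadratic functional in $U$
\begin{equation*}
J(U) \;:=\; \overline{\Gamma}^2\,\EE\!\left[\int_0^t g^2\,ds\right] + 2\rho^2\overline{D}\,\EE\!\left[\int_0^t g\,U\,ds\right] + \rho^2\bar\sigma^2\,\EE\!\left[\int_0^t U^2\,ds\right],
\end{equation*}
with the process evaluated along the homogenized dynamics $d\tilde X = \bar\sigma \tilde X\,dW^*$.

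Pointwise Euler--Lagrange minimization of $J$ yields $U_{\rm opt}(s,x) = -\overline{D}\,g(s,x)/\bar\sigma^2$ and minimum value $\bigl(\overline{\Gamma}^2 - \rho^2\overline{D}^2/\bar\sigma^2\bigr)\EE[\int_0^t g^2 ds]$, which matches $\eps^{-1}{\rm Var}(E^{\rm BS}_t\mid{\cal F}_0)$ through Eq.~(\ref{eq:main}). It remains to verify that $U^{\rm BS} = \overline{D}\,\tilde{\cal H}$ coincides with $U_{\rm opt}$ for the call payoff. Using $\partial_\sigma Q^{(0)} = x\sqrt{T-t}\,p(d_+)$, $p(d_+)=(K/x)p(d_-)$, and the explicit correction $Q^{(1)}(t,x;\bar\sigma) = -\overline{D}\,K\,d_- p(d_-)/\bar\sigma^2$ derived from (\ref{def:Q1t}), a direct computation yields $\overline{D}\,\tilde{\cal H} = -\overline{D}\,K\,p(d_-)/(\bar\sigma^3\sqrt{T-t}) = -\overline{D}\,g/\bar\sigma^2 = U_{\rm opt}$, which completes the proof. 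The main obstacle will be justifying the averaging step uniformly in the admissible class of deviations $\phi$, in particular the passage to the limit of $\sqrt{\eps}\,\EE[\int g\,U\,\sigma^\eps\vartheta^\eps ds]$; this rests on $R^{(3)}$-type bounds as in (\ref{eq:estimeRj}) together with the moment estimates of Lemma~\ref{lem:EMn2}, both already central to the proofs in Section~\ref{sec:hedging}.
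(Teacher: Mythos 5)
Your proposal is correct and follows essentially the same route as the paper: the same reduction of $E^*_t-P(0,X_0)$ to $N^{(1)}_t$ plus a free $O(\sqrt{\eps})$ martingale driven by the deviation from $\delta^{\rm H}$, the same homogenization lemmas for $\sigma^\eps_s\vartheta^\eps_s$ and $(\sigma^\eps_s)^2$, and a minimization of the resulting quadratic in the deviation --- which you perform by pointwise Euler--Lagrange (identifying $U_{\rm opt}=-\overline{D}g/\bar\sigma^2$ and checking $\overline{D}\tilde{\cal H}=U_{\rm opt}$ for the call), whereas the paper equivalently bounds the correlation between $N^{(1)}$ and the deviation martingale by $|\rho\overline{D}/(\bar\sigma\overline{\Gamma})|$ via Cauchy--Schwarz, the equality case of which is exactly your $U\propto g$. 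One minor imprecision: when ${\cal A}-\delta^{\rm H}$ is $o(\sqrt{\eps})$ the variance is not of order larger than $\eps$ --- it equals ${\rm Var}(N^{(1)}_t\mid{\cal F}_0)+o(\eps)$ --- but the claimed inequality still holds trivially there since ${\rm Var}(N^{(1)}_t\mid{\cal F}_0)\geq{\rm Var}(E^{\rm BS}_t\mid{\cal F}_0)$, which is precisely how the paper disposes of that case.
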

 {This proposition shows that  there is one scheme, the (BS) scheme, 
that  is the asymptotic optimal DA scheme  for any exercise time $t \leq T$.}
\begin{proof}
We write the cost as
$$
E^*_t  = P(t,X_t)  -   \int_0^t \delta^{\rm HW}(s,X_s) dX_s
+  \int_0^t \left(  \delta^{\rm HW}(s,X_s)   - a_s  \right) dX_s  .
$$
We first address the most interesting case 
consistent with the regime addressed here, that is, the case when
 ${\cal A} (t,x) - \partial_x Q^{(0)}_t(x)$ is of order $\sqrt{\eps}$:
$$
 {\cal A} (t,x) = \partial_x Q^{(0)}_t(x) +\sqrt{\eps}  {\cal A}_1 (t,x)  .
 $$
Then
$$
E^*_t  =
 P(0,X_0) + 
 N^{(1)}_t  +  \check{N}_t +o(\sqrt{\eps}) ,
 $$
with  (using Eq.~(\ref{eq:D2}))
$$
 \check{N}_t  = \sqrt{\eps}
  \int_0^t  \check{\cal A}_s(X_s) \sigma_s^\eps dW^*_s  ,\quad \quad \check{\cal A}_s(x) =
  \big( \rho Q^{(1)}_s(x)- {\cal A}_1(s,x)  \big) x .
$$
The two martingales $N^{(1)}$ and $\check{N}$ have amplitudes of order $\sqrt{\eps}$.
Using Eq.~(\ref{a32}) 
we get
\begin{align*}
\EE \big[ (N^{(1)}_t)^2 \mid {\cal F}_0 \big] 
& = \EE\Big[ \int_0^t (x^2\partial_x^2 Q^{(0)}_s )(X_s)^2 (\vartheta^\eps_s)^2 ds \mid {\cal F}_0\Big]\\
& = \eps \overline{\Gamma}^2 \EE\Big[ \int_0^t (x^2\partial_x^2 Q^{(0)}_s )(X_s)  ^2 ds \mid {\cal F}_0\Big] + o(\eps) ,
\end{align*}
in the sense that
$$
\lim_{\eps \to 0}
\EE\left[\left| \eps^{-1} \EE \big[ (N^{(1)}_t)^2 \mid {\cal F}_0 \big] 
-
 \overline{\Gamma}^2 \EE\Big[ \int_0^t (x^2\partial_x^2 Q^{(0)}_s )(X_s)  ^2 ds \mid {\cal F}_0\Big]
 \right|\right]=0.
$$
Similarly, using Eqs.~(\ref{a31}) and (\ref{eq:comparesigmasigmaeps}), 
\begin{align*}
\EE \big[ (\check{N}_t)^2 \mid {\cal F}_0 \big]  
& = \eps   \EE\Big[ \int_0^t \check{\cal A}_s(X_s)^2 (\sigma^\eps_s)^2  ds \mid {\cal F}_0\Big] \\
& = \eps \bar{\sigma}^2 \EE\Big[ \int_0^t \check{\cal A}_s(X_s)^2 ds \mid {\cal F}_0\Big] + o(\eps) ,\\
\EE \big[ N^{(1)}_t \check{N}_t \mid {\cal F}_0 \big] 
& = \sqrt{\eps}  \rho \EE\Big[ \int_0^t \big( (x^2\partial_x^2 Q^{(0)}_s) \check{\cal A}_s \big)(X_s)  \sigma^\eps_s \vartheta^\eps_s ds \mid {\cal F}_0\Big] \\
& = \eps \rho \overline{D} \EE\Big[ \int_0^t \big( (x^2\partial_x^2 Q^{(0)}_s) \check{\cal A}_s \big)(X_s)  ds \mid {\cal F}_0\Big] + o(\eps) .
\end{align*}
Therefore, we find  
to leading order
\ban
\check{\rho}_t  &=&  {\rm Corr}\big(  N^{(1)}_t,  \check{N}_t  \mid{\cal F}_0 \big) 
=  
 \frac{\rho \overline{D} \EE\Big[   \int_0^t  \big( (x^2\partial_x^2 Q^{(0)}_s) \check{\cal A}_s \big) (X_s)ds   \mid {\cal F}_0 \Big] }
  { \bar\sigma \overline{\Gamma}  \sqrt{
  \EE\Big[   \int_0^t   ( x^2\partial_x^2 Q^{(0)}_s)(X_s)^2   ds \mid {\cal F}_0 \Big]
 \EE\Big[  \int_0^t   \check{\cal A}_s(X_s) ^2  ds  \mid {\cal F}_0  \Big]  } } ,
 \ean 
 so that by Cauchy-Schwarz inequality
$   |\check{\rho}_t|    \leq  \overline{\rho} $,
where
\begin{equation}
 \overline{\rho}  =  \frac{\rho \overline{D}  }
  { \bar\sigma \overline{\Gamma} } =  \frac{D}{\bar\sigma \Gamma}   .
\end{equation}
Thus, using Proposition  \ref{prop:sum2}   and denoting
\ban
  \alpha_t = \sqrt{ \frac{ {\rm Var}(\check{N}_t \mid {\cal F}_0 ) }{  {\rm Var}(N^{(1)}_t \mid {\cal F}_0 )   } } ,
\ean
  we have 
\ban
\nonumber
    {\rm Var}\big( E^*_t  \mid {\cal F}_0 \big) &=&   {\rm Var} \big(  N^{(1)}_t  \mid {\cal F}_0 \big) 
    \big( 1 + 2\check{\rho}_t \alpha_t + \alpha_t^2 \big) 
    \geq {\rm Var}\big(  N^{(1)}_t \mid {\cal F}_0  \big) 
    \big( 1 -  2\overline{\rho}  \alpha_t + \alpha_t^2 \big)   \\
    &\geq&      {\rm Var}\big(  N^{(1)}_t  \mid {\cal F}_0 \big) 
    \big( 1 -  \overline{\rho}^2 \big) = {\rm Var}\big( E^{\rm BS}_t \mid {\cal F}_0 \big) ,
\ean
which proves the desired result.

If we assume that ${\cal A} (t,x) - \partial_x Q^{(0)}_t(x)$ is smaller than $\sqrt{\eps}$,
then we easily find that 
$
E^*_t  =
 P(0,X_0) + 
 N^{(1)}_t  +o(\sqrt{\eps}) ,
 $
and therefore
$ {\rm Var}\big( E^*_t  \mid {\cal F}_0 \big) 
=
 {\rm Var}\big(  N^{(1)}_t  \mid {\cal F}_0 \big) $
 up to terms of order $o({\eps})$.
 
If we assume that ${\cal A} (t,x) - \partial_x Q^{(0)}_t(x)$ is  larger than $\sqrt{\eps}$:
$$
 {\cal A} (t,x) = \partial_x Q^{(0)}_t(x) + {\eps}^p  {\cal A}_1 (t,x) ,
 $$
 with $p\in [0,1/2)$, 
then
$$
E^*_t  =
 P(0,X_0) +  \hat{N}_t +o( {\eps}^p) ,
 $$
with 
$$
 \hat{N}_t  = \eps^p 
  \int_0^t  \hat{\cal A}_s(X_s) \sigma_s^\eps dW^*_s  ,\quad \quad  \hat{\cal A}_s(x) =
- {\cal A}_1(s,x) x .
$$
We then have
\begin{align*}
\EE \big[ (N^{(1)}_t)^2 \mid {\cal F}_0 \big]   =O( \eps) , \quad \quad
\EE \big[ (\hat{N}_t)^2 \mid {\cal F}_0 \big]  
=  \bar{\sigma}^2 \eps^{2p} \EE\Big[ \int_0^t \hat{\cal A}_s(X_s)^2 ds \mid {\cal F}_0\Big] (1 + o(1)) ,
\end{align*}
which shows that
$$
    {\rm Var}\big( E^*_t  \mid {\cal F}_0 \big) =
     {\rm Var}\big(  \hat{N}_t  \mid {\cal F}_0 \big)  (1+o(1)) \gg 
     {\rm Var}\big(  N^{(1)}_t  \mid {\cal F}_0 \big)  \geq {\rm Var}\big( E^{\rm BS}_t \mid {\cal F}_0 \big) .
$$

For completeness (and to prove the last inequality in (\ref{eq:main})), 
we also remark that, by using Eq.~(\ref{eq:pareff}) and Lemma \ref{lem:K2}, we have
   \ban
    |\overline{\rho}|    = \frac{|\rho|}{\bar\sigma} \ 
     \lim_{\eps \to 0}    \frac{\EE\left[ \vartheta^\eps_t \sigma_t^\eps \right] }
     {\sqrt{ \EE\left[  (\vartheta^\eps_t )^2\right]}}    \leq
        |\rho|  \frac{1}{\bar\sigma}   \lim_{\eps \to 0} 
          \sqrt{ \EE\left[ ( \sigma^\eps_t  )^2\right]}  = |\rho| .
\ean
\end{proof}

\section{ {Numerical Illustration and Robustness}} \label{sec:sim} 

We illustrate  the performance of the different hedging schemes numerically.
We consider the case of a European call. 
Recall that we here define the implied volatility by 
 $\sigma(t,x)$ solving 
\begin{equation}
\label{def:impliedvolII}
P(t,x ) =  Q^{(0)}(t,x;\sigma(t,x))  \,
\end{equation}
where   $P(t,x)$  is  the corrected price: 
\begin{equation}
\label{eq:PcorrectedII}
   P(t,x )  
    =  Q^{(0)}(t,x; \bar\sigma) + 
   D(T-t) \big( x \partial_x  (x^2 \partial_x^2 ) \big) Q^{(0)}(t,x;\bar\sigma) ,
\end{equation}
and where  $\bar\sigma$ is the  historical volatility and $Q^{(0)}$  is the  standard Black-Scholes price.  

In the call case  the hedging deltas are explicitly given by

--The ``historical'' (H) delta:  
\ba\label{eq:da} 
 \delta^{\rm H}(t,x)  & = & \partial_x   Q^{(0)}(t,x;  \bar\sigma )     =   {\cal N}(d_+) ,
 \ea 
 for  ${\cal N}$ the cumulative normal distribution.

 -- The  Black-Scholes (BS) or practitioners delta: 
 \ba\label{eq:dc} 
  \delta^{\rm BS}(t,x)    & = & \partial_x 
  Q^{(0)}(t,x;\sigma)|_{\sigma=\sigma(t,x)} =  \delta^{\rm H}(t,x)   +  {\cal D}  \frac{  d_{_-}^2  \exp(-d_{_-}^2/2) }{x \sqrt{\tau} }  ,
  \ea
  for  ${\cal D}$ the hedging parameter which in terms of the underlying parameters  has the representation:
  \ba\label{eq:Dh}
    {\cal D}  =     \frac{  \sqrt{\eps} \rho \bar{D}  K}{  \sqrt{2\pi} \bar\sigma^2  } ,
  \ea
 and with notation: 
\begin{equation*}
   d_{_{\pm}} = \frac{\log(X_0/K)}{\sqrt{\tau}}    \pm \frac{\sqrt{\tau}}{2}  , 
  \quad \quad \tau = \bar\sigma^2(T-t) .
\end{equation*}

 --The  Hull-White (HW) delta:
\ba\label{eq:db} 
 \delta^{\rm HW}(t,x)  & = & \partial_x   P(t,x )  =  \delta^{\rm H}(t,x)   +  {\cal D}  \frac{ (d_{_-}^2-1)   \exp(-d_{_-}^2/2) }{x \sqrt{\tau} }   , 
 \ea
with  ${\cal D}$  defined as in Eq. (\ref{eq:Dh}).

The model for the  underlying and the volatility is the expfOU model  introduced in Section \ref{sec:expou}:
 \ban
&& dX_t  =  X_t   \sigma_t^\eps    dW^*_t  ,   \quad  \quad
 \sigma_t^\eps    =    \bar\sigma \exp \Big(  \frac{\omega Z_t^\eps}{\sigma_{\rm z}}  - \omega^2\Big) , 
 \ean
 for $Z_t^\eps$ a fractional Ornstein-Uhlenbeck process with rate of mean reversion $\eps$
  and Hurst parameter $H$, that is, a  scaled Gaussian process with representation
   \begin{equation*}
\label{eq:ZgenII}
Z^\eps_t = \sigma_{\rm z} \int_{-\infty}^t {\cal K}^\eps(t-s) dW_s,
\end{equation*}
 where    $W_t, W^*_t$  are standard Brownian motions   with correlation coefficient $\rho$ 
  under the  historical  measure  and where the kernel
${\cal K}^\eps$ is discussed in Section \ref{sec:svmodel}.  
  Recall  that here we assume that the drift in the price is vanishingly
small so that the price is a martingale under the historical measure.   

The hedging cost with the volatility fluctuations is: 
$$
  E^{\rm C}_T  = h(X_T) - \int_0^T \delta^{\rm C}(s,X_s)  dX_s  , \quad {\rm C}={\rm H},{\rm BS},{\rm HW} . 
$$
We simulate many independent price trajectories $(X_s)_{    0  \leq s \leq  T}$ using a spectral approach
and compute the associated hedging costs.
We  then define the  relative risk in the hedging cost by:
\ba\label{eq:Hc}
      C^{\rm C}(T,x_0) =   \frac{St.Dev [ E^{\rm C}_T ] }{ Q^{(0)}(T,x_0;\bar\sigma)   }   , 
\ea
where the standard deviation is with respect to the simulated paths.  
The approach to calibration we take here is that we assume that  historical price  paths are available 
and we choose the hedging parameter ${\cal D}$ as the one that minimizes (BS) hedging risk (to evaluate the risk of the (BS) hedging cost)
or the one  that minimizes  the (HW) hedging risk  (to evaluate the risk of the (HW) hedging cost).
   
In Figure \ref{fig_hedgecost} we show the hedging cost risk as a function of moneyness parameter  $x/K$ with $K$ the call strike. 
We use the parameters $T=1$,
$\eps=.05$, $\bar\sigma=.5$, $\omega=.5$,  and $\rho=-.5$. Thus we consider a rapidly mean  reverting  volatility factor
and a strong leverage.   Note that indeed the (BS) scheme is the optimal approach for all considered values of the
moneyness, while the (HW) scheme performs approximately as the (H) scheme.

Figure \ref{fig_hedgecostslow}  corresponds to Figure  \ref{fig_hedgecost}  only that $\eps=1$ so that we are not in the rapidly mean reverting
regime. All schemes are then associated with approximately the same risk. Thus, even though we use 
the (BS) hedging scheme outside of its regime of optimality  it  performs as well as the classic (H) scheme. 

Figures  \ref{fig_hedgecostrough}  and \ref{fig_hedgecostroughslow} correspond to Figures 
   \ref{fig_hedgecost}  and  \ref{fig_hedgecostslow}  only that here $H=.1$, which means that we consider
   a rough volatility regime. It is interesting to note that the reduction of the hedging cost uncertainty 
   by the (BS) scheme is larger than in the classic Markovian case (Figure \ref{fig_hedgecostrough}). 
   It is all the more advantageous to use the (BS) scheme as the volatility is rougher.
   This advantage is still noticeable even when $\eps=1$ (Figure \ref{fig_hedgecostroughslow}).

\begin{figure}
\begin{center}
\begin{tabular}{c}
\includegraphics[width=6.8cm]{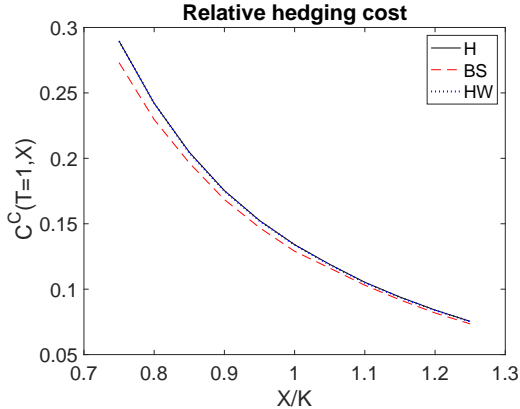}
\end{tabular}
\end{center}
\caption{      \label{fig_hedgecost}
{ The figure  shows the relative hedging cost uncertainty defined in Eq.~(\ref{eq:Hc}). The solid, dashed and dotted lines correspond to
the (H), (BS) and (HW) schemes respectively.
 For all considered values of the moneyness  the (BS) scheme has the smallest risk.  
The figure corresponds to a rapidly mean reverting and Markovian volatility factor:  $H=1/2$ and  $\eps=.05$.
The hedging parameter  is optimized so as to minimize the mean cost uncertainty over moneyness and realizations for the (BS) scheme
(${\cal D}=-.010$), respectively the (HW) scheme   (${\cal D}=-.005$). 
The theoretical parameter in   Eq.  (\ref{eq:Dh})  is  ${\cal D}=-.014$  (this theoretical value derives from the classic
delta definitions in the  fast mean reversion regime and is not optimized with respect to hedging risk). 
  } }
  
\end{figure}
\begin{figure}
\begin{center}
\begin{tabular}{c}
\includegraphics[width=6.8cm]{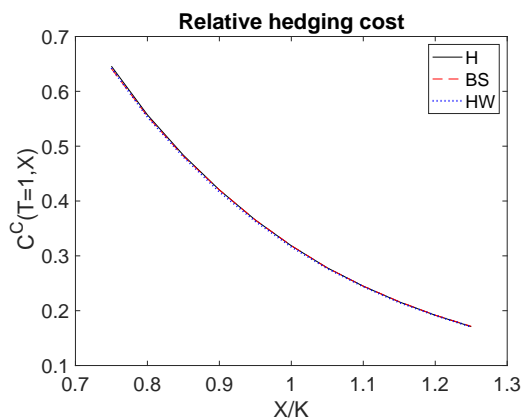}
\end{tabular}
\end{center}
\caption{      \label{fig_hedgecostslow}
{ The figure  shows the relative hedging cost uncertainty defined in Eq.~(\ref{eq:Hc}). The solid, dashed and dotted lines correspond to
the (H), (BS) and (HW) schemes respectively.   
The hedging cost is almost identical for the 3 hedging schemes.  
The figure corresponds to a slowly  mean reverting and Markovian volatility factor:  $H=1/2$ and $\eps=1$.
The hedging parameter is optimized  as in Figure \ref{fig_hedgecost}. 
 } }
 
\end{figure}
\begin{figure}
\begin{center}
\begin{tabular}{c}
\includegraphics[width=6.8cm]{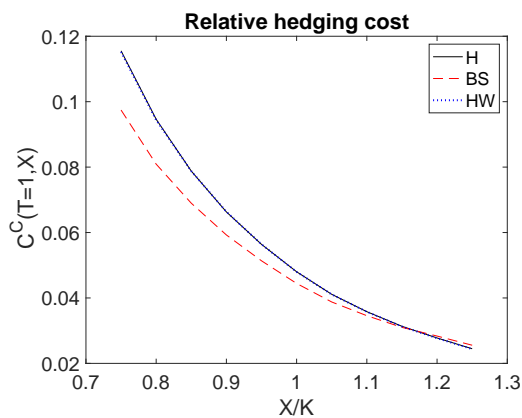}
\end{tabular}
\end{center}
\caption{      \label{fig_hedgecostrough}
{ The figure  shows the relative hedging cost uncertainty defined in Eq.~(\ref{eq:Hc}). The solid, dashed and dotted lines correspond to
the (H), (BS) and (HW) schemes respectively.   For all considered values of the moneyness  the (BS) scheme has the smallest risk
and the relative gain is larger than in the  Markovian case illustrated  in Figure \ref{fig_hedgecost}. 
The figure corresponds to a rapidly mean reverting and non-Markovian volatility factor:  $H=.1$  and $\eps=.05$.
The hedging parameter is optimized  as in Figure \ref{fig_hedgecost}. 
 } }
\end{figure}
\begin{figure}
\begin{center}
\begin{tabular}{c}
\includegraphics[width=6.8cm]{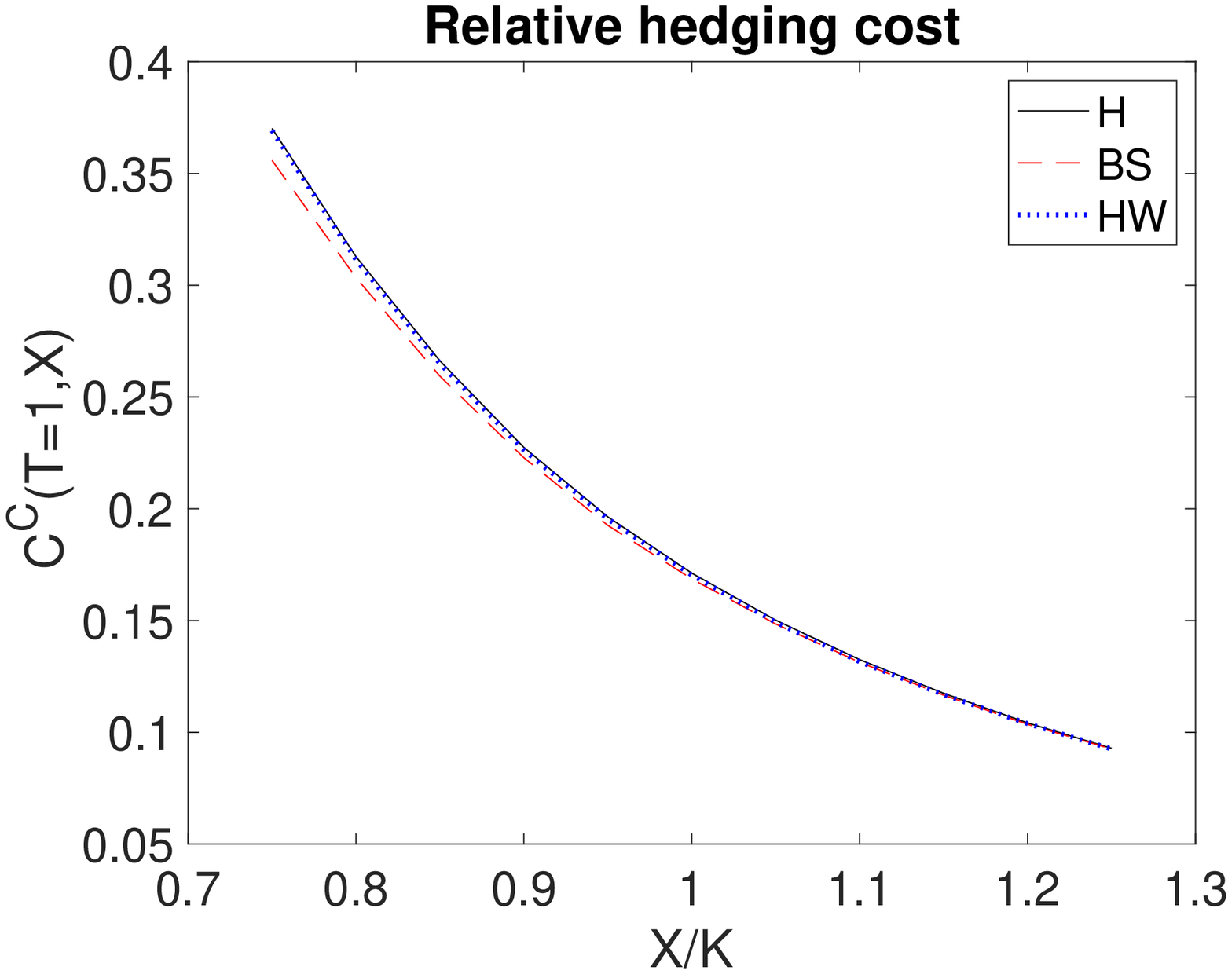}
\end{tabular}
\end{center}
\caption{      \label{fig_hedgecostroughslow}
{ The figure  shows the relative hedging cost uncertainty defined in Eq.~(\ref{eq:Hc}). The solid, dashed and dotted lines correspond to
the (H), (BS) and (HW) schemes respectively.   We can still observe that the (BS) scheme is optimal. 
The figure corresponds to a slowly  mean reverting rough  volatility factor:  $H=.1$ and  $\eps=1$.
The hedging parameter is optimized  as in Figure \ref{fig_hedgecost}. 
 } }
\end{figure}

  
\section{Conclusions}
\label{sec:concl}

Classic price replicating delta hedging strategies are important in hedging practice.
We present here a novel analysis of the extra hedging cost associated 
with such schemes that follows from a stochastic volatility situation and
thus an incomplete market context.   
 We model the volatility as a stationary stochastic process that is rapidly 
 mean-reverting relative to the diffusion time of the underlying.
 Specifically, the volatility  is a smooth function of a Volterra type Gaussian process
(an integral  of a standard Brownian motion with respect to a deterministic integral kernel).  
 We incorporate leverage in our modeling so that the Brownian motion driving
 the volatility is correlated with the Brownian motion driving the underlying.  

In this context we identify the correction to the price that is
produced by the stochastic volatility. The two market parameters that
determine this correction are the {\it effective volatility} or root mean square 
volatility and a  {\it market pricing parameter}. The hedging cost incurred due to the 
stochastic nature of the volatility  is characterized by a  Vega risk  martingale.
The amplitude of this  martingale is proportional to a {\it market risk parameter}  that needs to be 
calibrated to the market in order to quantify the hedging cost (mean and variance). 
This market risk parameter cannot be identified from
the implied volatility skew. 

   We consider specifically hedging of a European call option and then we 
   get explicit expressions for the hedging cost.  
We consider a large class of hedging schemes
 that we call dynamic asset (DA) based hedging schemes which 
are based on replicating portfolios  made of some number of underlyings 
  and some amount in the bank account,
 so that the class in particular contains all delta hedging strategies.
 We find that  in this class the optimal scheme 
    is the (BS) scheme, where the delta is the Black-Scholes delta when evaluated at the 
    implied volatility, the so-called   ``practitioners delta''. 
All the hedging schemes  that we consider can be implemented
      without knowledge of the market  risk parameter, only the quantitative evaluation of the hedging cost requires 
      the knowledge of the market  risk parameter.
           In the case of no leverage, the market pricing parameter referred to above 
      is zero, all schemes coincide, and the hedging cost is determined by 
      the  Vega risk martingale.  For general leverage and 
       for each choice of delta  we identify
      the  {\it hedging risk surface} which  characterizes
       the variance of the cost. 
        { 
Monte Carlo simulations make it possible to assess the performances of the hedging
schemes, in particular the optimal (BS) scheme. 
They reveal that the performance gain obtained when using the (BS) scheme 
is larger for rough volatility factors than with classic Markovian  volatility factors.   
A second  observation that follows from this study is that
the (BS) scheme is robust with respect to the 
assumption of rapid mean reversion. It is robust  in the sense that it performs
as good as the delta of the Black-Scholes price at the historical volatility or other (DA) strategies
when the mean reversion time is of the same order as the time to maturity.  
 }


Note that we have assumed a smooth and bounded payoff in the proofs of our results,
although the formulas can be applied with a more general payoff.
The proofs for nonsmooth payoff functions are more involved than the corresponding ones dedicated to pricing
as presented in  \cite{sv1},  they should involve a payoff regularization
scheme and they will be presented elsewhere.
 
Finally, we remark that we have considered a simplified market situation.
In order to capture a more general market context other effects, like
transaction cost, discreteness, market price of volatility risk and non-zero interest rate  and
price drift need to
be taken into account. 
 {Here, we wanted to characterize in a rigorous way 
the effect  of market  incompleteness in the simple albeit practically important
context of delta hedging schemes  leaving for future work 
more sophisticated hedging schemes incorporating in particular other derivatives  \cite{EE18}.    }

%

\section*{Acknowledgements}
  This research  has been  supported in part by 
 Centre Cournot, Fondation Cournot, and 
 Universit\'e Paris Saclay (chaire d'Alembert).

\appendix
 
\section{Effective Market Lemmas}
\label{app:EM}

We denote
\begin{equation}
\label{def:Gb}
G(z) = \frac{1}{2} \big( F(z)^2 - \overline{\sigma}^2\big) .
\end{equation}

The random term $\phi^\eps_{t}$ defined by (\ref{def:phit}) has the form 
\begin{equation}
\phi_{t}^\eps = 
\EE \Big[    \int_t^T G (Z_s^\eps)  ds \mid {\cal F}_t \Big] .
\end{equation}

The martingale $\psi^\eps_t$ defined by (\ref{def:Kt}) has the form
\begin{equation}
\psi_t^\eps = 
\EE \Big[   \int_0^T G(Z_s^\eps)  ds \mid {\cal F}_t\Big] .
\end{equation}

\begin{lemma}
\label{lem:0}%
For any smooth function $f$ with bounded derivative, 
we have
\begin{equation}
{\rm Var} \big( \EE \big[ f(Z_{t}^\eps) |{\cal F}_0 \big]\big)
\leq  \|f'\|_\infty^2  (\sigma_{t,\infty}^\eps)^2  ,
\label{eq:bornvarcond1}
\end{equation}
where we have defined for any $0\leq t\leq s\leq \infty$:
\begin{equation}
\label{def:sigmaepsst}
(\sigma_{t,s}^\eps)^2=\sigma_{\rm z}^2
   \int_t^{s} {\cal K}^\eps(u)^2 du .
\end{equation}
\end{lemma}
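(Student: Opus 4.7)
The plan is to split $Z_t^\eps$ into the part that is $\mathcal{F}_0$-measurable and the part that is independent of $\mathcal{F}_0$, compute the conditional expectation as a function of the $\mathcal{F}_0$-measurable Gaussian, and then bound the variance of a Lipschitz function of a Gaussian in the standard way.

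First I would decompose
\begin{equation*}
Z_t^\eps = A_t^\eps + B_t^\eps, \qquad
A_t^\eps = \sigma_{\rm z}\int_{-\infty}^0 {\cal K}^\eps(t-s)\, dW_s, \qquad
B_t^\eps = \sigma_{\rm z}\int_0^t {\cal K}^\eps(t-s)\, dW_s,
\end{equation*}
where $A_t^\eps$ is $\mathcal{F}_0$-measurable and $B_t^\eps$ is independent of $\mathcal{F}_0$, both centered Gaussian. The substitution $u=t-s$ in the defining integral for $A_t^\eps$ gives the variance
\begin{equation*}
\mathrm{Var}(A_t^\eps) = \sigma_{\rm z}^2 \int_{-\infty}^0 {\cal K}^\eps(t-s)^2\, ds
= \sigma_{\rm z}^2 \int_t^\infty {\cal K}^\eps(u)^2\, du = (\sigma_{t,\infty}^\eps)^2,
\end{equation*}
which is exactly the quantity appearing on the right-hand side of (\ref{eq:bornvarcond1}).

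Next, set $g(a) := \mathbb{E}[f(a + B_t^\eps)]$. By independence of $B_t^\eps$ from $\mathcal{F}_0$ we have
$\mathbb{E}[f(Z_t^\eps) \mid \mathcal{F}_0] = g(A_t^\eps)$. Differentiating under the expectation (justified since $f'$ is bounded and $B_t^\eps$ integrable) yields $g'(a) = \mathbb{E}[f'(a+B_t^\eps)]$, so $\|g'\|_\infty \le \|f'\|_\infty$; in particular $g$ is $\|f'\|_\infty$-Lipschitz.

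Finally I would conclude by a Gaussian variance bound: letting $\tilde A_t^\eps$ be an independent copy of $A_t^\eps$, the Lipschitz property gives
\begin{equation*}
\mathrm{Var}(g(A_t^\eps)) = \tfrac{1}{2}\, \mathbb{E}\!\left[\big(g(A_t^\eps) - g(\tilde A_t^\eps)\big)^2\right]
\le \tfrac{1}{2}\, \|f'\|_\infty^2\, \mathbb{E}\!\left[(A_t^\eps - \tilde A_t^\eps)^2\right]
= \|f'\|_\infty^2\, \mathrm{Var}(A_t^\eps),
\end{equation*}
so that $\mathrm{Var}(\mathbb{E}[f(Z_t^\eps)\mid \mathcal{F}_0]) \le \|f'\|_\infty^2 (\sigma_{t,\infty}^\eps)^2$, which is the desired inequality. (Alternatively one could invoke the Gaussian Poincaré inequality directly applied to the centered Gaussian $A_t^\eps$ and the Lipschitz test function $g$.) There is no real obstacle here; the only thing to be careful about is the bookkeeping in the change of variables that identifies $\mathrm{Var}(A_t^\eps)$ with $(\sigma_{t,\infty}^\eps)^2$ rather than with $(\sigma_{0,t}^\eps)^2$.
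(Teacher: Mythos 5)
Your proposal is correct and follows essentially the same route as the paper: the paper likewise identifies $\EE[f(Z_t^\eps)\mid{\cal F}_0]$ as a $\|f'\|_\infty$-Lipschitz function of the ${\cal F}_0$-measurable Gaussian $\sigma_{\rm z}\int_{-\infty}^0 {\cal K}^\eps(t-u)\,dW_u$ (which has variance $(\sigma_{t,\infty}^\eps)^2$) and then bounds the variance by the same symmetrization with an independent copy. The only difference is presentational: the paper keeps the Gaussian smoothing integral over $z$ explicit rather than packaging it into a function $g$.
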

\begin{proof}
The conditional distribution of $Z_t^\eps$ given ${\cal F}_0$ is Gaussian with mean
\begin{equation}
\label{eq:meancondapp1}
\EE \big[  Z_t^\eps |{\cal F}_0 \big] = \sigma_{{\rm z}}  \int_{-\infty}^0 {\cal K}^\eps(t-u) dW_u
\end{equation}
and variance
\ba\label{eq:vvar}
 {\rm Var} \big( Z_t^\eps \mid{\cal F}_0\big) = (\sigma_{0,t}^\eps)^2 =\sigma_{{\rm z}}^2 \int_0^{t} {\cal K}^\eps(u)^2 du  .
\ea
Therefore
$$
{\rm Var} \big( \EE \big[ f(Z_{t}^\eps)  \mid{\cal F}_0 \big]\big)
=
{\rm Var} \Big( \int_\RR f\big(\EE \big[ Z_{t}^\eps  \mid{\cal F}_0 \big]  +\sigma_{0,t}^\eps z \big) p(z) dz \Big)  ,
$$
where $p(z)$ is the pdf of the standard normal distribution.
By (\ref{eq:meancondapp1}) the random variable $\EE \big[ Z_{t}^\eps  \mid{\cal F}_0 \big] $ is Gaussian with mean zero and variance
$(\sigma_{t,\infty}^\eps)^2$
so that
\begin{eqnarray*}
\nonumber
{\rm Var} \big( \EE \big[ f(Z_{t}^\eps)  \mid{\cal F}_0 \big]\big)
&=&
 \frac{1}{2} \int_\RR \int_\RR dz dz' p(z) p(z') \int_\RR \int_\RR du du' p(u) p(u') \\
\nonumber &&\times 
\Big[   f\big(\sigma_{t,\infty}^\eps u +\sigma_{0,t}^\eps z \big) - 
f\big(\sigma_{t,\infty}^\eps u' +\sigma_{0,t}^\eps z \big)\Big] \\
\nonumber&& \times
\Big[   f\big(\sigma_{t,\infty}^\eps u +\sigma_{0,t}^\eps z' \big) - 
f\big(\sigma_{t,\infty}^\eps u' +\sigma_{0,t}^\eps z' \big)\Big] \\
\nonumber&\leq & \|f'\|_\infty^2 (\sigma_{t,\infty}^\eps)^2
\frac{1}{2} \int_\RR \int_\RR du du' p(u) p(u')(u-u')^2 \\
& =  & \|f'\|_\infty^2  (\sigma_{t,\infty}^\eps)^2  ,
\end{eqnarray*}
which is the desired result.
\end{proof}

\begin{lemma}
\label{lem:A2}%
For any $t \leq T$, $\phi_{t}^\eps$ is a zero-mean random variable with standard deviation of order $\eps^{(d-\frac{1}{2})\wedge 1 }$:
\begin{equation}
\label{eq:stdev}
\sup_{\eps \in (0,1]}
\sup_{t \in [0,T]}\eps^{(2d-1)\wedge 2} 
\EE [ (\phi_{t}^\eps)^2]  < \infty ,
\end{equation}
 {where $d$ is defined in (\ref{eq:defd}).}
\end{lemma}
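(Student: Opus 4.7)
The plan is twofold: verify that $\phi_t^\eps$ has zero mean, then bound its $L^2$ norm via Lemma~\ref{lem:0}. Zero mean is immediate from Fubini and the definition $\bar\sigma^2 = \langle F^2\rangle$ in (\ref{def:barsigma}), which gives $\EE[G(Z_s^\eps)] = 0$ for each $s$.

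For the variance estimate, first I would apply Minkowski's inequality to the integral representation of $\phi_t^\eps$:
\[
\sqrt{\EE[(\phi_t^\eps)^2]} \leq \int_t^T \big\|\EE[G(Z_s^\eps)\mid \mathcal{F}_t]\big\|_{L^2}\,ds.
\]
The integrand is then controlled by the time-shifted version of Lemma~\ref{lem:0}. Indeed, the Gaussian decomposition
\[
Z_s^\eps = \sigma_{\rm z}\int_{-\infty}^t \mathcal{K}^\eps(s-u)\,dW_u + \sigma_{\rm z}\int_t^s \mathcal{K}^\eps(s-u)\,dW_u
\]
makes the conditional law of $Z_s^\eps$ given $\mathcal{F}_t$ Gaussian with conditional mean (viewed as an $\mathcal{F}_t$-measurable random variable) having variance $(\sigma_{s-t,\infty}^\eps)^2$, so exactly the same calculation as in Lemma~\ref{lem:0} yields
\[
\big\|\EE[G(Z_s^\eps)\mid \mathcal{F}_t]\big\|_{L^2} \leq \|G'\|_\infty\,\sigma_{s-t,\infty}^\eps,
\]
with $G'$ bounded thanks to assumption~(iii) on $F$.

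The main computational step is then evaluating $\int_0^{T-t} \sigma_{r,\infty}^\eps\,dr$ with the claimed rate. After rescaling $v=u/\eps$, one has $(\sigma_{r,\infty}^\eps)^2 = \sigma_{\rm z}^2\int_{r/\eps}^\infty \mathcal{K}(v)^2\,dv$, and I would combine two complementary estimates: the uniform bound $\sigma_{r,\infty}^\eps \leq \sigma_{\rm z}$ (from $\|\mathcal{K}\|_{L^2}=1$) and the tail bound
\[
\sigma_{r,\infty}^\eps \leq C\,\eps^{d-\frac{1}{2}}\,r^{-(d-\frac{1}{2})},
\]
deduced from $|\mathcal{K}(v)|=O(v^{-d})$. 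Splitting the integral at $r=\eps$ and case-analyzing $d>3/2$, $d=3/2$, $d\in(1,3/2)$ produces $\int_0^{T-t}\sigma_{r,\infty}^\eps\,dr = O(\eps^{(d-\frac{1}{2})\wedge 1})$. Squaring yields $\EE[(\phi_t^\eps)^2] = O(\eps^{(2d-1)\wedge 2})$, which implies the bound~\eqref{eq:stdev}.

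The main obstacle is the case analysis near the critical exponent $d=3/2$, where a logarithmic factor appears in the tail integral; this can be absorbed by slightly relaxing the rate, or handled cleanly by noting that the trivial estimate $|\phi_t^\eps|\leq (T-t)\|G\|_\infty$ dominates the logarithmic loss on $\eps \in (0,1]$. Uniformity in $t\in[0,T]$ is automatic since all estimates depend only on $T-t$, which is bounded by $T$.
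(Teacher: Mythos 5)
Your proof is correct and follows essentially the same route as the paper's: your Minkowski step is the paper's Cauchy--Schwarz bound on the double covariance integral, your time-shifted conditioning argument is Lemma \ref{lem:0} combined with stationarity, and your two-regime tail estimate for $\sigma_{r,\infty}^\eps$ is exactly Lemma \ref{lem:7}. The logarithmic loss you flag at the borderline $d=3/2$ is present (and silently glossed over) in the paper's proof as well; as you indicate, it is harmless since one may always replace $d$ by any $d'\in(1,d)$, and only $d>1$ is used downstream.
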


\begin{proof}
For $t\in [0,T]$  the second moment of $\phi_{t}^\eps$ is:
\begin{eqnarray*}
\EE \big[ (\phi_{t}^\eps )^2\big] &=&  
\EE\Big[  
\EE \Big[    \int_t^T G (Z_s^\eps)  ds \mid {\cal F}_t \Big]^2
\Big]\\
&=&
\int_0^{T-t} ds \int_{0}^{T-t} ds' 
{\rm Cov}\big( \EE \big[ G(Z_s^\eps)  \mid {\cal F}_0\big] ,
\EE \big[ G(Z_{s'}^\eps)  \mid {\cal F}_0\big] \big) .
\end{eqnarray*}
We have by Lemma \ref{lem:0}
\begin{eqnarray*}
\EE \big[ (\phi_{t}^\eps )^2\big]  &\leq &  \Big( \int_0^{T-t} ds 
\big(  {\rm Var}\big(\EE \big[ G(Z_s^\eps)  \mid {\cal F}_0\big]\big) \big)^{1/2}
\Big)^2\leq  \|G'\|_\infty^2\Big(  \int_0^{T-t} ds  \sigma_{s,\infty}^\eps  \Big)^2 .
\end{eqnarray*}
In view of Lemma \ref{lem:7} we then have
$$
\EE \big[ (\phi_{t}^\eps )^2\big]\leq C_{T} 
\big( \eps+\eps^{d-\frac{1}{2}}\big)^2 \leq 4 C_{T} \eps^{(2d-1)\wedge 2 } ,
$$
uniformly in $t  \leq T$ and $\eps \in (0,1]$ for some constant $C_{T}$.
\end{proof}

\begin{lemma}
\label{lem:1}%
Let $Y_t$  be a bounded adapted process,  we have  
\begin{equation}
\lim_{\eps \to 0}  \eps^{-1/2} \sup_{t\in [0,T]}   \EE\left[ \left| 
\int_0^t Y_s \phi_s^\eps dW_s^* 
 \right|^2 \mid {\cal F}_0  \right]^{1/2}  = 0  .
\end{equation} 
\end{lemma}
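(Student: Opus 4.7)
The plan is to reduce the stochastic integral to a pathwise bound via conditional It\^o isometry and then insert the second-moment estimate from Lemma~\ref{lem:A2}. Since $Y$ is adapted and bounded, the process $M_t := \int_0^t Y_s \phi_s^\eps dW_s^*$ is a square-integrable martingale starting at $M_0 = 0$, with quadratic variation $\langle M \rangle_t = \int_0^t Y_s^2 (\phi_s^\eps)^2 ds$. Conditioning on ${\cal F}_0$ (which corresponds to a time before the integration interval begins), the conditional It\^o isometry yields
\[
\EE\!\left[ \Big| \int_0^t Y_s \phi_s^\eps dW_s^* \Big|^2 \;\Big|\; {\cal F}_0 \right]
= \EE\!\left[ \int_0^t Y_s^2 (\phi_s^\eps)^2 ds \;\Big|\; {\cal F}_0 \right]
\leq \|Y\|_\infty^2 \, \EE\!\left[ \int_0^T (\phi_s^\eps)^2 ds \;\Big|\; {\cal F}_0 \right].
\]
The right-hand side is monotone in $t$, so taking $\sup_{t \in [0,T]}$ inside gives the same upper bound.

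Next I would take the (unconditional) expectation of this estimate and invoke Fubini together with Lemma~\ref{lem:A2}:
\[
\EE\!\left[ \sup_{t \in [0,T]} \EE\!\left[ \Big| \int_0^t Y_s \phi_s^\eps dW_s^* \Big|^2 \;\Big|\; {\cal F}_0 \right] \right]
\leq \|Y\|_\infty^2 \int_0^T \EE[(\phi_s^\eps)^2] \, ds
\leq C_T \|Y\|_\infty^2 \, \eps^{(2d-1)\wedge 2}.
\]
Since $d > 1$ by assumption (see Eq.~(\ref{eq:defd})), the exponent $(2d-1)\wedge 2$ is strictly greater than $1$. Consequently, after multiplying by $\eps^{-1}$ the right-hand side tends to $0$, i.e.\ the non-negative random variable
\[
\eps^{-1} \sup_{t\in [0,T]} \EE\!\left[ \Big| \int_0^t Y_s \phi_s^\eps dW_s^* \Big|^2 \;\Big|\; {\cal F}_0 \right]
\]
converges to $0$ in $L^1$, hence in probability. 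Taking square roots and multiplying by $\eps^{-1/2}$ yields the advertised convergence of $\eps^{-1/2}\sup_{t\in[0,T]}\EE[|\cdot|^2\mid{\cal F}_0]^{1/2}$ to $0$ in the same mode.

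The only genuine subtlety in this argument is the matching of the conditional quantity appearing in the lemma with the \emph{unconditional} bound provided by Lemma~\ref{lem:A2}. This is handled cleanly by passing to expectations as above, exploiting the fact that the estimate in Lemma~\ref{lem:A2} holds uniformly in $t$ and $\eps$. The remaining steps (conditional It\^o isometry, boundedness of $Y$, and the elementary inequality $(2d-1)\wedge 2 > 1$) are routine.
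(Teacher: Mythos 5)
Your proposal is correct and follows essentially the same route as the paper: conditional It\^o isometry to reduce to $\EE[\int_0^t |Y_s\phi_s^\eps|^2 ds \mid {\cal F}_0]$, then the uniform second-moment bound $\EE[(\phi_s^\eps)^2]=O(\eps^{(2d-1)\wedge 2})$ from Lemma~\ref{lem:A2} together with $d>1$. Your explicit passage from the conditional bound to the unconditional one (yielding $L^1$, hence in-probability, convergence) is a welcome clarification of a step the paper leaves implicit.
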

\begin{proof}
We have by the It\^o isometry
\ban
 \EE\left[ \left| 
\int_0^t Y_s  \phi_s^\eps dW_s^* 
 \right|^2 \mid {\cal F}_0  \right]
 =
 \EE\left[  
\int_0^t \left| 
Y_s  \phi_s^\eps
\right|^2 ds \mid {\cal F}_0 \right] ,
\ean
and the result then follows from Lemma \ref{lem:A2} noting that we 
consider the case $d > 1$.  
\end{proof} 

We next present a result regarding the quadratic variation of
$\psi^\eps$. 

\begin{lemma}
\label{lem:1b}%
$(\psi_t^\eps)_{t\in [0,T]}$ is a square-integrable martingale and
\ba
\label{def:varthetaepsb}
d \left< \psi^\eps, W\right>_t  =  \vartheta^\eps_{t} dt ,
\quad \quad  
d\left< \psi^\eps,\psi^\eps  \right>_t =  (\vartheta^\eps_t)^2 dt   ,
\ea
with
\ba
\label{express:thetaeps1}
\vartheta^\eps_{t} = \sigma_{{\rm z}} \int_t^T 
\EE \big[ G'(Z_s^\eps) \mid{\cal F}_t \big]{\cal K}^\eps(s-t) ds   
.
\ea

\end{lemma}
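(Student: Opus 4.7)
The plan is to establish that $\psi^\eps$ admits a martingale representation of the form
$$\psi_t^\eps = \psi_0^\eps + \int_0^t \vartheta_s^\eps \, dW_s,$$
with $\vartheta^\eps$ as in \eqref{express:thetaeps1}. The two identities in \eqref{def:varthetaepsb} will then follow immediately from the It\^o isometry. Square-integrability is free: since $F$ is bounded, $G(z) = \tfrac12(F(z)^2 - \bar\sigma^2)$ is bounded, so $|\psi_t^\eps| \le T\|G\|_\infty$ uniformly in $t$ and $\eps$. The martingale property itself follows from the tower property, since $\psi_t^\eps$ is defined as a conditional expectation of a fixed integrable random variable.

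To obtain the representation, I would first fix $s\in [0,T]$ and analyze
$N_t^{(s)} := \EE[G(Z_s^\eps)\mid \mathcal F_t]$ for $t\in[0,s]$. Using the decomposition $Z_s^\eps = M_{t,s} + R_{t,s}$, where $M_{t,s} = \sigma_{\rm z} \int_{-\infty}^t \mathcal K^\eps(s-u)\,dW_u$ is $\mathcal F_t$-measurable and $R_{t,s} = \sigma_{\rm z} \int_t^s \mathcal K^\eps(s-u)\,dW_u$ is independent of $\mathcal F_t$ and centered Gaussian with variance $v(t,s)=\sigma_{\rm z}^2 \int_t^s \mathcal K^\eps(s-u)^2\,du$, one can write $N_t^{(s)} = \widetilde G(t, M_{t,s};s)$, where $\widetilde G(t,m;s) = \EE[G(m+R_{t,s})]$ solves the (backwards) heat equation $\partial_t \widetilde G = -\tfrac12 \sigma_{\rm z}^2 \mathcal K^\eps(s-t)^2 \partial_m^2 \widetilde G$. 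Applying It\^o's formula to $\widetilde G(t, M_{t,s};s)$ and using the heat equation to cancel the drift terms yields
$$dN_t^{(s)} = \partial_m \widetilde G(t,M_{t,s};s)\, \sigma_{\rm z} \mathcal K^\eps(s-t)\, dW_t,$$
and a differentiation under the expectation identifies $\partial_m \widetilde G(t,M_{t,s};s) = \EE[G'(Z_s^\eps)\mid \mathcal F_t]$.

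Next I would write $\psi_t^\eps = \int_0^t G(Z_s^\eps)\,ds + \int_t^T N_t^{(s)}\,ds$, and use a stochastic Fubini argument to interchange $ds$ and $dW_t$: the absolutely continuous part cancels (because $N_t^{(t)}=G(Z_t^\eps)$), and one obtains
$$d\psi_t^\eps = \Big( \int_t^T \sigma_{\rm z} \mathcal K^\eps(s-t)\, \EE[G'(Z_s^\eps)\mid \mathcal F_t]\, ds \Big) dW_t = \vartheta_t^\eps\, dW_t,$$
which is \eqref{express:thetaeps1}. From here, \eqref{def:varthetaepsb} is a direct consequence of It\^o's isometry and the bilinearity of quadratic covariation.

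The one step requiring care is the justification of the stochastic Fubini exchange. This reduces to showing that $(s,t,\omega)\mapsto \sigma_{\rm z}\mathcal K^\eps(s-t)\EE[G'(Z_s^\eps)\mid \mathcal F_t]\mathbf 1_{t\le s}$ is in $L^2(dt\otimes ds\otimes d\PP)$ on $[0,T]^2\times \Omega$, which is immediate from $\|G'\|_\infty<\infty$ and $\mathcal K\in L^2$. An alternative, cleaner route avoiding the Fubini interchange is to invoke the Clark--Ocone formula directly on $F=\int_0^T G(Z_s^\eps)\,ds$: one computes the Malliavin derivative $D_u F = \sigma_{\rm z}\int_u^T G'(Z_s^\eps)\mathcal K^\eps(s-u)\,ds$ from $D_u Z_s^\eps = \sigma_{\rm z}\mathcal K^\eps(s-u)\mathbf 1_{u\le s}$, and then $\EE[D_u F\mid \mathcal F_u]=\vartheta_u^\eps$ yields the representation immediately. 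Malliavin differentiability is guaranteed by the smoothness and boundedness of $F$ (hence of $G'$) together with the linearity of $Z^\eps$ in $W$.
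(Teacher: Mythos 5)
Your proposal is correct and follows essentially the same route as the paper: the paper likewise writes $\EE[G(Z_s^\eps)\mid{\cal F}_t]$ as a Gaussian integral against the ${\cal F}_t$-measurable part of $Z_s^\eps$, applies It\^o's formula, cancels the drift via the integration by parts $zp(z)=-\partial_z p(z)$ (your backwards heat equation in disguise), and then integrates the resulting representations over $s$ to assemble $\vartheta^\eps_t$. Your explicit justification of the stochastic Fubini interchange and the Clark--Ocone alternative are welcome additions, but the core argument is the same.
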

  An alternative expression of  $\vartheta^\eps_{t}$ is given in (\ref{eq:crocKW2}).
\begin{proof}
This follows from \cite[Lemma B.1]{sv2} and its proof. 
For $t < s$, the conditional distribution of $Z_s^\eps$ given ${\cal F}_t$ is Gaussian with mean
$$
\EE \big[  Z_s^\eps  \mid{\cal F}_t \big] =\sigma_{{\rm z}}
 \int_{-\infty}^t {\cal K}^\eps(s-u) dW_u 
$$
and deterministic variance given by
$$
 {\rm Var} \big( Z_s^\eps  \mid{\cal F}_t\big) = (\sigma_{0,s-t}^\eps)^2,
$$
where $\sigma_{s,t}^\eps$ is defined by (\ref{def:sigmaepsst}).  
Therefore we have
$$
\EE \big[ {G}( Z_s^\eps )  \mid{\cal F}_t \big] 
= \int_\RR G \Big( \sigma_{{\rm z}} \int_{-\infty}^t {\cal K}^\eps(s-u) dW_u
+\sigma_{0,s-t}^\eps z\Big) p(z) dz ,
$$
where $p(z)$ is the pdf of the standard normal distribution. 
As a random process in $t$ it is a continuous martingale.
By It\^o's formula, for any $t <  s$:
\begin{eqnarray*}
\EE \big[ {G}( Z_s^\eps )  \mid{\cal F}_t \big] &=&
 \int_\RR {G} \Big( \sigma_{{\rm z}}\int_{-\infty}^0 {\cal K}^\eps(s-v) dW_v
+\sigma_{0,s}^\eps z\Big) p(z) dz  \\
&&
+ \int_0^t  \int_\RR {G}' \Big( \sigma_{{\rm z}}\int_{-\infty}^u {\cal K}^\eps(s-v) dW_v
+\sigma_{0,s-u}^\eps z\Big) z p(z) dz \partial_u \sigma_{0,s-u}^\eps du   \\
&&
+\sigma_{{\rm z}} \int_0^t  \int_\RR {G}' \Big( \sigma_{{\rm z}}\int_{-\infty}^u {\cal K}^\eps(s-v) dW_v
+\sigma_{0,s-u}^\eps z\Big)p(z) dz  {\cal K}^\eps(s-u) dW_u   \\
&&
+\frac{\sigma_{{\rm z}}^2}{2}  \int_0^t  \int_\RR {G}'' \Big( \sigma_{{\rm z}}\int_{-\infty}^u {\cal K}^\eps(s-v) dW_v
+\sigma_{0,s-u}^\eps z\Big) p(z) dz {\cal K}^\eps(s-u)^2 du   .
\end{eqnarray*}
Note that we have from Eq. (\ref{def:sigmaepsst}) that 
\ban
  2   \sigma_{0,s-u}^\eps  \partial_u \sigma_{0,s-u}^\eps   =  -  \partial_s (\sigma_{0,s-u}^\eps )^2=
 -  \sigma_{{\rm z}}^2  {\cal K}^\eps(s-u)^2  .
\ean 
The martingale representation then follows explicitly via integration by parts (with respect to $z$, using $zp(z) = -\partial_z p(z)$):
\begin{eqnarray*}
\EE \big[ {G}( Z_s^\eps )  \mid{\cal F}_t \big] &=&
 \int_\RR {G} \Big( \sigma_{{\rm z}}\int_{-\infty}^0 {\cal K}^\eps(s-v) dW_v
+\sigma_{0,s}^\eps z\Big) p(z) dz  \\
 &&
+\sigma_{{\rm z}} \int_0^t  \int_\RR {G}' \Big( \sigma_{{\rm z}}\int_{-\infty}^u {\cal K}^\eps(s-v) dW_v
+\sigma_{0,s-u}^\eps z\Big)p(z) dz  {\cal K}^\eps(s-u) dW_u     .
\end{eqnarray*}
 We also have
 \begin{eqnarray*}
{G}( Z_s^\eps )  &=&{G} \Big( \sigma_{{\rm z}}\int_{-\infty}^s {\cal K}^\eps(s-v) dW_v\Big)  \\
&=&  
 \int_\RR {G} \Big(\sigma_{{\rm z}} \int_{-\infty}^u {\cal K}^\eps(s-v) dW_v
+\sigma_{0,s-u}^\eps z\Big) p(z) dz  \mid_{u=s} \\
&=&  \int_\RR {G} \Big(\sigma_{{\rm z}} \int_{-\infty}^0 {\cal K}^\eps(s-v) dW_v
+\sigma_{0,s}^\eps z\Big) p(z) dz  \\
&&
+ \int_0^s \int_\RR {G}' \Big( \sigma_{{\rm z}}\int_{-\infty}^u {\cal K}^\eps(s-v) dW_v
+\sigma_{0,s-u}^\eps z\Big) z p(z) dz \partial_u \sigma_{0,s-u}^\eps du   \\
&&
+\sigma_{{\rm z}}\int_0^s  \int_\RR {G}' \Big( \sigma_{{\rm z}}\int_{-\infty}^u {\cal K}^\eps(s-v) dW_v
+\sigma_{0,s-u}^\eps z\Big)p(z) dz  {\cal K}^\eps(s-u) dW_u   \\
&&
+\frac{\sigma_{{\rm z}}^2}{2}  \int_0^s  \int_\RR {G}'' \Big(\sigma_{{\rm z}} \int_{-\infty}^u {\cal K}^\eps(s-v) dW_v
+\sigma_{0,s-u}^\eps z\Big) p(z) dz {\cal K}^\eps(s-u)^2 du  \\
&=&  \int_\RR {G} \Big(\sigma_{{\rm z}} \int_{-\infty}^0 {\cal K}^\eps(s-v) dW_v
+\sigma_{0,s}^\eps z\Big) p(z) dz  \\
 &&
+\sigma_{{\rm z}}\int_0^s  \int_\RR {G}' \Big( \sigma_{{\rm z}}\int_{-\infty}^u {\cal K}^\eps(s-v) dW_v
+\sigma_{0,s-u}^\eps z\Big)p(z) dz  {\cal K}^\eps(s-u) dW_u    
   .
\end{eqnarray*}
Therefore
\begin{eqnarray*}
\psi^\eps_t &=& \int_0^t  {G}( Z_s^\eps) ds + \int_t^T \EE \big[ {G}( Z_s^\eps )  \mid{\cal F}_t \big]  ds \\
&=& 
\Big[ \int_\RR \int_0^T {G} \Big( \sigma_{{\rm z}} \int_{-\infty}^0 {\cal K}^\eps(s-v) dW_v
+\sigma_{0,s}^\eps z\Big)  ds p(z) dz\Big] \\
&&+
\sigma_{{\rm z}}\int_0^t \Big[ \int_u^T \int_\RR {G}'\Big( \sigma_{{\rm z}}\int_{-\infty}^u {\cal K}^\eps(s-v) dW_v
+\sigma_{0,s-u}^\eps z\Big)p(z) dz  {\cal K}^\eps(s-u) ds \Big]dW_u 
 .
\end{eqnarray*}
This gives (\ref{def:varthetaepsb})  with 
\begin{eqnarray}
\vartheta^\eps_t  &=& \sigma_{{\rm z}}
 \int_t^T \int_\RR {G}'\Big( \sigma_{{\rm z}}\int_{-\infty}^t {\cal K}^\eps(s-v) dW_v
+\sigma_{0,s-t}^\eps z\Big)p(z) dz  {\cal K}^\eps(s-t) ds  ,
\label{eq:crocKW2}
\end{eqnarray}
which can also be written 
as stated in the Lemma.
\end{proof}

\begin{lemma}
\label{lem:K1}%
Let $Y_t$  be a bounded adapted process. Then we have  
\ban
\sup_{\eps \in (0,1]}  \eps^{-1/2} \sup_{t\in [0,T]}   \EE\left[ \Big|
\int_0^t Y_s   d\psi^\eps_s 
 \Big|^2 \mid {\cal F}_0  \right]^{1/2}   &<&    \infty 
  . \ean 
\end{lemma}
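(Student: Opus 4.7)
The plan is to reduce the estimate to the already established a.\,s.\ bound on the quadratic-variation intensity $\vartheta^\eps_t$. By Lemma~\ref{lem:1b}, $\psi^\eps$ is a square-integrable martingale with
$$d\langle \psi^\eps,\psi^\eps\rangle_t = (\vartheta^\eps_t)^2\,dt,$$
so for any bounded adapted process $Y$, the stochastic integral $\int_0^t Y_s\,d\psi^\eps_s$ is a well-defined square-integrable martingale. The first step is to apply the conditional It\^o isometry to write
$$\EE\left[\left|\int_0^t Y_s\,d\psi^\eps_s\right|^2 \,\Big|\, {\cal F}_0\right] = \EE\left[\int_0^t Y_s^2\,(\vartheta^\eps_s)^2\,ds \,\Big|\, {\cal F}_0\right].$$

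The second step is to invoke the pathwise bound~(\ref{eq:estimvarthetaeps}), namely $\sup_{t\in[0,T]} |\vartheta^\eps_t| \leq K_T \eps^{1/2}$ almost surely. Combined with the $L^\infty$ bound on $Y$, this yields
$$\EE\left[\left|\int_0^t Y_s\,d\psi^\eps_s\right|^2 \,\Big|\, {\cal F}_0\right] \leq \|Y\|_\infty^2\,K_T^2\,\eps\,T,$$
uniformly in $t\in[0,T]$ and $\eps\in(0,1]$. Multiplying by $\eps^{-1}$ and taking the square root and then the two suprema immediately gives the uniform bound claimed in the lemma.

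There is no real obstacle here: the work was done earlier when the pathwise estimate on $\vartheta^\eps$ was established (which in turn rested on ${\cal K}\in L^1(0,\infty)$ together with the scaling ${\cal K}^\eps(t) = \eps^{-1/2}{\cal K}(t/\eps)$ in the representation~(\ref{express:thetaeps1}) of $\vartheta^\eps_t$). The only mild technical point is ensuring the conditional It\^o isometry is justified, which follows from $Y$ being bounded and $\psi^\eps$ being square-integrable with the explicit quadratic variation above.
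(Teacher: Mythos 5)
Your proof is correct and follows essentially the same route as the paper: the paper likewise bounds the conditional second moment by (a constant times) the conditional increment of $\left< \psi^\eps,\psi^\eps\right>$, then concludes from $d\left< \psi^\eps,\psi^\eps\right>_t=(\vartheta^\eps_t)^2\,dt$ and the almost sure bound $\sup_{t\in[0,T]}|\vartheta^\eps_t|\leq K_T\eps^{1/2}$. Your explicit use of the conditional It\^o isometry is just a slightly more detailed version of the same argument.
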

\begin{proof}
There exists $\tilde{K} < \infty$ such that, for $t \in (0,T)$,
\ban
 \EE\left[ \left| \int_0^t Y_s   d\psi^\eps_s  \right|^2  \mid {\cal F}_0 \right] 
      \leq \tilde{K} \EE \big[ \left< \psi^\eps,\psi^\eps  \right>_T-\left< \psi^\eps,\psi^\eps  \right>_0  \mid {\cal F}_0 \big]  ,
\ean
 and the result follows from (\ref{def:varthetaepsb}) and (\ref{eq:estimvarthetaeps}).
   \end{proof}
 
\begin{lemma}
\label{lem:K2} 
Let $f(t,x)$ be smooth bounded and with bounded derivatives and let $X_t$  
be defined by Eq. (\ref{eq:Xdef}). 
Then for any $t \in [0,T]$ we have 
\ba
\lim_{\eps \to 0}   \EE\left[ \Big( \int_0^t f(s,X_s)  \big( \eps^{-1/2}   \sigma^\eps_s \vartheta^\eps_s  -   \overline{D}\big)
  ds \Big)^2 \right]     = 0 ,  \label{a31}     \\
  \lim_{\eps \to 0}   \EE\left[ \Big( \int_0^t f(s,X_s)  \left( \eps^{-1}    
  \big( \vartheta^\eps_s \right)^2   -   \overline{\Gamma}^2\big)
  ds \Big)^2 \right]     = 0  .  \label{a32} 
\ea
\end{lemma}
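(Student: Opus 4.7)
The plan is to rescale so that the $\eps$ dependence inside the memory kernel disappears, identify the limits as Gaussian integrals matching (\ref{def:DtT}) and (\ref{def:barGamma}), and then upgrade convergence of means to $L^2$ convergence of the integrated quantities via a decorrelation argument.

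First, I would substitute $u = s+\eps v$ in the expression (\ref{express:thetaeps1}) for $\vartheta^\eps_s$ and use ${\cal K}^\eps(\eps v)={\cal K}(v)/\sqrt{\eps}$ to obtain
\[
\eps^{-1/2}\vartheta^\eps_s=\sigma_{\rm z}\int_0^{(T-s)/\eps}\EE[FF'(Z^\eps_{s+\eps v})\mid{\cal F}_s]\,{\cal K}(v)\,dv,
\]
so that $\eps^{-1/2}\sigma^\eps_s\vartheta^\eps_s$ and $\eps^{-1}(\vartheta^\eps_s)^2$ are uniformly bounded in $(s,\eps)$ by the $L^\infty$ bounds on $F,F'$ and $\|{\cal K}\|_{L^1}$. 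By stationarity of $Z^\eps$, the pair $(Z^\eps_s,Z^\eps_{s+\eps v})$ is centered Gaussian with correlation ${\cal C}_{\cal K}(v,0)$ independent of $(s,\eps)$, so taking expectations and applying the tower property gives
\[
\EE[\eps^{-1/2}\sigma^\eps_s\vartheta^\eps_s]=\sigma_{\rm z}\int_0^{(T-s)/\eps}\Psi(v)\,{\cal K}(v)\,dv\;\longrightarrow\;\overline{D}\quad(\eps\to 0),\qquad s<T,
\]
where $\Psi(v)$ is the bracket in (\ref{def:DtT}); the convergence uses the polynomial tail (\ref{eq:defd}). For the squared quantity I would decompose $Z^\eps_{s+\eps v}=M^{s,v}+V_v\,\xi^{s,v}$ into its ${\cal F}_s$-conditional Gaussian mean (whose covariance in $v$ is $\sigma_{\rm z}^2{\cal C}_{\cal K}(v,v')$) and an independent residual; integrating out the two independent residuals separately and symmetrizing in $(v,v')$ then recovers precisely (\ref{def:barGamma}) in the limit.

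Next, set $\Phi^\eps_s:=\eps^{-1/2}\sigma^\eps_s\vartheta^\eps_s-\overline{D}$ and $\Theta^\eps_s:=\eps^{-1}(\vartheta^\eps_s)^2-\overline{\Gamma}^2$. Expanding the squares in the two claimed statements and invoking Fubini,
\[
\EE\Big[\Big(\int_0^t f(s,X_s)\Phi^\eps_s\,ds\Big)^2\Big]=\int_0^t\!\int_0^t\EE\big[f(s,X_s)f(s',X_{s'})\Phi^\eps_s\Phi^\eps_{s'}\big]\,ds\,ds',
\]
and similarly for $\Theta^\eps$. Since these integrands are bounded uniformly in $\eps$, bounded convergence closes the argument provided the integrand tends pointwise to zero for $s\neq s'$. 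This is the decorrelation step: for $|s-s'|\gg\eps$ the effective $\eps$-windows of $W$ feeding $\vartheta^\eps_s$ and $\vartheta^\eps_{s'}$ become asymptotically disjoint, so the joint expectation factorizes (up to negligible error) into products of the single-point expectations analyzed in the first step, each of which converges to $\overline{D}$ or $\overline{\Gamma}^2$.

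The hard part will be the quantitative decorrelation for the product $\Theta^\eps_s\Theta^\eps_{s'}$. Whereas $\Phi^\eps$'s analysis rests on a single-kernel integral, $\Theta^\eps_s\Theta^\eps_{s'}$ involves four Gaussian coordinates living in two separate windows, and one has to track how their joint covariance matrix behaves in the crossover regime $|s-s'|\sim\eps$. Isserlis-type identities together with the polynomial tail bound (\ref{eq:defd}) on ${\cal K}$ keep the required moments integrable and yield the needed pointwise convergence, but the bookkeeping is substantially heavier than for $\Phi^\eps$.
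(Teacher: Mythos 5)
Your first step is sound and coincides with what the paper actually does in Lemma \ref{lem:kest1}: after the substitution $u=s+\eps v$ in (\ref{express:thetaeps1}), the quantities $\eps^{-1/2}\sigma^\eps_s\vartheta^\eps_s$ and $\eps^{-1}(\vartheta^\eps_s)^2$ are uniformly bounded (this is (\ref{eq:estimvarthetaeps})), the correlation of the pair $(Z^\eps_s,Z^\eps_{s+\eps v})$ is ${\cal C}_{\cal K}(v,0)$, and the decomposition of $Z^\eps_{s+\eps v}$ into its ${\cal F}_s$-conditional mean (with covariance $\sigma_{\rm z}^2{\cal C}_{\cal K}(v,v')$ in $v$) plus an independent residual is exactly how the paper recovers (\ref{def:DtT}) and (\ref{def:barGamma}).

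The gap is in the decorrelation step. For fixed $s\neq s'$, the claim that $\EE[f(s,X_s)f(s',X_{s'})\Phi^\eps_s\Phi^\eps_{s'}]$ ``factorizes into products of the single-point expectations'' does not follow from the asymptotic disjointness of the $\eps$-windows of $W$: the weights $f(s,X_s)$ and $f(s',X_{s'})$ remain strongly correlated for all $\eps$ through the common price path, and $f(s',X_{s'})$ is correlated with $\Phi^\eps_{s'}$ itself, since $X_{s'}$ is driven by $W^*=\rho W+\sqrt{1-\rho^2}W'$ and by the volatility path, both of which feed $\vartheta^\eps_{s'}$. What is actually needed is a conditional statement, e.g.\ that $\EE[f(s',X_{s'})\Phi^\eps_{s'}\mid{\cal F}_{s'-\delta}]\to 0$ for $\eps\ll\delta\ll s'-s$; this requires both a coupling estimate showing that conditional laws of the volatility functionals relax to stationarity over macroscopic separations (the content of the paper's Lemma \ref{lem:kest2}, proved with independent Gaussian copies and yielding the rate $\eps^2\big(1\wedge(\eps/|t-t'|)\big)^{\frac{d}{2}-\frac{1}{4}}$) and a separate control of the residual dependence of $X_{s'}$ on the recent window, e.g.\ via $\EE[|X_{s'}-X_{s'-\delta}|^2]=O(\delta)$. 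Neither ingredient appears in your sketch, while the difficulty you do flag --- the four-coordinate bookkeeping in the crossover regime $|s-s'|\sim\eps$ --- is immaterial: that region has measure $O(\eps)$ in the double integral and is absorbed by uniform boundedness, so bounded convergence never sees it. The paper sidesteps the weighted factorization entirely: it proves the unweighted statement $\sup_{t}\EE\big[\big(\int_0^t(\eps^{-1}(\vartheta^\eps_s)^2-\overline{\Gamma}^2)\,ds\big)^2\big]\to0$ by a bias--variance split (Lemmas \ref{lem:kest1} and \ref{lem:kest2}) and then imports the $f(s,X_s)$ weights through the semimartingale-decomposition arguments of \cite{sv3} used for (\ref{eq:estimeRj}); the claim (\ref{a31}) is handled entirely by that reference. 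Your route can likely be repaired along the conditional lines above, but as written the key step is asserted rather than proved.
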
 
 \begin{proof}
 The result in Eq. (\ref{a31})  follows via an argument as in the proof of Eq. (\ref{eq:estimeRj}) for $j=3$
 as given in \cite{sv3} (note that, by (\ref{eq:estimvarthetaeps}), $\eps^{-1/2}\vartheta^\eps_t$ is uniformly bounded almost surely).  
The result in Eq.  (\ref{a32}) follows via an argument as in the proof of Eq. (\ref{eq:estimeRj}) for $j=2$
 as given in \cite{sv3}. To complete that proof it remains to show that   
   \ban
 \lim_{\eps \to 0} \sup_{t \in [0,T]}  \EE\left[   \left( \kappa_t^\eps \right)^2    \right]   = 0 ,
 \quad \hbox{for} \quad  \kappa_t^\eps  = 
     \int_0^t \left( 
       \eps^{-1}  \left(\vartheta^\eps_s\right)^2 - \overline{\Gamma} \right) ds .
 \ean
 We show this in  Lemma \ref{lem:kest}. 
 \end{proof}

 \begin{lemma}
\label{lem:kest} 
  Let
  \ban
     \kappa_t^\eps  = 
     \int_0^t \left( 
       \eps^{-1}  \left(\vartheta^\eps_s\right)^2 - \overline{\Gamma} \right) ds    ,
  \ean
 then 
 \ban
 \lim_{\eps \to 0} \sup_{t \in [0,T]}  \EE\left[   \left( \kappa_t^\eps \right)^2    \right]   = 0 .
 \ean
 \end{lemma}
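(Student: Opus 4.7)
The plan is to split $\EE[(\kappa_t^\eps)^2] = (\EE[\kappa_t^\eps])^2 + \mathrm{Var}(\kappa_t^\eps)$ and handle each separately. (I note in passing that, in view of Eq.~(\ref{a32}), the integrand of $\kappa_t^\eps$ ought to contain $\overline{\Gamma}^2$ rather than $\overline{\Gamma}$; I proceed with this reading.) A useful a priori ingredient throughout is the almost-sure bound $\eps^{-1}(\vartheta_s^\eps)^2 \leq K_T^2$ coming from (\ref{eq:estimvarthetaeps}). For the mean, I would rewrite $\vartheta^\eps_s$ via the change of variable $r = s + \eps u$ in Lemma~\ref{lem:1b},
$$\vartheta^\eps_s = \sigma_{\mathrm z}\sqrt{\eps}\int_0^{(T-s)/\eps} H^\eps_s(u)\mathcal{K}(u)\,du, \qquad H^\eps_s(u) = \EE[G'(Z^\eps_{s+\eps u})\mid \mathcal{F}_s],$$
so that $\eps^{-1}\EE[(\vartheta^\eps_s)^2] = \sigma_{\mathrm z}^2 \iint \EE[H^\eps_s(u)H^\eps_s(u')]\mathcal{K}(u)\mathcal{K}(u')\,du\,du'$ on $[0,(T-s)/\eps]^2$. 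The crucial step is to identify the inner expectation in closed form. Decomposing $Z^\eps_{s+\eps u} = \mu^\eps_s(u) + \nu^\eps_s(u)$ into its $\mathcal{F}_s$-measurable past and an independent future, a direct kernel-overlap computation shows that the centered Gaussian pair $(\mu^\eps_s(u),\mu^\eps_s(u'))$ has variances $\sigma_{\mathrm z}^2 \mathcal{C}_{\mathcal K}(u,u)$, $\sigma_{\mathrm z}^2 \mathcal{C}_{\mathcal K}(u',u')$ and covariance $\sigma_{\mathrm z}^2 \mathcal{C}_{\mathcal K}(u,u')$, independent of $\eps$ and $s$. Expanding $G'$ in Hermite polynomials and applying Mehler's formula once to the conditioning step and once to the correlated pair, the $a^{n/2}$ factors from the former combine with the $(\mathcal{C}_{\mathcal K}(u,u')/\sqrt{\mathcal{C}_{\mathcal K}(u,u)\mathcal{C}_{\mathcal K}(u',u')})^n$ factors from the latter to collapse cleanly into $\mathcal{C}_{\mathcal K}(u,u')^n$, yielding
$$\EE[H^\eps_s(u)H^\eps_s(u')] = \iint G'(\sigma_{\mathrm z} z)G'(\sigma_{\mathrm z} z')\, p_{\mathcal{C}_{\mathcal K}(u,u')}(z,z')\,dz\,dz'.$$
Since $\mathcal{K}\in L^1\cap L^2$ and $G'=FF'$ is bounded, dominated convergence as $(T-s)/\eps\to\infty$ gives $\eps^{-1}\EE[(\vartheta^\eps_s)^2] \to \overline{\Gamma}^2$ for each $s<T$, the boundary layer near $s=T$ contributing only $O(\eps)$; a second dominated-convergence argument then gives $\EE[\kappa_t^\eps]\to 0$ uniformly in $t\in[0,T]$.

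For the variance I would write
$$\mathrm{Var}(\kappa_t^\eps) = \int_0^t\int_0^t \eps^{-2}\,\mathrm{Cov}\big((\vartheta^\eps_s)^2,(\vartheta^\eps_{s'})^2\big)\,ds\,ds'$$
and split the region of integration into the near-diagonal $\{|s-s'|\leq R\eps\}$ and its complement. The near-diagonal contributes at most $CR\eps$ using only the uniform bound $\eps^{-1}(\vartheta^\eps_s)^2\leq K_T^2$. On the complement, I approximate $\vartheta^\eps_s$ by a truncated version $\tilde\vartheta^{\eps,R}_s$ obtained by restricting the $u$-integral to $[0,R]$ and restricting the past history inside $\mu^\eps_s(u)$ to Brownian increments on $[s-R\eps,\,s]$; for $|s-s'|>2R\eps$ these truncated quantities depend on disjoint portions of $W$ and are therefore independent, so the corresponding covariance vanishes exactly. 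The $L^2$ truncation error is controlled by the tail of $\mathcal{K}$: dropping the $u$-integral beyond $R$ costs $O(\sqrt{\eps}\,R^{1-d})$ by (\ref{eq:defd}), and dropping the old part of the history is analogous. Balancing the two contributions by a choice $R=R(\eps)=\eps^{-\alpha}$ with a suitable small $\alpha>0$ (feasible since $d>1$) yields $\mathrm{Var}(\kappa_t^\eps)\to 0$ uniformly in $t$.

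I expect the main obstacle to lie in this variance step, specifically in making the decorrelation of $(\vartheta^\eps_s)^2$ and $(\vartheta^\eps_{s'})^2$ as $|s-s'|/\eps\to\infty$ quantitative. Because $\vartheta^\eps_s$ is $\mathcal{F}_s$-measurable and involves the whole past of $W$ through the long-memory kernel, perfect independence is unattainable and one can only exploit the polynomial decay $|\mathcal{K}(t)|=O(t^{-d})$; careful bookkeeping is then required so that the $\sqrt{\eps}$-smallness extracted from the tails of $\mathcal{K}$ survives both the squaring and the double time-integration and still dominates the $CR\eps$ near-diagonal term. By contrast the mean analysis is essentially a structural identification: once the overlap covariance $\mathcal{C}_{\mathcal K}$ is recognized in the ``past'' contribution $\mu^\eps_s$, matching with the definition (\ref{def:barGamma}) of $\overline{\Gamma}^2$ is automatic through Hermite--Mehler.
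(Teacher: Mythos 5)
Your proposal is correct and follows the paper's top-level strategy exactly: the paper likewise splits $\EE[(\kappa_t^\eps)^2]$ into squared mean plus integrated covariance (via $(a+b)^2\le 2a^2+2b^2$), and your reading of $\overline{\Gamma}$ as $\overline{\Gamma}^2$ in the integrand is the right one --- Lemma \ref{lem:kest1} and Eq.~(\ref{a32}) confirm the statement contains a typo. For the mean term, the paper's Lemma \ref{lem:kest1} reaches the same identity you target, namely that the product of the two conditional expectations averages to $\iint G'(\sigma_{\rm z}y)G'(\sigma_{\rm z}y')p_{{\cal C}_{\cal K}(u,u')}(y,y')\,dy\,dy'$, but by a lighter route: it observes directly that the pair built from the shared past plus independent Gaussian innovations is jointly Gaussian with covariance $\sigma_{\rm z}^2{\cal C}_{\cal K}(u,u')$, so no Hermite--Mehler expansion is needed; the conclusion then follows by rescaling $s\mapsto s/\eps$ and dominated convergence, as you say. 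The genuine divergence is in the covariance step. You truncate $\vartheta^\eps_s$ in both the $u$-integral and the Brownian history so as to manufacture \emph{exact} independence off a near-diagonal strip of width $R\eps$, then balance the truncation error (controlled by the tail (\ref{eq:defd}) of ${\cal K}$) against the $O(R\eps)$ diagonal contribution by optimizing $R=R(\eps)$. The paper's Lemma \ref{lem:kest2} instead substitutes an independent copy of the pre-$t'$ history $A^\eps_{t',\cdot}$ inside the fourth moment and controls the discrepancy by a Lipschitz bound on products of $G'$ times $\EE[(A^\eps_{t',s})^2]^{1/2}\le C(1\wedge(\eps/(t-t'))^{d-\frac{1}{2}})$ (Lemma \ref{lem:7}), which yields the pointwise bound $\eps^{-2}|{\rm Cov}((\vartheta^\eps_t)^2,(\vartheta^\eps_{t'})^2)|\le C(1\wedge(\eps/(t-t'))^{\frac{d}{2}-\frac{1}{4}})$ and closes by dominated convergence over $[0,T]^2$, with no $\eps$-dependent cutoff to tune. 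Both arguments rest on the same mechanism --- the polynomially decaying kernel tail makes the shared history $L^2$-negligible once $|s-s'|\gg\eps$ --- and both are complete; the paper's version spares you the exponent bookkeeping you were (rightly) worried about, while yours makes the independence structure more transparent.
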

  \begin{proof} 
  As $(a+b)^2\leq 2a^2+2b^2$ we have
  \ban
  \EE\left[   \left( \kappa_t^\eps \right)^2    \right]  \leq
  2 \eps^{-2}  \int_0^tds \int_0^tds' {\rm Cov}\left( \left( \vartheta^\eps_s \right)^2,  \left(\vartheta^\eps_{s'}\right)^2 \right) 
  +2 \left( \int_0^t \left( \eps^{-1} \EE\left[  \vartheta^\eps_s \right]^2 
  - \overline{\Gamma}  \right) ds \right)^2 . 
  \ean
  The results then follows from Lemmas 
  \ref{lem:kest1} and \ref{lem:kest2} and the bound
  in Eq. (\ref{eq:estimvarthetaeps}) using dominated convergence theorem.
  \end{proof} 
  
 \begin{lemma}
 \label{lem:kest1}
 Let $\vartheta^\eps_t$ be defined by (\ref{express:thetaeps1}).
  We have  for any $t \in [0,T)$:
$$
\lim_{\eps \to 0}  \eps^{-1} \EE\left[ \left(\vartheta^\eps_t\right)^2 \right] =  \overline{\Gamma}^2 ,
$$
where $\overline{\Gamma}$ is defined by (\ref{def:barGamma}).   
 \end{lemma}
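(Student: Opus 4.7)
The plan is to expand $\EE[(\vartheta^\eps_t)^2]$ into a double integral, compute the Gaussian inner expectation explicitly, and finally rescale to extract the factor $\eps$.

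First, by Fubini applied to the representation (\ref{express:thetaeps1}),
\[
\EE[(\vartheta^\eps_t)^2] \;=\; \sigma_{\rm z}^2 \int_t^T\!\int_t^T \EE\!\left[\EE[G'(Z_s^\eps)\mid {\cal F}_t]\,\EE[G'(Z_{s'}^\eps)\mid {\cal F}_t]\right]\,{\cal K}^\eps(s-t)\,{\cal K}^\eps(s'-t)\,ds\,ds'.
\]
The key step is to identify the inner expectation. Using the representation (\ref{eq:Zgen}) we split, for $s\geq t$, $Z_s^\eps = m_s^\eps(t) + \tilde Z_s^\eps$, where $m_s^\eps(t) = \sigma_{\rm z}\!\int_{-\infty}^t {\cal K}^\eps(s-u)\,dW_u$ is ${\cal F}_t$-measurable and $\tilde Z_s^\eps = \sigma_{\rm z}\!\int_t^s {\cal K}^\eps(s-u)\,dW_u$ is Gaussian, independent of ${\cal F}_t$, and similarly for $s'$. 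To compute the product of conditional expectations, I will introduce independent copies $\tilde Z_s^{\eps,\prime},\tilde Z_{s'}^{\eps,\prime}$ of $\tilde Z_s^\eps,\tilde Z_{s'}^\eps$ that are independent of each other and of ${\cal F}_t$, and define $X = m_s^\eps(t) + \tilde Z_s^{\eps,\prime}$, $Y = m_{s'}^\eps(t) + \tilde Z_{s'}^{\eps,\prime}$. Then $X$ has the same ${\cal F}_t$-conditional law as $Z_s^\eps$ (and likewise for $Y$), so $\EE[G'(X)\mid {\cal F}_t]=\EE[G'(Z_s^\eps)\mid {\cal F}_t]$, while $X$ and $Y$ are \emph{conditionally independent} given ${\cal F}_t$. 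Hence
\[
\EE\!\left[\EE[G'(Z_s^\eps)\mid {\cal F}_t]\,\EE[G'(Z_{s'}^\eps)\mid {\cal F}_t]\right] \;=\; \EE[G'(X)\,G'(Y)].
\]
A direct computation shows $(X,Y)$ is centered Gaussian with variances $\sigma_{\rm z}^2$ and covariance $\text{Cov}(m_s^\eps(t),m_{s'}^\eps(t)) = \sigma_{\rm z}^2\!\int_0^\infty {\cal K}^\eps(s-t+v){\cal K}^\eps(s'-t+v)\,dv$. Substituting $v=\eps w$ this covariance equals $\sigma_{\rm z}^2 {\cal C}_{\cal K}((s-t)/\eps,(s'-t)/\eps)$, so using $G'=FF'$:
\[
\EE[G'(X)G'(Y)] \;=\; \iint_{\RR^2} FF'(\sigma_{\rm z} z)\,FF'(\sigma_{\rm z} z')\,p_{{\cal C}_{\cal K}((s-t)/\eps,(s'-t)/\eps)}(z,z')\,dz\,dz'.
\]

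With the change of variables $u=(s-t)/\eps,\ u'=(s'-t)/\eps$, each factor ${\cal K}^\eps(s-t)\,ds = \sqrt{\eps}\,{\cal K}(u)\,du$, producing an overall factor of $\eps$:
\[
\eps^{-1}\EE[(\vartheta^\eps_t)^2] \;=\; \sigma_{\rm z}^2 \int_0^{(T-t)/\eps}\!\int_0^{(T-t)/\eps} I(u,u')\,{\cal K}(u)\,{\cal K}(u')\,du\,du',
\]
with $I(u,u')=\iint FF'(\sigma_{\rm z} z)FF'(\sigma_{\rm z} z')p_{{\cal C}_{\cal K}(u,u')}(z,z')\,dzdz'$. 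Since $F,F'$ are bounded, $|I(u,u')|\leq \|FF'\|_\infty^2$, so the integrand is dominated by $\|FF'\|_\infty^2|{\cal K}(u){\cal K}(u')|$ which is in $L^1([0,\infty)^2)$ because ${\cal K}\in L^1$ by assumption~(i). Dominated convergence therefore lets $\eps\to 0$, and by symmetry of $I(u,u'){\cal K}(u){\cal K}(u')$ under $(u,u')\leftrightarrow(u',u)$, the limit equals $\overline{\Gamma}^2$ as defined in (\ref{def:barGamma}).

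The only subtle step is the identification of the inner expectation; the rest is bookkeeping. The trick of introducing independent copies of $(\tilde Z_s^\eps,\tilde Z_{s'}^\eps)$ to force conditional independence is the essential device, and it is this that converts the product of conditional expectations into a bivariate Gaussian integral whose covariance naturally produces ${\cal C}_{\cal K}$. Once this is in place, the $\eps$ scaling and convergence are straightforward applications of the change of variables and the assumption ${\cal K}\in L^1$.
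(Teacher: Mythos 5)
Your proposal is correct and follows essentially the same route as the paper: expand $\EE[(\vartheta^\eps_t)^2]$ into a double time integral, identify the inner expectation as a bivariate Gaussian functional with correlation ${\cal C}_{\cal K}((s-t)/\eps,(s'-t)/\eps)$, and then rescale and pass to the limit using ${\cal K}\in L^1$. The paper reaches the same bivariate Gaussian identity by writing each conditional expectation as an explicit Gaussian integral over the innovation and recombining a four-dimensional Gaussian vector, which is just a different packaging of your ``independent copies'' device.
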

 \begin{proof}
We 
consider
 \begin{eqnarray*}
\EE\left[ \left(\vartheta^\eps_t\right)^2 \right]  \sigma_{{\rm z}}^{-2} &=&  
\EE\Big[  
\EE \Big[    \int_t^T G' (Z_s^\eps)  {\cal K}^\eps(s-t) ds  \mid {\cal F}_t \Big] 
\EE \Big[    \int_{t}^{T} G' (Z_s^\eps)  {\cal K}^\eps(s-t) ds  \mid {\cal F}_{t} \Big] 
\Big]\\
 &=& 2
\int_0^{T-t} ds \int_{s}^{T-t} ds' 
\EE\left[ 
\EE \big[ G'(Z_s^\eps)   \mid{\cal F}_0\big] 
\EE \big[ G'(Z_{s'}^\eps)  \mid{\cal F}_0\big] 
\right] {\cal K}^\eps(s) {\cal K}^\eps(s')
.
\end{eqnarray*}
We can then write
\begin{eqnarray*}
 \EE\left[ \left(\vartheta^\eps_t\right)^2 \right] \sigma_{{\rm z}}^{-2}  &=&  
2
\int_0^{ T-t } ds \int_{s}^{T-t} ds'  
\\ &&   \hbox{} \times
\EE\left[ 
\big[  \int_\RR G'(  \sigma_{\rm z} \int_{-\infty}^0 {\cal K}^\eps(s-v) 
dW_v+\sigma_{0,s}^\eps z )  
p(z) dz \big]  
\right.
\\ &&   \hbox{} \times
\left.   \big[  \int_\RR G'(\sigma_{\rm z} \int_{-\infty}^0 {\cal K}^\eps(s'-v) 
dW_v+\sigma_{0,s'}^\eps z')  p(z') dz' \big] 
\right]   {\cal K}^\eps(s)  {\cal K}^\eps(s')   \\   
&=&
2
\int_0^{ T-t } ds \int_{s}^{T-t} ds'  
 \int_{\RR^2} du  du'
  \big[  \int_\RR G'(  \sigma_{s,\infty}^\eps u +\sigma_{0,s}^\eps z )  
p(z) dz \big]  
\\ &&   \hbox{} \times
 \big[  \int_\RR G'(  \sigma_{s',\infty}^\eps u' +\sigma_{0,s'}^\eps z' )p(z') dz' \big] 
p_{\tilde{\cal C}_{\cal K}^\eps(s,s')}(u,u')   {\cal K}^\eps(s)  {\cal K}^\eps(s')  ,
 \end{eqnarray*}
where $p(z)$ is the pdf of the standard normal distribution,
$p_C$ is the pdf of the (standardized) bivariate normal distribution with 
mean zero and covariance matrix as in Lemma \ref{prop:main}, and
 \ban
\tilde{\cal C}_{\cal K}^\eps(s,s') = \frac{\sigma^2_{\rm z}  \int_{-\infty}^0 {\cal K}^\eps(s'-v) {\cal K}^\eps(s-v)  dv }
     {\sigma_{s,\infty}^\eps \sigma_{s',\infty}^\eps}  .
\ean
By remarking that $(\sigma_{s,\infty}^\eps)^2+(\sigma_{0,s}^\eps)^2 =\sigma_{\rm z}^2 $ and $
\sigma_{s,\infty}^\eps \sigma_{s',\infty}^\eps \tilde{\cal C}_{\cal K}^\eps(s,s') = \sigma_{\rm z}^2 {\cal C}_{\cal K}^\eps(s,s')$,
with ${\cal C}_{\cal K}^\eps(s,s') = {\cal C}_{\cal K}(s/\eps,s'/\eps)$,
we can see that, if $(Z,Z',U,U')$ is a four-dimensional Gaussian vector with pdf $p(z)p(z')p_{\tilde{\cal C}_{\cal K}^\eps(s,s')}(u,u')$,
then $(\sigma_{s,\infty}^\eps U +\sigma_{0,s}^\eps Z, \sigma_{s',\infty}^\eps U' +\sigma_{0,s'}^\eps Z') =
( \sigma_{\rm z}  Y,\sigma_{\rm z}  Y')$ where $(Y,Y')$ 
is a two-dimensional Gaussian vector with pdf $p_{{\cal C}_{\cal K}^\eps(s,s')}(y,y')$.
This gives
\begin{eqnarray*}
 \EE\left[ \left(\vartheta^\eps_t\right)^2 \right]  &=&  
2\sigma_{{\rm z}}^2  
\int_0^{ T-t } ds \int_{s}^{T-t} ds'  
 \int_{\RR^2} dy dy'
 G'(  \sigma_{\rm z} y )  
 G'(  \sigma_{\rm z} y' )  
p_{ {\cal C}_{\cal K}^\eps(s,s')}(y,y')  {\cal K}^\eps(s)  {\cal K}^\eps(s') 
  ,
 \end{eqnarray*}
or
\begin{eqnarray*}
 \EE\left[ \left(\vartheta^\eps_t\right)^2 \right]  &=&  
2 \eps \sigma_{{\rm z}}^2  
\int_0^{ \frac{T-t}{\eps} } ds \int_{s}^{\frac{T-t}{\eps} } ds'  
 \int_{\RR^2} dy dy'
 G'(  \sigma_{\rm z} y ) 
 G'(  \sigma_{\rm z} y' )  
p_{ {\cal C}_{\cal K} (s,s')}(y,y') {\cal K} (s) {\cal K} (s') 
.
 \end{eqnarray*}
By using the fact that ${\cal K}\in L^1(0,\infty)$ we finally
get
$$
\lim_{\eps \to 0} \eps^{-1}  \EE\left[ \left(\vartheta^\eps_t\right)^2 \right]  = \overline{\Gamma}^2 ,
$$
with the expression (\ref{def:barGamma}) of $\overline{\Gamma}$, which completes the proof of the Lemma.
\end{proof} 
  
\begin{lemma}
 \label{lem:kest2}
 For any $0 \leq t < t' < T$ we have
 \ban
   \lim_{\eps \to 0} \eps^{-2} \left| {\rm Cov}\left( \left(\vartheta^\eps_t\right)^2,  
   \left(\vartheta^\eps_{t'}\right)^2\right) \right|= 0.
 \ean
 \end{lemma}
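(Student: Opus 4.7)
The plan is to exploit the fact that the rescaled kernel ${\cal K}^\eps$ is concentrated on a window of width $\eps$, so that $\vartheta^\eps_{t'}$ depends only weakly on the Brownian history before the intermediate time $t$. More precisely, I will split $\vartheta^\eps_{t'}$ into a piece that is exactly independent of ${\cal F}_t$ (hence of $\vartheta^\eps_t$) plus a small $L^2$ remainder whose size is controlled by the decay assumption $|{\cal K}(u)|=O(u^{-d})$ with $d>1$.

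For $s>t'>t$, decompose the conditional mean
$M^\eps_{t'}(s):=\sigma_{\rm z}\int_{-\infty}^{t'}{\cal K}^\eps(s-u)\,dW_u$
as $A_s+B_s$, where $A_s=\sigma_{\rm z}\int_{-\infty}^t {\cal K}^\eps(s-u)\,dW_u$ is ${\cal F}_t$-measurable and $B_s=\sigma_{\rm z}\int_t^{t'}{\cal K}^\eps(s-u)\,dW_u$ is independent of ${\cal F}_t$. Writing $\EE[G'(Z^\eps_s)\mid{\cal F}_{t'}]=H_s(A_s+B_s)$ with $H_s(x):=\int G'(x+\sigma^\eps_{0,s-t'}z)p(z)\,dz$, and noting that $|H_s'|\le\|G''\|_\infty$, define the truncated version
$\tilde\vartheta^\eps_{t'}:=\sigma_{\rm z}\int_{t'}^T H_s(B_s){\cal K}^\eps(s-t')\,ds$,
which is by construction independent of ${\cal F}_t$, hence independent of $\vartheta^\eps_t$. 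For $s\ge t'$ one has
$$\EE[A_s^2]=\sigma_{\rm z}^2\int_{(s-t)/\eps}^{\infty}{\cal K}(w)^2\,dw\le C\eps^{2d-1}(t'-t)^{1-2d},$$
using $\int_{w_0}^\infty{\cal K}(w)^2\,dw\le C w_0^{1-2d}$ (valid for $d>1/2$). Combined with $\int_{t'}^T|{\cal K}^\eps(s-t')|\,ds=O(\eps^{1/2})$ (since ${\cal K}\in L^1$) and a Cauchy--Schwarz argument, the mean value theorem applied to $H_s$ yields
$$\|\vartheta^\eps_{t'}-\tilde\vartheta^\eps_{t'}\|_{L^2}=O(\eps^d).$$

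Set $r^\eps:=\vartheta^\eps_{t'}-\tilde\vartheta^\eps_{t'}$ and expand $(\vartheta^\eps_{t'})^2=(\tilde\vartheta^\eps_{t'})^2+2\tilde\vartheta^\eps_{t'}r^\eps+(r^\eps)^2$. The independence of $(\vartheta^\eps_t)^2$ and $(\tilde\vartheta^\eps_{t'})^2$ kills the leading covariance term, so
$$\big|{\rm Cov}\big((\vartheta^\eps_t)^2,(\vartheta^\eps_{t'})^2\big)\big|\le 2\big|{\rm Cov}\big((\vartheta^\eps_t)^2,\tilde\vartheta^\eps_{t'}r^\eps\big)\big|+\big|{\rm Cov}\big((\vartheta^\eps_t)^2,(r^\eps)^2\big)\big|.$$
Now invoke the deterministic bound $|\vartheta^\eps_t|^2\le K_T^2\eps$ from (\ref{eq:estimvarthetaeps}) to estimate each covariance as $\|(\vartheta^\eps_t)^2\|_\infty\cdot 2\EE[|Y|]$. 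For the cross term, $\EE|\tilde\vartheta^\eps_{t'}r^\eps|\le \|\tilde\vartheta^\eps_{t'}\|_{L^2}\|r^\eps\|_{L^2}=O(\eps^{1/2})\cdot O(\eps^d)=O(\eps^{d+1/2})$; for the quadratic term, $\EE[(r^\eps)^2]=O(\eps^{2d})$. Thus
$$\big|{\rm Cov}\big((\vartheta^\eps_t)^2,(\vartheta^\eps_{t'})^2\big)\big|=O(\eps^{d+3/2})+O(\eps^{2d+1}),$$
both of which are $o(\eps^2)$ since $d>1$. Multiplying by $\eps^{-2}$ gives the claim.

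The main technical obstacle is organizing the Gaussian conditioning cleanly enough to identify $\tilde\vartheta^\eps_{t'}$ as genuinely ${\cal F}_t$-independent while keeping the Lipschitz smoothing by $H_s$ in one's hands; the decay budget $d>1$ is what makes the kernel tail $(-\infty,t]$ contribute only $O(\eps^{d})$ rather than the ambient $O(\eps^{1/2})$, and this gain is precisely what turns the trivial $O(\eps^2)$ Cauchy--Schwarz bound into the required $o(\eps^2)$.
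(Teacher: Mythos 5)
Your proof is correct, and it reaches the conclusion by a genuinely cleaner packaging of the same underlying decorrelation mechanism that the paper uses. Both arguments ultimately rest on the fact that the kernel's influence from the Brownian past lying beyond the time gap is small: your bound $\EE[A_s^2]\le C\eps^{2d-1}(t'-t)^{1-2d}$ for $s\ge t'$ is exactly the content of the paper's Lemma \ref{lem:7} (the quantity $\sigma^\eps_{t'-t,\infty}$), and in both cases this is what upgrades the trivial Cauchy--Schwarz bound $O(\eps^2)$ to $o(\eps^2)$. The difference is in the implementation. The paper (after relabeling so that $t'$ is the earlier time) expands the covariance as a fourfold kernel integral of a conditional-variance term and controls that term by a coupling with independent copies $\tilde A,\tilde B,\tilde C$ of the three Gaussian pieces, invoking Lipschitz bounds on $G'$; this yields the rate $O(\eps^2(\eps/(t'-t))^{d/2-1/4})$. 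You instead construct an explicit ${\cal F}_t$-independent surrogate $\tilde\vartheta^\eps_{t'}$ by deleting the ${\cal F}_t$-measurable part of the conditional mean inside the smoothed nonlinearity $H_s$, so that the leading covariance vanishes exactly rather than being estimated, and the whole error is concentrated in the $L^2$ remainder $r^\eps=O(\eps^d)$; combined with the a.s.\ bound $|\vartheta^\eps_t|\le K_T\eps^{1/2}$ this gives $O(\eps^{d+3/2})+O(\eps^{2d+1})$, which is in fact a slightly sharper rate than the paper's, and your argument only needs $d>1/2$ rather than $d>1$. The one step worth stating carefully is the passage from the pointwise Lipschitz bound $|H_s(A_s+B_s)-H_s(B_s)|\le\|G''\|_\infty|A_s|$ to $\|r^\eps\|_{L^2}=O(\eps^d)$: this should be Minkowski's integral inequality, or equivalently Cauchy--Schwarz with respect to the finite measure $|{\cal K}^\eps(s-t')|\,ds$ whose total mass is $O(\eps^{1/2})$; with that reading your stated rate is exactly right and the proof is complete.
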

 \begin{proof}
 Let us consider $0 \leq t' <  t\leq T$. We have
\begin{eqnarray*}
 && \EE\big[ \left(\vartheta^\eps_{t} \right)^2 \left(\vartheta^\eps_{t'}\right)^2   \big] 
= \sigma_{{\rm z}}^4 
\int_t^{T} ds   {\cal K}^\eps(s-t) \int_t^{T} ds'   {\cal K}^\eps(s'-t) 
 \int_{t'}^T du  {\cal K}^\eps(u -t') \\ && \times  \int_{t'}^T du'  {\cal K}^\eps(u' -t')    \EE\Big[ 
\EE\big[  G'(Z_s^\eps)|{\cal F}_t \big]   \EE\big[  G'(Z_{s'}^\eps)|{\cal F}_t \big] 
 \EE\big[   G'(Z_{u}^\eps)|{\cal F}_{t'} \big]   \EE\big[   G'(Z_{u'}^\eps)|{\cal F}_{t'} \big]   \Big]  ,
\end{eqnarray*}
so we can write
\begin{eqnarray*}
&& {\rm Cov} \big( (\vartheta^\eps_{t})^2 ,  (\vartheta^\eps_{t'})^2  \big)
 =  
 \sigma_{{\rm z}}^4  \int_t^{T} ds   {\cal K}^\eps(s-t)  \int_{t}^T ds'  {\cal K}^\eps(s' - t)  \\
&&\times
\int_{t'}^{T} du   {\cal K}^\eps(u-t')  \int_{t'}^T du'  {\cal K}^\eps(u' - t')
\bigg\{
\EE\Big[ 
\EE\big[  G'(Z_s^\eps)|{\cal F}_t \big]  \EE\big[  G'(Z_{s'}^\eps)|{\cal F}_{t} \big]
\\ && \hbox{} \times 
\EE\big[  G'(Z_u^\eps)|{\cal F}_{t'} \big]  \EE\big[  G'(Z_{u'}^\eps)|{\cal F}_{t'} \big] 
 \Big]    - 
\EE\Big[ 
\EE\big[ G'(Z_s^\eps)|{\cal F}_t \big]  \EE\big[   G'(Z_{s'}^\eps) |{\cal F}_t   \big]  \Big]
\\ && \hbox{} \times \EE\Big[
 \EE\big[ G'(Z_u^\eps)|{\cal F}_{t'} \big]  \EE\big[   G'(Z_{u'}^\eps) |{\cal F}_{t'}   \big] 
\Big]  \bigg\}
\\ && \hspace*{1.05in} =
 \sigma_{{\rm z}}^4  \int_t^{T} ds   {\cal K}^\eps(s-t)  \int_{t}^T ds'  {\cal K}^\eps(s' - t)  \\
&&\times
\int_{t'}^{T} du   {\cal K}^\eps(u-t')  \int_{t'}^T du'  {\cal K}^\eps(u' - t')
\bigg\{ \EE\Big[
 \EE\big[ 
\EE\big[  G'(Z_s^\eps)|{\cal F}_t \big]  \EE\big[  G'(Z_{s'}^\eps)|{\cal F}_{t} \big]
|{\cal F}_{t'} \big]
\\ && \hbox{} \times 
\EE\big[  G'(Z_u^\eps)|{\cal F}_{t'} \big]  \EE\big[  G'(Z_{u'}^\eps)|{\cal F}_{t'} \big] 
 \Big]    - 
\EE\Big[  \EE\big[
\EE\big[ G'(Z_s^\eps)|{\cal F}_t \big]  \EE\big[   G'(Z_{s'}^\eps) |{\cal F}_t   \big]  \big]
\\ && \hbox{} \times 
 \EE\big[ G'(Z_u^\eps)|{\cal F}_{t'} \big]  \EE\big[   G'(Z_{u'}^\eps) |{\cal F}_{t'}   \big] 
\Big]  \bigg\} ,
\end{eqnarray*}
and therefore
\begin{eqnarray*}
&& \big| {\rm Cov} \big( (\vartheta^\eps_{t})^2  , (\vartheta^\eps_{t'})^2  \big) \big|
\leq   \sigma_{{\rm z}}^4    \|G'\|_\infty^2  \int_t^{T} ds   |{\cal K}^\eps(s-t)|  
 \int_t^{T} ds'   |{\cal K}^\eps(s'-t)|
 \\ && \hbox{} \times 
 \int_{t'}^T du  |{\cal K}^\eps(u-t')|
\int_{t'}^T du'  |{\cal K}^\eps(u'-t')|
\EE\Big[ \big(\EE\big[ \EE\big[   G'(Z_{s}^\eps)|{\cal F}_{t} \big]  \EE\big[   G'(Z_{s'}^\eps)|{\cal F}_{t} \big] 
|{\cal F}_{t'} \big] 
\\ && \hbox{}  \quad 
-   \EE\big[     \EE\big[   G'(Z_{s}^\eps)|{\cal F}_{t} \big]  \EE\big[   G'(Z_{s'}^\eps)|{\cal F}_{t} \big]      \big] \big)^2 \Big]^{1/2} .
\end{eqnarray*}
We can write for any $\tau>t>t'$:
\begin{eqnarray*}
&&Z^\eps_{\tau} = A^\eps_{t',\tau} +  B^\eps_{t',t,\tau} + C^\eps_{t,\tau},\quad   \quad 
 A^\eps_{t',\tau} = \sigma_{\rm z} \int_{-\infty}^{t'} {\cal K}^\eps(\tau-u)dW_u,
  \\ && 
 B^\eps_{t',t,\tau} =  \sigma_{\rm z}\int_{t'}^{t} {\cal K}^\eps(\tau-u)dW_u ,
  \quad   \quad 
 C^\eps_{t,\tau} =  \sigma_{\rm z}\int_{t}^{\tau} {\cal K}^\eps(\tau-u)dW_u ,
\end{eqnarray*}
with  $A^\eps_{t',\tau}, B^\eps_{t',t,\tau}, C^\eps_{t,\tau}$ being independent
and in particular $A^\eps_{t',\tau}$  is ${\cal F}_{t'}$ adapted.
Therefore, with $s'\geq s \geq t > t'$, we have 
\begin{eqnarray*}
&& 
\EE\Big[ \big(
\EE\big[
\EE\big[   G'(Z_{s}^\eps)|{\cal F}_{t} \big]  \EE\big[   G'(Z_{s'}^\eps)|{\cal F}_{t} \big] 
|{\cal F}_{t'} \big] 
  -   \EE\big[     \EE\big[   G'(Z_{s}^\eps)|{\cal F}_{t} \big] 
   \EE\big[   G'(Z_{s'}^\eps)|{\cal F}_{t} \big]      \big] \big)^2 \Big] \\
 &&= 
  \EE\Big[   
\EE\big[
\EE\big[   G'(Z_{s}^\eps)|{\cal F}_{t} \big]  \EE\big[   G'(Z_{s'}^\eps)|{\cal F}_{t} \big] 
|{\cal F}_{t'} \big]^2 \Big] 
  -   \EE\big[     \EE\big[   G'(Z_{s}^\eps)|{\cal F}_{t} \big] 
   \EE\big[   G'(Z_{s'}^\eps)|{\cal F}_{t} \big]      \big]^2  
   \\ &&
=
\EE\Big[
G'( A^\eps_{t',s} +  B^\eps_{t',t,s} + C^\eps_{t,s}) 
G'( A^\eps_{t',s'} +  B^\eps_{t',t,s'} + \tilde{C}^\eps_{t,s'})
G'( A^\eps_{t',s} +  \tilde{B}^\eps_{t',t,s} + \tilde{\tilde{C}}^\eps_{t,s}) 
\\ && 
\quad \hbox{} \times
G'( A^\eps_{t',s'} +  \tilde{B}^\eps_{t',t,s'} + \tilde{\tilde{\tilde{C}}}^\eps_{t,s'})
-
G'( A^\eps_{t',s} +  B^\eps_{t',t,s} + C^\eps_{t,s}) G'( A^\eps_{t',s'} +  B^\eps_{t',t,s'} + \tilde{C}^\eps_{t,s'} )
\\ && \quad \hbox{} \times 
G'( \tilde{A}^\eps_{t',s} +  \tilde{B}^\eps_{t',t,s} + \tilde{\tilde{C}}^\eps_{t,s}) 
G'( \tilde{A}^\eps_{t',s'} +  \tilde{B}^\eps_{t',t,s'} + \tilde{\tilde{\tilde{C}}}^\eps_{t,s'})
\Big] ,
\end{eqnarray*}
where each additional ``tilde'' refers to a new independent copy of
$A^\eps_{t',s} ,  B^\eps_{t',t,s}, C^\eps_{t,s}$.
 We can then write
\begin{eqnarray*}
&&  \EE\Big[ \big(
\EE\big[
\EE\big[   G'(Z_{s}^\eps)|{\cal F}_{t} \big]  \EE\big[   G'(Z_{s'}^\eps)|{\cal F}_{t} \big] 
|{\cal F}_{t'} \big]   
  -   \EE\big[     \EE\big[   G'(Z_{s}^\eps)|{\cal F}_{t} \big] 
   \EE\big[   G'(Z_{s'}^\eps)|{\cal F}_{t} \big]      \big] \big)^2 \Big]    \\
&& \leq   \|G'\|_\infty^2
\EE\Big[
\big( G'( A^\eps_{t',s} +  \tilde{B}^\eps_{t',t,s} + \tilde{\tilde{C}}^\eps_{t,s}) 
G'( A^\eps_{t',s'} +  \tilde{B}^\eps_{t',t,s'} + \tilde{\tilde{\tilde{C}}}^\eps_{t,s'})  \\ && \hbox{} 
- G'( \tilde{A}^\eps_{t',s} +  \tilde{B}^\eps_{t',t,s} + \tilde{\tilde{C}}^\eps_{t,s}) 
G'( \tilde{A}^\eps_{t',s'} +  \tilde{B}^\eps_{t',t,s'} + \tilde{\tilde{\tilde{C}}}^\eps_{t,s'})  \big)^2
\Big]^{1/2}  \\
&& \leq  2 \|G'\|_\infty^2  \|G' G'' \|_\infty    \Big( 
   \EE \big[ (A^\eps_{t',s}  - \tilde{A}^\eps_{t',s})^2 \big]^{1/2}+
\EE \big[ (A^\eps_{t',s'}  - \tilde{A}^\eps_{t',s'})^2 \big]^{1/2}\Big) \\
&& \leq 2 \sqrt{2} \|G'\|_\infty^2  \|G' G'' \|_\infty   \Big( \EE \big[ (A^\eps_{t',s}  )^2 \big]^{1/2}+
\EE \big[ (A^\eps_{t',s'} )^2 \big]^{1/2}\Big) \\ 
&& 
\leq 2 \sqrt{2}  \|G'\|_\infty^2  \|G' G'' \|_\infty  
 \Big[  \Big(  \sigma_{\rm z}^2 \int_{-\infty}^{t'} {\cal K}^\eps(s-u)^2 du \Big)^{1/2} 
+
 \Big(  \sigma_{\rm z}^2 \int_{-\infty}^{t'} {\cal K}^\eps(s'-u)^2 du \Big)^{1/2} \Big]   \\ && 
 \leq 4\sqrt{2} \|G'\|_\infty^2  \|G' G'' \|_\infty    \sigma^\eps_{t'-t,\infty} 
  \leq  {C_1}   \big( 1 \wedge (\eps/(t'-t))^{d-\frac{1}{2}}\big)  , 
\end{eqnarray*}  
where we used Lemma  \ref{lem:7} in the last inequality. 
 Then, using the fact that ${\cal K}\in L^1$, this gives 
\begin{eqnarray*}
\big|{\rm Cov} \big( (\vartheta^\eps_{t})^2 ,  (\vartheta^\eps_{t'})^2  \big)\big|
&\leq & C_2
 \left(\int_t^{T} ds   |{\cal K}^\eps(s-t)| \int_{t'}^T du  |{\cal K}^\eps(u-t)|  \right)^2
  \big( 1 \wedge (\eps/(t'-t)))^{\frac{d}{2}-\frac{1}{4}}\big)\\
&\leq & C_3 \eps^2  \big( 1 \wedge (\eps/(t'-t)))^{\frac{d}{2}-\frac{1}{4} }\big) ,
\end{eqnarray*} 
from which the lemma follows. 
 \end{proof}   
 
\begin{lemma}
\label{lem:7}%
Let $\sigma^\eps_{t,\infty}$ be defined by (\ref{def:sigmaepsst}).
Then there exists $C>0$ such that
\begin{eqnarray}\label{eq:sigi2}
\sigma^\eps_{t,\infty} \leq C \big( 1 \wedge (\eps/t)^{d-\frac{1}{2}}\big) .
\end{eqnarray}
\end{lemma}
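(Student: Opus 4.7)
The plan is to reduce everything to a direct tail estimate on ${\cal K}$ via a simple change of variables, then combine the tail bound with the global $L^2$ normalization to produce the minimum on the right-hand side.

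First I would rewrite $(\sigma_{t,\infty}^\eps)^2$ using the definition (\ref{def:Keps}): since ${\cal K}^\eps(u) = \eps^{-1/2} {\cal K}(u/\eps)$, the change of variables $v = u/\eps$ gives
\begin{equation*}
(\sigma_{t,\infty}^\eps)^2 = \sigma_{\rm z}^2 \int_t^\infty \frac{1}{\eps} {\cal K}(u/\eps)^2 \, du = \sigma_{\rm z}^2 \int_{t/\eps}^\infty {\cal K}(v)^2 \, dv .
\end{equation*}
From assumption (i) the right-hand side is bounded uniformly by $\sigma_{\rm z}^2$, which yields the ``$1$'' in the claimed minimum.

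Next I would use assumption (ii), which provides constants $C_0, v_0 > 0$ with $|{\cal K}(v)| \leq C_0 v^{-d}$ for all $v \geq v_0$. For $t/\eps \geq v_0$ one then has
\begin{equation*}
\int_{t/\eps}^\infty {\cal K}(v)^2 \, dv \leq C_0^2 \int_{t/\eps}^\infty v^{-2d} \, dv = \frac{C_0^2}{2d-1} \Big(\frac{\eps}{t}\Big)^{2d-1} ,
\end{equation*}
where we used $d > 1$ so that $2d-1 > 1 > 0$, making the integral convergent. For $t/\eps < v_0$ the factor $(\eps/t)^{2d-1}$ is bounded below by $v_0^{-(2d-1)}$, so enlarging the constant still gives the bound $\int_{t/\eps}^\infty {\cal K}(v)^2 dv \leq C_1 (\eps/t)^{2d-1}$ in the regime where this is sharper than the global bound.

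Taking square roots and absorbing $\sigma_{\rm z}$ and all numerical constants into a single $C$, we conclude
\begin{equation*}
\sigma^\eps_{t,\infty} \leq C \big(1 \wedge (\eps/t)^{d-\frac{1}{2}}\big) .
\end{equation*}
There is no real obstacle here beyond correctly handling the two regimes $t/\eps$ small versus large so that a single constant $C$ works uniformly; this is resolved by simply taking the worse of the two prefactors.
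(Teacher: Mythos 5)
Your proof is correct and follows essentially the same route as the paper's: the global $L^2$ normalization of ${\cal K}$ gives the uniform bound, and the tail decay $|{\cal K}(t)|=O(t^{-d})$ with $d>1$ gives the $(\eps/t)^{2d-1}$ bound for the squared quantity, with the two regimes $t/\eps$ small versus large handled by enlarging the constant. The only cosmetic difference is that you change variables $v=u/\eps$ first while the paper estimates ${\cal K}^\eps(s)^2\leq \eps^{2d-1}K^2 s^{-2d}$ directly in the original variable.
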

\begin{proof}
By assumption there exists 
$K,t_0>0$ 
so that $|{\cal K}(t)| \leq K t^{-d}$ 
for $t \geq t_0$ with $d>1$.  Therefore, for $t \geq \eps t_0$: 
\ban
 \int_t^\infty  {\cal K}^\eps(s)^2 ds  \leq  
 \eps^{2d-1} \int_t^\infty  K^2 s^{-2d} ds  =
  \frac{K^2}{2d-1} \left( \frac{\eps}{t} \right)^{2d-1}   .
\ean
For $t < \eps t_0$ we have  $\int_t^\infty  {\cal K}^\eps(s)^2 ds  \leq 1$ since ${\cal K}\in L^2(0,\infty)$ with a $L^2$-norm equal to one.
This gives the desired result.
\end{proof}  
  

Let $\tilde{X}_t $  be defined by (\ref{eq:tXdef0}).
Then we have
  \begin{equation}
     Q_t^{(0)}(\tilde{X}_t) = \EE\left[ h(\tilde{X}_T) \mid {\cal F}_t \right]. 
     \label{eq:relietildeXQ}
  \end{equation}

 We finish this appendix with three   {\it effective market lemmas}:

\begin{lemma}
\label{lem:EMn}
Let $f(t,x)$ be smooth bounded and with bounded derivatives.
Let $X_t$  
be defined by Eq. (\ref{eq:Xdef}) and $\tilde{X}_t$  be defined by Eq. (\ref{eq:tXdef0}).
For $t,t'\in [0,T]$ we have
\begin{eqnarray}
&&
 \label{eq:lemEMn1}
\sup_{\eps \in (0,1]}    \eps^{-1/2} 
\EE\left[ \Big|  \EE[f(t,X_t)  - f(t,\tilde{X}_t)\mid {\cal F}_0 ] \Big|^2 \right]^{1/2}  <  \infty ,\\
&&
\sup_{\eps \in (0,1]}    \eps^{-1/2}  
\EE\left[ \Big|  \EE[f(t,X_t) f(t',X_{t'})   -  f(t,\tilde{X}_t) f(t',\tilde{X}_{t'})
\mid {\cal F}_0 ] \Big|^2 \right]^{1/2}  <  \infty .
\label{eq:lemEMn2}
\end{eqnarray} 
 \end{lemma}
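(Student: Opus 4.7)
The plan is to deduce both estimates from Proposition \ref{prop:main}, using it first directly to obtain (\ref{eq:lemEMn1}) and then, via a tower-property reduction, to bootstrap to the product estimate (\ref{eq:lemEMn2}).

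For (\ref{eq:lemEMn1}), consider the martingale $M_s := \EE[f(t,X_t)\mid \mathcal F_s]$ on $s \in [0,t]$, corresponding to the European option with maturity $t$ and payoff $f(t,\cdot)$. Let $\tilde Q^{(0)}(s,x)$ and $\tilde Q^{(1)}(s,x)$ denote the associated Black-Scholes price at effective volatility $\bar\sigma$ and the first-order correction of Proposition \ref{prop:main}. Since $f(t,\cdot)$ is smooth and bounded with bounded derivatives, the proposition gives $\EE[\vert M_s - \tilde Q^{(0)}(s,X_s) - \sqrt\eps\rho\,\tilde Q^{(1)}(s,X_s)\vert^2]^{1/2} = o(\sqrt\eps)$ uniformly in $s \in [0,t]$. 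Evaluating at $s=0$, and noting that the Feynman-Kac formula identifies $\EE[f(t,\tilde X_t)\mid\mathcal F_0]=\tilde Q^{(0)}(0,X_0)$ (with $X_0$ being $\mathcal F_0$-measurable), one obtains
\[
\EE[f(t,X_t)-f(t,\tilde X_t)\mid\mathcal F_0] = \sqrt\eps\,\rho\,\tilde Q^{(1)}(0,X_0) + R,
\qquad \EE[R^2]^{1/2}=o(\sqrt\eps),
\]
which has $L^2$-norm of order $\sqrt\eps$ because $\tilde Q^{(1)}$ is bounded by the explicit formula (\ref{def:Q1t}).

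For (\ref{eq:lemEMn2}), assume without loss of generality $t\leq t'$. Apply Proposition \ref{prop:main} with maturity $t'$ and payoff $f(t',\cdot)$ to the martingale $M^{(2)}_s := \EE[f(t',X_{t'})\mid\mathcal F_s]$: denoting by $\hat Q^{(0)}(s,x)$ and $\hat Q^{(1)}(s,x)$ the corresponding BS price and correction, we get $M^{(2)}_t = \hat Q^{(0)}(t,X_t)+\sqrt\eps\rho\,\hat Q^{(1)}(t,X_t)+R^{(2)}_t$ with $\EE[(R^{(2)}_t)^2]^{1/2}=o(\sqrt\eps)$. By the tower property and boundedness of $f$,
\begin{align*}
\EE[f(t,X_t)f(t',X_{t'})\mid\mathcal F_0]
&= \EE[f(t,X_t)\hat Q^{(0)}(t,X_t)\mid\mathcal F_0]\\
&\quad + \sqrt\eps\,\rho\,\EE[f(t,X_t)\hat Q^{(1)}(t,X_t)\mid\mathcal F_0] + \EE[f(t,X_t)R^{(2)}_t\mid\mathcal F_0].
\end{align*}
The last term has $L^2$-norm bounded by $\|f\|_\infty\,\EE[(R^{(2)}_t)^2]^{1/2}=o(\sqrt\eps)$ by conditional Cauchy-Schwarz; the middle term is $O(\sqrt\eps)$ in $L^2$ by boundedness of $f$ and $\hat Q^{(1)}$. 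For the first term, one invokes the already-established (\ref{eq:lemEMn1}) applied to the composite payoff $H(x) := f(t,x)\hat Q^{(0)}(t,x)$, which is smooth and bounded with bounded derivatives since $f(t',\cdot)$ is (so that $\hat Q^{(0)}(t,\cdot)$ inherits these properties). This yields
\[
\EE[f(t,X_t)\hat Q^{(0)}(t,X_t)\mid\mathcal F_0] = \EE[f(t,\tilde X_t)\hat Q^{(0)}(t,\tilde X_t)\mid\mathcal F_0] + O(\sqrt\eps) \text{ in } L^2,
\]
and Feynman-Kac for the GBM $\tilde X$ gives $\hat Q^{(0)}(t,\tilde X_t)=\EE[f(t',\tilde X_{t'})\mid \tilde X_t]$, so the right-hand side equals $\EE[f(t,\tilde X_t)f(t',\tilde X_{t'})\mid\mathcal F_0] + O(\sqrt\eps)$, which completes the proof.

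The main technical subtlety is bookkeeping the error term $R^{(2)}_t$ in the product estimate: only an unconditional $L^2$ bound is supplied by Proposition \ref{prop:main}, so the reduction relies on the inequality $\EE[\vert\EE[f(t,X_t)R^{(2)}_t\mid\mathcal F_0]\vert^2]\leq \|f\|_\infty^2\,\EE[(R^{(2)}_t)^2]$, which is precisely why the hypothesis that $f$ be bounded enters in an essential way. Everything else is a standard two-step approximation by the leading-order pricing expansion, relying on the boundedness and smoothness transfer from $f(t',\cdot)$ to $\hat Q^{(0)}(t,\cdot)$ to legitimately reapply the single-time estimate (\ref{eq:lemEMn1}) to the composite payoff $H$.
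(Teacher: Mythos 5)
Your proposal is correct and follows essentially the same route as the paper: (\ref{eq:lemEMn1}) by applying Proposition \ref{prop:main} with payoff $f(t,\cdot)$ and maturity $t$ and identifying $\EE[f(t,\tilde X_t)\mid{\cal F}_0]$ with the Black-Scholes value $Q^{(0)}_0(X_0)$ via (\ref{eq:relietildeXQ}), then (\ref{eq:lemEMn2}) by conditioning on ${\cal F}_t$, replacing $\EE[f(t',X_{t'})\mid{\cal F}_t]$ by the Black-Scholes function of $X_t$ up to $O(\sqrt{\eps})$, and reapplying (\ref{eq:lemEMn1}) to the composite payoff $f(t,x)\,{\cal Q}^{(0)}_t(x)$. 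Your explicit bookkeeping of the remainder via conditional Cauchy-Schwarz and the smoothness transfer to the composite payoff are the same ingredients the paper uses, just spelled out in slightly more detail.
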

 
\begin{proof}
 {First, we prove (\ref{eq:lemEMn1}). 
We apply Proposition \ref{prop:main} with $h(x)=f(t,x)$ and $T=t$ and we look at $M_0=\EE[h(X_T) \mid {\cal F}_0 ]=\EE[f(t,{X}_t) \mid {\cal F}_0 ]$. 
Then Proposition \ref{prop:main} gives that $\lim_{\eps\to 0} \eps^{-1/2}\EE[ |M_0 -P(0,X_0)|^2]^{1/2}=0$
with $P(0,x)=Q_0^{(0)}(x)+\eps^{1/2} \rho Q_0^{(1)}(x)$. Therefore we have
$\EE[f(t,{X}_t) \mid {\cal F}_0 ] = Q_0^{(0)}(X_0)+O(\sqrt{\eps})$.
Moreover, (\ref{eq:relietildeXQ}) gives $Q_0^{(0)}(\tilde{X}_0) =\EE[f(t,\tilde{X}_t) \mid {\cal F}_0 ]$.
This gives (\ref{eq:lemEMn1}) because $\tilde{X}_0=X_0$.\\
Second, we prove  (\ref{eq:lemEMn2}) for $t<t'$. 
We write
\begin{eqnarray*}
&&
\EE[f(t,X_t) f(t',X_{t'})   -  f(t,\tilde{X}_t) f(t',\tilde{X}_{t'}) \mid {\cal F}_0 ]\\
&&=
\EE\big[f(t,X_t) \EE[ f(t',X_{t'})|{\cal F}_t]   -  f(t,\tilde{X}_t) \EE[ f(t',\tilde{X}_{t'})|{\cal F}_t]  
\mid {\cal F}_0 \big] .
\end{eqnarray*}
We apply Proposition \ref{prop:main} with $h(x)=f(t',x)$ and $T=t'$ and we get
$$
\EE \big[ f(t',X_{t'})   \mid {\cal F}_t ] = {\cal Q}_t^{(0)} (X_t) +O(\sqrt{\eps}) ,
$$
where ${\cal Q}_{x}^{(0)}(x)$ satisfies ${\cal L}_{\rm BS}(\bar{\sigma}) {\cal Q}_s^{(0)}(x) = 0$ for $s\in [t,t')$ with ${\cal Q}^{(0)}_{t'}(x)=f(t',x)$.
We also have $\EE \big[ f(t',\tilde{X}_{t'})   \mid {\cal F}_t ] = {\cal Q}_t^{(0)} (\tilde{X}_t) $. Therefore
\begin{eqnarray*}
&&\EE[f(t,X_t) f(t',X_{t'})   -  f(t,\tilde{X}_t) f(t',\tilde{X}_{t'})
\mid {\cal F}_0 ] \\
&&=
\EE[f(t,X_t) {\cal Q}_t^{(0)} (X_t)   -  f(t,\tilde{X}_t) {\cal Q}_t^{(0)} (\tilde{X}_t) 
\mid {\cal F}_0 ]
+O(\sqrt{\eps}) ,
\end{eqnarray*}
and we can apply (\ref{eq:lemEMn1}) with the function $\tilde{f}(t,x)= f(t,x) {\cal Q}_t^{(0)}(x)$ to get the desired result (\ref{eq:lemEMn2}).} 
\end{proof}

\begin{lemma}
\label{lem:EMn0}
Let $f(t,x)$ be smooth bounded and with bounded derivatives. Then  we have
for $0 \leq t \leq T$:
\begin{equation}
 \lim_{\eps \to 0}      \EE\left[
 \left(  \int_0^t  f(s,X_s)  \left(  (\sigma_s^\eps)^2  - \bar\sigma^2 \right)  ds \right)^2 \right]   = 0. 
 \label{eq:comparesigmasigmaeps}
\end{equation}
\end{lemma}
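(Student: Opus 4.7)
The plan is to exploit the martingale--residual decomposition
$\psi_t^\eps-\phi_t^\eps=\int_0^t G(Z_s^\eps)\,ds$,
which follows directly from the definitions (\ref{def:Kt}) and (\ref{def:phit}), together with the identity $(\sigma_s^\eps)^2-\bar\sigma^2=2G(Z_s^\eps)$ coming from (\ref{def:Gb}). Since $\psi^\eps$ is a square-integrable martingale and $\phi^\eps$ is an adapted residual that is small in $L^2$ by Lemma~\ref{lem:A2}, writing the above identity in differential form and splitting gives
\[
\int_0^t f(s,X_s)\bigl((\sigma_s^\eps)^2-\bar\sigma^2\bigr)\,ds
=2\!\int_0^t f(s,X_s)\,d\psi_s^\eps-2\!\int_0^t f(s,X_s)\,d\phi_s^\eps,
\]
which I analyze piece by piece.

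The first term is a stochastic integral against the martingale $\psi^\eps$, whose bracket satisfies $d\langle\psi^\eps\rangle_s=(\vartheta_s^\eps)^2\,ds$ with $|\vartheta_s^\eps|\leq K_T\sqrt{\eps}$ almost surely by Lemma~\ref{lem:1b} and (\ref{eq:estimvarthetaeps}); the It\^o isometry and boundedness of $f$ therefore bound its $L^2$-norm by $2\|f\|_\infty K_T\sqrt{\eps T}$, which vanishes as $\eps\to 0$ (this is essentially the argument of Lemma~\ref{lem:K1}). For the second term, I apply the It\^o product rule to $f(s,X_s)\phi_s^\eps$ to obtain
\[
\int_0^t f(s,X_s)\,d\phi_s^\eps=f(t,X_t)\phi_t^\eps-f(0,X_0)\phi_0^\eps-\int_0^t\phi_s^\eps\,df(s,X_s)-\langle f(\cdot,X_\cdot),\phi^\eps\rangle_t,
\]
where It\^o's formula gives $df(s,X_s)=\bigl(\partial_s f+\tfrac12(\sigma_s^\eps)^2X_s^2\partial_x^2 f\bigr)(s,X_s)\,ds+\sigma_s^\eps X_s\partial_x f(s,X_s)\,dW_s^*$. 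The boundary terms vanish in $L^2$ since $f$ is bounded and $\EE[(\phi_s^\eps)^2]^{1/2}=O(\eps^{(d-1/2)\wedge 1})$ by Lemma~\ref{lem:A2}, and the drift contribution to $\int_0^t\phi_s^\eps\,df(s,X_s)$ is controlled by Cauchy--Schwarz using boundedness of the first two derivatives of $f$ and of $\sigma^\eps$, together with the standard moment bound $\sup_{s\leq T}\EE[X_s^4]<\infty$ for the geometric process $X$ with bounded volatility. Finally, since $\int_0^\cdot G(Z_u^\eps)\,du$ has finite variation, $\langle f(\cdot,X_\cdot),\phi^\eps\rangle_t=\langle f(\cdot,X_\cdot),\psi^\eps\rangle_t=\rho\int_0^t\partial_x f(s,X_s)\sigma_s^\eps X_s\vartheta_s^\eps\,ds$, whose $L^2$-norm is $O(\sqrt{\eps})$ by (\ref{eq:estimvarthetaeps}).

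The delicate step is the stochastic part of $\int_0^t\phi_s^\eps\,df(s,X_s)$, namely $\int_0^t\phi_s^\eps\partial_x f(s,X_s)\sigma_s^\eps X_s\,dW_s^*$: the It\^o isometry requires control of $\EE[(\phi_s^\eps)^2 X_s^2]$, in which $\phi^\eps$ is small in $L^2$ but $X_s$ is unbounded, so the naive product $\EE[(\phi_s^\eps)^2]\cdot\EE[X_s^2]$ is not available. I overcome this by using the almost-sure bound $|\phi_s^\eps|\leq\|G\|_\infty T$ coming from the boundedness of $F$ (and hence of $G$): writing $(\phi_s^\eps)^2\leq\|G\|_\infty T\,|\phi_s^\eps|$ and applying Cauchy--Schwarz yields
\[
\EE[(\phi_s^\eps)^2 X_s^2]\leq\|G\|_\infty T\,\EE[(\phi_s^\eps)^2]^{1/2}\,\EE[X_s^4]^{1/2}=O\bigl(\eps^{(d-1/2)\wedge 1}\bigr),
\]
which is $o(1)$ since $d>1$. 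Collecting all estimates gives $\EE\bigl[\bigl(\int_0^t f(s,X_s)((\sigma_s^\eps)^2-\bar\sigma^2)\,ds\bigr)^2\bigr]\to 0$, as claimed.
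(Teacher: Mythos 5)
Your proof is correct and follows essentially the same route as the argument the paper delegates to \cite{sv3} for the remainder term $R^{(2)}$: decompose $\tfrac12((\sigma_s^\eps)^2-\bar\sigma^2)\,ds = d\psi_s^\eps - d\phi_s^\eps$, kill the martingale part by the It\^o isometry and the almost-sure bound (\ref{eq:estimvarthetaeps}) on $\vartheta^\eps$, and handle the $\phi^\eps$ part by integration by parts together with the $L^2$-smallness of $\phi^\eps$ from Lemma \ref{lem:A2}; your use of the almost-sure bound $|\phi_s^\eps|\le \|G\|_\infty T$ to control $\EE[(\phi_s^\eps)^2 X_s^2]$ is exactly the right device for the unbounded factor $X_s$. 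The only cosmetic imprecision is that the analogous treatment of the drift term involving $X_s^2\partial_x^2 f$ in $L^2$ requires $\EE[X_s^8]<\infty$ rather than $\EE[X_s^4]<\infty$, which holds equally well since the volatility is bounded.
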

\begin{proof}
 This follows via an argument as in the proof of Eq. (\ref{eq:estimeRj}) for $j=2$
 as given in \cite{sv3}. 
\end{proof}

\begin{lemma}\label{lem:EMn2}
Let $f_j(t,x), \quad j=1,2$ be smooth bounded functions and with bounded derivatives and satisfying:
\ba\label{eq:BSf}
 {\cal L}_{\rm BS} (\bar\sigma)  f_j(t,x)  = 0 , \quad j=1,2 .
\ea
Let  $X_t$ be
defined by Eq.~(\ref{eq:Xdef}) and  $\tilde{X}_t$ be defined by
Eq.~(\ref{eq:tXdef0}). 
  Then   we have for $t \in (0,T)$:
 \begin{equation}\label{eq:EM1}
\lim_{\eps \to 0}   
\EE\left[ \left| \EE\Big[  \Big|  \int_0^t  f_1(s,X_s)   ds  \Big|^2   \mid {\cal F}_0\Big] 
 -
 2
\EE\Big[   \int_0^t  f_1(s, \tilde{X}_s)^2 (t-s)   ds   \mid {\cal F}_0  \Big] 
\right|
\right]   = 0  ,
\end{equation} 
\ba
\nn
&& \lim_{\eps \to 0}   \EE\left[ \left|
  \eps^{-1/2} 
\EE\Big[   \int_0^t  f_1(s,X_s)   ds   \int_0^t  f_2(s,X_s)   d\psi^\eps_s   \mid {\cal F}_0  \Big]   
\right.\right. 
\\ &&   ~~~~~~~~
\left. \left. - 
\rho  \overline{D}
\EE\Big[   \int_0^t (t-s) 
\left( (x \partial_x ) f_1(s, \tilde{X}_s) \right) f_2(s, \tilde{X}_s)  ds     \mid {\cal F}_0   \Big]
\right|
\right]  =0, 
\label{eq:EM2}
 \ea
 \begin{equation}\label{eq:EM3}
 \lim_{\eps \to 0}   \EE\left[ \left|  \eps^{-1}
\EE\Big[ \Big(  \int_0^t  f_1(s,X_s)   d\psi^\eps_s     \Big)^2  \mid {\cal F}_0 \Big] 
-
\overline{\Gamma}^2
\EE\Big[   \int_0^t  f_1(s, \tilde{X}_s)  ^2   ds  \mid {\cal F}_0   \Big] \right|
\right] =0 .
\end{equation} 
 \end{lemma}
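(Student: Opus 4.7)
The three identities share a common strategy: reduce the left-hand side via It\^o calculus to an integral against $ds$, $d\psi^\eps_s$, or the quadratic covariations $d\langle W^*,\psi^\eps\rangle_s=\rho\vartheta^\eps_s ds$ and $d\langle\psi^\eps,\psi^\eps\rangle_s=(\vartheta^\eps_s)^2 ds$, and then invoke the effective-market convergences already proved: Lemma \ref{lem:K2} for $\sigma^\eps\vartheta^\eps/\sqrt{\eps}\to\overline{D}$ and $(\vartheta^\eps)^2/\eps\to\overline{\Gamma}^2$ in $L^1$, Lemma \ref{lem:EMn0} for the vanishing of $\int_0^t((\sigma^\eps_s)^2-\bar\sigma^2)ds$ in $L^2$, and Lemma \ref{lem:EMn} (with Proposition \ref{prop:main}) for replacing $X$ by the Black--Scholes-homogenized $\tilde{X}$. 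Identity (\ref{eq:EM3}) is immediate: by the It\^o isometry and Lemma \ref{lem:1b}, $\EE[(\int_0^t f_1 d\psi^\eps)^2\mid{\cal F}_0]=\EE[\int_0^t f_1(s,X_s)^2 (\vartheta^\eps_s)^2 ds\mid{\cal F}_0]$; multiplying by $\eps^{-1}$ and applying Lemma \ref{lem:K2}-(\ref{a32}) with $f=f_1^2$ gives $L^1$ convergence to $\overline{\Gamma}^2\EE[\int_0^t f_1(s,X_s)^2 ds\mid{\cal F}_0]$, after which Lemma \ref{lem:EMn} applied to $f_1^2$ (pointwise in $s$, then dominated convergence) substitutes $\tilde{X}$ for $X$.

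For (\ref{eq:EM1}) I would expand by Fubini: the left-hand side equals $2\int_0^t\!\int_0^s\EE[f_1(s,X_s)f_1(u,X_u)\mid{\cal F}_0] du\,ds$. Applying Proposition \ref{prop:main} with payoff $h=f_1(s,\cdot)$ and horizon $s$: since ${\cal L}_{\rm BS}(\bar\sigma)f_1=0$, uniqueness forces $Q^{(0)}(u,x)\equiv f_1(u,x)$, so $\EE[f_1(s,X_s)\mid{\cal F}_u]=f_1(u,X_u)+O(\sqrt{\eps})$ in $L^2$ uniformly in $u\leq s$. Inserting via the tower property and using boundedness of $f_1$ yields $\EE[f_1(s,X_s)f_1(u,X_u)\mid{\cal F}_0]=\EE[f_1(u,X_u)^2\mid{\cal F}_0]+O(\sqrt{\eps})$; a further application of Lemma \ref{lem:EMn} (with $f\leftarrow f_1^2$) swaps $X$ for $\tilde{X}$, and reversing Fubini collects the weight $(t-u)$.

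The substantive step is (\ref{eq:EM2}). It\^o's formula and ${\cal L}_{\rm BS}(\bar\sigma)f_1=0$ yield $df_1(s,X_s)=\tfrac12((\sigma^\eps_s)^2-\bar\sigma^2)X_s^2\partial_x^2 f_1(s,X_s)\,ds+(x\partial_x f_1)(s,X_s)\sigma^\eps_s dW^*_s$, and integration by parts decomposes
$$
\int_0^t f_1(s,X_s)\,ds = tf_1(0,X_0) + A^{(1)}_t + A^{(2)}_t,
$$
with the bounded-variation drift $A^{(1)}_t=\int_0^t(t-s)\tfrac12((\sigma^\eps_s)^2-\bar\sigma^2)X_s^2\partial_x^2 f_1(s,X_s)\,ds$ and the martingale $A^{(2)}_t=\int_0^t(t-s)(x\partial_x f_1)(s,X_s)\sigma^\eps_s dW^*_s$. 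Setting $B_t=\int_0^t f_2(s,X_s) d\psi^\eps_s$: the constant term contributes $tf_1(0,X_0)\EE[B_t\mid{\cal F}_0]=0$; Cauchy--Schwarz combined with Lemma \ref{lem:EMn0} ($\EE[(A^{(1)}_t)^2]\to 0$) and (\ref{eq:estimvarthetaeps}) ($\EE[B_t^2]=O(\eps)$ via It\^o isometry applied to $\psi^\eps$) handles $\eps^{-1/2}\EE[A^{(1)}_t B_t\mid{\cal F}_0]\to 0$ in $L^1$. The leading contribution is then $\EE[A^{(2)}_t B_t\mid{\cal F}_0]=\EE[\langle A^{(2)},B\rangle_t\mid{\cal F}_0]$; using $d\langle W^*,\psi^\eps\rangle_s=\rho\vartheta^\eps_s ds$ this equals $\rho\EE[\int_0^t (t-s)(x\partial_x f_1)(s,X_s) f_2(s,X_s)\sigma^\eps_s\vartheta^\eps_s ds\mid{\cal F}_0]$. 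Multiplying by $\eps^{-1/2}$ and invoking Lemma \ref{lem:K2}-(\ref{a31}) with $f(s,x)=(t-s)(x\partial_x f_1)(s,x)f_2(s,x)$ delivers the target $\rho\overline{D}\EE[\int_0^t (t-s)(x\partial_x f_1)(s,X_s) f_2(s,X_s) ds\mid{\cal F}_0]$ in $L^1$, and a final Lemma \ref{lem:EMn} substitutes $\tilde{X}$ for $X$.

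The main obstacle is the bookkeeping around $A^{(1)}_t$ in (\ref{eq:EM2}): by Lemma \ref{lem:EMn0} alone $A^{(1)}_t$ is only $o(1)$ in $L^2$, not $o(\sqrt{\eps})$, so the saving of $\sqrt{\eps}$ needed to absorb the prefactor $\eps^{-1/2}$ must be extracted entirely from the small martingale $B_t=O(\sqrt{\eps})$ via Cauchy--Schwarz, rather than from $A^{(1)}_t$ itself. A second point of care is the verification that the leading term arises solely from the $W^*$--$\psi^\eps$ covariation (of size $\sqrt{\eps}$ by Lemma \ref{lem:K2}-(\ref{a31})) and not from any hidden cross-term involving $(\vartheta^\eps)^2$ (of size $\eps$); the decomposition into $A^{(1)}_t$ and $A^{(2)}_t$ isolates this contribution cleanly, provided one is careful that the factor $(x\partial_x f_1)(s,x) f_2(s,x)$ satisfies the boundedness hypotheses of Lemma \ref{lem:K2}.
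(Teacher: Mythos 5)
Your proof is correct and follows essentially the same route as the paper: It\^o isometry plus Lemma \ref{lem:K2}-(\ref{a32}) for (\ref{eq:EM3}), the martingale property of $f_1(\cdot,\tilde X)$ combined with the $X\to\tilde X$ substitution of Lemma \ref{lem:EMn} for (\ref{eq:EM1}), and for (\ref{eq:EM2}) the same three-way split (constant term killed by the zero-mean martingale $B$, drift term killed by Cauchy--Schwarz with Lemmas \ref{lem:K1} and \ref{lem:EMn0}, and the $W^*$--$\psi^\eps$ covariation producing $\rho\overline D$ via Lemma \ref{lem:K2}-(\ref{a31})). The only differences are cosmetic: you perform the integration by parts $\int_0^t f_1\,ds = tf_1(0,X_0)+\int_0^t(t-s)\,df_1(s,X_s)$ up front to extract the $(t-s)$ weight, whereas the paper conditions on ${\cal F}_s$ first and recovers that weight by Fubini at the end, and in (\ref{eq:EM1}) you inline the proof of Lemma \ref{lem:EMn}-(\ref{eq:lemEMn2}) rather than citing it.
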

 
\begin{proof}
{\it Proof of (\ref{eq:EM1}):}
Note first that in view of Lemma \ref{lem:EMn}, Eq.~(\ref{eq:lemEMn2}), we have
\ban
  \lim_{\eps \to 0}   
\EE\left[ \left| \EE\Big[\Big| \int_0^t  f_1(s,X_s)   ds  \Big|^2  \mid {\cal F}_0 \Big] 
-   \EE\Big[ \Big|  \int_0^t  f_1(s,\tilde{X}_s)   ds  \Big|^2  \mid {\cal F}_0 \Big]\right|\right]   =0.
 \ean
 Note next that in view of Eq.~(\ref{eq:BSf}) $f_1(s, \tilde{X}_s)$ is a martingale so that
 \ban
 &&   \EE\left[ \Big|  \int_0^t  f_1(s,\tilde{X}_s)   ds  \Big|^2  \mid {\cal F}_0 \right]   =
   \EE\left[ 2  \int_0^t  f_1(s,\tilde{X}_s) \int_s^t   f_1(u,\tilde{X}_u) du  ds   \mid {\cal F}_0   \right] 
  \\ && 
       =  \EE\left[ 2  \int_0^t f_1(s, \tilde{X}_s)^2 (t-s)  ds   \mid {\cal F}_0\right] ,
 \ean
 which gives  Eq.~(\ref{eq:EM1}).  
 
{\it Proof of (\ref{eq:EM2}): }
It follows from the fact that $\int_0^t f_2(u,X_u) d\psi^\eps_u$ is a martingale and It\^o's Lemma that 
  \ba
  \nonumber
  && \EE\left[   \int_0^t  f_1(s,X_s)   ds   \int_0^t  f_2(s,X_s)   d\psi^\eps_s  \mid {\cal F}_0   \right]\\
&&  =    
  \nonumber    \int_0^t     \EE\left[    
 \EE\big[  \int_0^t  f_2(u,X_u)   d\psi^\eps_u \mid {\cal F}_s \big]
      f_1(s,X_s)      \mid {\cal F}_0   \right]      ds
    \\ 
  \nonumber  
  &&=       \int_0^t      \EE\left[   
  \int_0^s  f_2(u,X_u)   d\psi^\eps_u 
      f_1(s,X_s)     \mid {\cal F}_0   \right]       ds
    \\ 
      \nonumber
&&   
     =      \int_0^t         \EE\left[  
  \int_0^s  f_2(u,X_u)   d\psi^\eps_u 
     \int_0^s  (x^2 \partial^2_x )f_1(u,X_u)  
       \frac{1}{2} \left(  (\sigma_u^\eps)^2  - \bar\sigma^2 \right)  du    \mid {\cal F}_0 \right]  ds  
            \\
            \nonumber
             && \quad 
    \hbox{} +  \int_0^t     \EE\left[  
  \int_0^s  f_2(u,X_u)   d\psi^\eps_u 
     \int_0^s  (x \partial_x )f_1(u,X_u)   \sigma_u^\eps dW^*_u   \mid {\cal F}_0 \right]   ds  \\
      && \quad 
    \hbox{} +  \int_0^t     \EE\left[  
  \int_0^s  f_2(u,X_u)   d\psi^\eps_u 
    f_1(0,X_0)    \mid {\cal F}_0 \right]   ds            .
\label{eq:EM2:pr1}
  \ea
The last term of Eq.~(\ref{eq:EM2:pr1}) is zero because $\int_0^t f_2(u,X_u) d\psi^\eps_u$ is a zero-mean martingale.
It follows from    Lemmas \ref{lem:K1} and  \ref{lem:EMn0}    that 
\ba
\nonumber
  && \EE\left[  \left|  \EE\Big[    \eps^{-1/2}   
  \int_0^s  f_2(u,X_u)   d\psi^\eps_u 
     \int_0^s (x^2 \partial^2_x) f_1(u,X_u)  
       \frac{1}{2} \big(  (\sigma_u^\eps)^2  - \bar\sigma^2 \big)  du   \mid {\cal F}_0\Big]\right|
    \right]  
      \\ 
      \nonumber &&    \leq
    \EE\left[ \eps^{-1}  \left(\int_0^s  f_2(u,X_u)   d\psi^\eps_u  \right)^2 \right]^{1/2}
    \EE\left[   \left( \int_0^s  ( x^2 \partial^2_x ) f_1(u,X_u) 
       \frac{1}{2} \left(  (\sigma_u^\eps)^2  - \bar\sigma^2 \right)  du    \right)^2
    \right]^{1/2}\\
    \nonumber
    &&    =
    \EE\left[ \eps^{-1}  \int_0^s  f_2(u,X_u)   (\vartheta^\eps_u)^2 du \right]^{1/2}
    \EE\left[   \left( \int_0^s  ( x^2 \partial^2_x ) f_1(u,X_u) 
       \frac{1}{2} \left(  (\sigma_u^\eps)^2  - \bar\sigma^2 \right)  du    \right)^2
    \right]^{1/2}
     \\  &&    \stackrel{\eps \to  0}{\longrightarrow}   0    .
                 \label{eq:EM2:pr2}
    \ea 
By Lemma  \ref{lem:1b} we have
 \ban
   && \EE\Big[ \eps^{-1/2}  \int_0^s  f_2(u,X_u)   d\psi^\eps_u 
     \int_0^s (x \partial_x) f_1(u,X_u)   \sigma_u^\eps dW^*_u    \mid {\cal F}_0  \Big]
     \\ && \mbox{} 
   \quad   - \rho  \EE\Big[   \int_0^s  f_2(u,X_u)( x \partial_x) f_1(u,X_u)    \overline{D}
     du   \mid {\cal F}_0\Big]
     \\ &&  
      =   \rho \EE\Big[  \int_0^s  f_2(u,X_u) ( x \partial_x) f_1(u,X_u) 
      \big( \eps^{-1/2}  \sigma_u^\eps  \vartheta^\eps_{u}   -  \overline{D}\big)
     du  \mid {\cal F}_0 \Big] .
         \ean   
By Lemma \ref{lem:K2}, Eq.~(\ref{a31}), we get
 \ban
 \nonumber
 &&
 \EE\left[  \left| \EE\Big[  \int_0^s  f_2(u,X_u) ( x \partial_x) f_1(u,X_u) 
      \big( \eps^{-1/2}  \sigma_u^\eps  \vartheta^\eps_{u}   -  \overline{D}\big)
     du  \mid {\cal F}_0 \Big]  \right| \right] \\
&&     \leq
    \EE\left[ \Big(  \int_0^s  f_2(u,X_u) ( x \partial_x) f_1(u,X_u) 
      \big( \eps^{-1/2}  \sigma_u^\eps  \vartheta^\eps_{u}   -  \overline{D}\big)
     du \Big)^2\right]^{1/2} 
            \stackrel{\eps \to  0}{\longrightarrow}   0  .
            \ean
By Lemma \ref{lem:EMn}, Eq.~(\ref{eq:lemEMn1}), we find
\ban
&&
\lim_{\eps\to 0}
 \EE\left[  \left|
 \EE\Big[   \int_0^s  f_2(u,X_u)( x \partial_x) f_1(u,X_u)  
     du   \mid {\cal F}_0\Big] \right.\right. \\
     && \hspace*{0.5in}
   \left.\left.  -
 \EE\Big[   \int_0^s  f_2(u,\tilde{X}_u)( x \partial_x) f_1(u,\tilde{X}_u)         
  du   \mid {\cal F}_0\Big] \right|\right] =0     .
\ean          
Therefore
\ba
\nonumber
  && \EE\left[\left| 
  \EE\Big[ \eps^{-1/2}  \int_0^s  f_2(u,X_u)   d\psi^\eps_u 
     \int_0^s (x \partial_x) f_1(u,X_u)   \sigma_u^\eps dW^*_u    \mid {\cal F}_0  \Big] \right.\right.
     \\ && \mbox{} 
   \quad    \left.\left.  - \rho  \overline{D}
 \EE\Big[   \int_0^s  f_2(u,\tilde{X}_u)( x \partial_x) f_1(u,\tilde{X}_u)  
     du   \mid {\cal F}_0\Big] \right|\right] =0     .
     \label{eq:EM2:pr3}
         \ea
We substitute (\ref{eq:EM2:pr2}-\ref{eq:EM2:pr3}) into (\ref{eq:EM2:pr1})
and we  invoke Lebesgue's dominated convergence theorem (because $\eps^{-1/2} \vartheta^\eps_u$ is uniformly bounded)
to prove (\ref{eq:EM2}).

{\it Proof of (\ref{eq:EM3}):} 
        The result follows from Lemmas  \ref{lem:1b}, \ref{lem:K2}
        and \ref{lem:EMn}. 
   \end{proof}

\section{The fOU Volatility Factor}\label{app:fOU}

We use a rapid fractional Ornstein-Uhlenbeck (fOU) process as the volatility factor and
describe here how this process can be represented in terms of a fractional Brownian motion.
Since fractional Brownian motion can be expressed in terms of ordinary Brownian motion
we also arrive at an expression for the rapid fOU process as a filtered version
of Brownian motion.

A fractional Brownian motion (fBM) is a zero-mean Gaussian process $(W^H_t)_{t\in \RR}$  
with the covariance
\begin{equation}
\label{eq:covfBM}
\EE[ W^H_t W^H_s ] = \frac{\sigma^2_H}{2} \big( |t|^{2H} + |s|^{2H} - |t-s|^{2H} \big) ,
\end{equation}
where $\sigma_H$ is a positive constant.
We use the following moving-average stochastic integral representation of the
fBM \cite{mandelbrot}:
\begin{equation}\label{mandelbrot}
W^H_t = \frac{1}{\Gamma(H+\frac{1}{2})} 
\int_{\RR} (t-s)_+^{H - \frac{1}{2}} -(-s)_+^{H - \frac{1}{2}} dW_s ,
\end{equation}
where $(W_t)_{t \in \RR}$ is a standard Brownian motion over $\RR$.
Then  indeed  $(W^H_t)_{t\in \RR}$ is a zero-mean Gaussian process with
the covariance (\ref{eq:covfBM}) and we have
\begin{eqnarray}
\sigma^2_H
&=&  \frac{1}{\Gamma(2H+1) \sin(\pi H)}.
\end{eqnarray}

We introduce the $\eps$-scaled fractional Ornstein-Uhlenbeck process (fOU) as 
\begin{equation}\label{eq:fOU}
Z^\eps_t = \sqrt{2\sin (\pi H)} \sigma_{{\rm z}}  \eps^{-H} \int_{-\infty}^t e^{-\frac{t-s}{\eps}} dW^H_s  .
\end{equation}
The fractional OU process can be seen as a fractional Brownian motion with a restoring force towards zero.
It is a zero-mean, stationary Gaussian process,
with variance
\begin{equation}
\label{eq:sou}
\EE [ (Z^\eps_t)^2 ] = \sigma^2_{{\rm z}},  
\end{equation}
that is independent of $\eps$,
and covariance:
\begin{eqnarray}\label{eq:sou2}
\EE [ Z^\eps_t Z^\eps_{t+s}  ] &=&\sigma^2_{{\rm z}} {\cal C}_Z\Big(\frac{s}{\eps}\Big) ,
\end{eqnarray}
that is a function of $s/\eps$ only, with
\begin{eqnarray}
\nn
{\cal C}_Z(s) &=& 
 \frac{1}{\Gamma (2H+1)}
 \Big[ \frac{1}{2} \int_\RR e^{- |v|}| s+v|^{2H} dv - |s|^{2H}\Big]
\\
&=&
\frac{2\sin (\pi H)}{\pi}
\int_0^\infty \cos (s x  ) \frac{x^{1-2H}}{1+x^2} dx .
\label{def:calCZ}
\end{eqnarray}
This shows that $\eps$ is the natural scale of variation of the fOU $Z^\eps_t$.
Note that the random process $Z^\eps_t$ is not  a martingale, neither a Markov process.
For $H \in  (0,1/2)$ it possesses short-range correlation properties in the sense that its correlation
function is rough at zero as seen in (\ref{eq:OUa1}) 
while it is integrable and it decays as $s^{2H-2}$ at infinity
as seen in (\ref{eq:OUa2}).

Using Eqs. (\ref{mandelbrot}) and (\ref{eq:fOU}) we arrive at
the moving-average integral representation of the scaled fOU as:
\begin{equation}
\label{eq:fOU2}
Z^\eps_t = \sigma_{{\rm z}} \int_{-\infty}^t {\cal K}^\eps(t-s) dW_s,
\end{equation}
where ${\cal K}^\eps$ is of the form (\ref{def:Keps})-(\ref{def:Keps0}).
The   kernel ${\cal K}$ 
satisfies  the assumptions set forth in Section \ref{sec:svmodel} and the main properties
 are the following ones (valid for any $H \in (0,1/2)$):
\begin{enumerate}
\item[(i)]
${\cal K} \in L^2(0,\infty)$ with $\int_0^\infty {\cal K}^2(u) du = 1$
and ${\cal K} \in L^1(0,\infty)$.
\item[(ii)] For small times $t \ll 1$:
\begin{equation}
{\cal K} (t) = \frac{\sqrt{2 \sin(\pi H)}}{\Gamma(H+\frac{1}{2})  } 
\Big( t^{H - \frac{1}{2}}  + O\big(  t^{H +\frac{1}{2}} \big) \Big) .
\end{equation}
\item[(iii)] For large times $t \gg 1$:
\begin{equation}
{\cal K} (t) = \frac{\sqrt{2 \sin(\pi H)}}{\Gamma(H-\frac{1}{2})}  
\Big( t^{H - \frac{3}{2}}  +O\big(  t^{H - \frac{5}{2}} \big) \Big) .
\end{equation}
\end{enumerate}

The volatility process $\sigma_t^\eps$ defined by (\ref{def:stochmodel}) inherits 
the short-range correlation properties of  the volatility driving process  $Z_t^\eps$.
This follows from the following lemma proved in \cite{sv3}: 
  \begin{lemma}
\label{prop:process}
We denote, for $j=1,2$:
\begin{equation}
\label{def:meanF}
\left< F^j \right> =  \int_{\RR}  F(\sigma_{{\rm z}}z)^j p(z) dz,
\quad \quad 
\left< {F'}^j \right> =  \int_{\RR}  F'(\sigma_{{\rm z}}z)^j p(z) dz,
\end{equation}
where $p(z)$ is the pdf of the standard normal distribution.

\begin{enumerate}
\item
The process $\sigma_t^\eps$ is a
stationary random process with mean $\EE[\sigma_t^\eps] = \left<F\right>$ and variance $
{\rm Var}(\sigma_t^\eps) = \left<F^2\right>-\left<F\right>^2$, independently of $\eps$.

\item
The covariance function of the process $\sigma_t^\eps$ is of the form
\begin{eqnarray}
\label{eq:corrY1}
{\rm Cov}\big( \sigma_t^\eps , \sigma_{t+s}^\eps \big) &=& \big( \left<F^2\right>-\left<F\right>^2\big) {\cal C}_\sigma \Big(\frac{s}{\eps}\Big),
\end{eqnarray}
where the correlation function ${\cal C}_\sigma$ satisfies ${\cal C}_\sigma(0)=1$ and 
\begin{eqnarray}
{\cal C}_\sigma(s) &=&
1 -  \frac{1   }{  \Gamma(2H+1)}
\frac{ \sigma_{{\rm z}}^2\left< {F' }^2\right> }{ \left< F^2\right>-\left<F\right>^2}
s^{2H} +o\big( s^{2H}  \big)  ,
\quad \mbox{ for  } s \ll 1 ,
\label{eq:corrY12}\\
{\cal C}_\sigma(s) &=&
 \frac{1   }{  \Gamma(2H-1)}
\frac{ \sigma_{{\rm z}}^2\left< F' \right>^2 }{ \left< F^2\right>-\left<F\right>^2}
s^{2H-2} +o\big( s^{2H-2}  \big)  ,
\quad \mbox{ for  } s \gg 1 .
\label{eq:corrY12b} 
\end{eqnarray}
\end{enumerate}
\end{lemma}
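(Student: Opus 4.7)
The plan has two parts: first reduce the lemma to properties of the underlying Gaussian process $Z_t^\eps$, then perform Gaussian asymptotic analysis for the transformed process $\sigma_t^\eps = F(Z_t^\eps)$. For item~1, stationarity of $Z_t^\eps$ follows from the moving-average representation (\ref{eq:Zgen}) by the change of variable $s\mapsto s+u$ in the Wiener integral together with stationarity of Brownian increments; since $\sigma_t^\eps$ is a deterministic functional of $Z_t^\eps$, it is stationary as well. The marginal law of $Z_t^\eps$ is $\mathcal{N}(0,\sigma_{\rm z}^2)$ by the It\^o isometry and the normalization $\int_0^\infty {\cal K}^2=1$, so $\EE[\sigma_t^\eps]$ and ${\rm Var}(\sigma_t^\eps)$ reduce to one-dimensional Gaussian integrals against $p(z)$ and produce $\langle F\rangle$ and $\langle F^2\rangle-\langle F\rangle^2$ respectively.

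For item~2, the joint law of $(Z_t^\eps/\sigma_{\rm z},Z_{t+s}^\eps/\sigma_{\rm z})$ is bivariate standard normal with correlation ${\cal C}_Z(s/\eps)$ by (\ref{eq:expresscovZ}), so ${\rm Cov}(\sigma_t^\eps,\sigma_{t+s}^\eps)$ depends on $s$ only through ${\cal C}_Z(s/\eps)$, giving the scaling form (\ref{eq:corrY1}). My main tool for the two asymptotics is the Hermite expansion $F(\sigma_{\rm z}z)=\sum_{k\geq 0}\alpha_k H_k(z)$ in probabilists' Hermite polynomials, with $\alpha_0=\langle F\rangle$ and $\alpha_1=\sigma_{\rm z}\langle F'\rangle$ via Stein's identity. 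Mehler's formula then yields
\[
 \big(\langle F^2\rangle-\langle F\rangle^2\big)\,{\cal C}_\sigma(u) \;=\; \sum_{k\geq 1} k!\,\alpha_k^2\, {\cal C}_Z(u)^k,
\]
and the Parseval identities $\sum_{k\geq 1} k!\,\alpha_k^2 = \langle F^2\rangle-\langle F\rangle^2$ and $\sum_{k\geq 1} k\cdot k!\,\alpha_k^2 = \sigma_{\rm z}^2\langle F'^2\rangle$, obtained by orthogonality of the $H_k$ applied to $F(\sigma_{\rm z}z)$ and its derivative, furnish exactly the constants appearing in (\ref{eq:corrY12})--(\ref{eq:corrY12b}).

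The two regimes are then read off this identity. For $u\gg 1$ so that ${\cal C}_Z(u)\to 0$, I isolate the $k=1$ term to get ${\cal C}_\sigma(u) = \frac{\sigma_{\rm z}^2\langle F'\rangle^2}{\langle F^2\rangle-\langle F\rangle^2}{\cal C}_Z(u)+O({\cal C}_Z(u)^2)$ and substitute (\ref{eq:OUa2}). For $u\ll 1$ so that ${\cal C}_Z(u)\to 1$, I rewrite $\big(\langle F^2\rangle-\langle F\rangle^2\big)\big(1-{\cal C}_\sigma(u)\big) = \sum_{k\geq 1} k!\,\alpha_k^2\big(1-{\cal C}_Z(u)^k\big)$ and use the elementary estimate $|1-\rho^k-k(1-\rho)|\leq \frac{k(k-1)}{2}(1-\rho)^2$ to factor out $1-{\cal C}_Z(u)$, with remainder controlled by $\sum_{k\geq 2} k(k-1)\,k!\,\alpha_k^2 = \sigma_{\rm z}^4\langle F''^2\rangle$; inserting (\ref{eq:OUa1}) produces (\ref{eq:corrY12}). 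The only delicate step is verifying that these remainders are genuinely $o$ of the leading orders $u^{2H-2}$ and $u^{2H}$, but boundedness of $F'$ and $F''$ makes the two Parseval sums finite, so the expansions are valid uniformly in $H\in(0,1/2]$ and the conclusion covers both the rough and the standard Markovian regimes.
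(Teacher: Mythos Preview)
Your proof is correct. The paper itself does not prove this lemma but defers to the companion reference~\cite{sv3}, so there is no ``paper's own proof'' to compare against here. Your Hermite-expansion argument via Mehler's formula is the standard and clean way to handle nonlinear functionals of stationary Gaussian processes: the identities $\alpha_1=\sigma_{\rm z}\langle F'\rangle$, $\sum_{k\ge 1}k\cdot k!\,\alpha_k^2=\sigma_{\rm z}^2\langle F'^2\rangle$, and $\sum_{k\ge 2}k(k-1)\,k!\,\alpha_k^2=\sigma_{\rm z}^4\langle F''^2\rangle$ are exactly right, and the Taylor bound $|1-\rho^k-k(1-\rho)|\le \tfrac{k(k-1)}{2}(1-\rho)^2$ for $\rho\in[0,1]$ gives the needed uniform control near $\rho=1$.

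Two small remarks. First, for large $u$ the correlation ${\cal C}_Z(u)$ is slightly negative (since $\Gamma(2H-1)<0$ for $H\in(0,1/2)$), so your $O({\cal C}_Z(u)^2)$ remainder bound should be read with absolute values; this is harmless since the Hermite series converges absolutely for $|\rho|\le 1$. Second, your closing claim about uniformity at $H=1/2$ is slightly imprecise for the large-$s$ tail, because at $H=1/2$ the coefficient $1/\Gamma(2H-1)$ degenerates and ${\cal C}_Z$ decays exponentially rather than polynomially; the lemma as stated is for the fOU with $H\in(0,1/2)$, so this does not affect the result.
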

Consequently,  the process $\sigma_t^\eps$ has short-range correlation properties
and its covariance function is integrable.

 \section{Call Option Hedging Cost and Risk}\label{app:mom}

  We use below the following ``Greek'' identities for the European  call case:
\ba\label{eq:greek1}
\frac{ x^2 \partial^2_x Q^{(0)}_t(x) }{K} &=&  
\frac{  \partial_\sigma Q^{(0)}_t(x) }{K \bar\sigma (T-t)}
=
 \frac{ (x/K)  e^{-d_+^2(x,t)/2} }{\sqrt{2\pi   \tau_t}}   =
      \frac{  e^{-d_{_{-}}^2(x,t)/2} }{\sqrt{2\pi   \tau_t}}  ,\\ 
\frac{x\partial_x x^2 \partial^2_x Q^{(0)}_t(x) }{K}  &=&
   -    \frac{   d_{_{-}}(x,t) e^{-d_{_{-}}^2(x,t)/2} }{\sqrt{2\pi  } \tau_t} ,  \label{eq:greek} \\
   \frac{(x\partial_x)^2  x^2 \partial^2_x Q^{(0)}_t(x) }{K}  &=&
      \frac{ \big( d^2_{_{-}}(x,t)   -1  \big) e^{-d_{_{-}}^2(x,t)/2} }{\sqrt{2\pi  } \tau_t^{3/2}}   \label{eq:greek2}  ,
\ea
with $\tau_t  =  {(T-t)}{\bar\sigma^2}  $ and
$$
   d_{\pm}(x,t)  =  \frac{ \log(x/K)  }{ \sqrt{\tau_{t}} }   
   \pm \frac{ \sqrt{\tau_{t}} }{2}  .
$$
We will also use the following lemma:   
\begin{lemma}
\label{lem:gauss}
\ba
&& \EE\left[    
e^{-d_{_{-}}^2(\tilde{X}_{Ts},Ts)}   \mid {\cal F}_0  \right]   = 
\exp\Big(-\frac{d_{_{-}}^2(X_0,0)}{1+s}\Big)  f_0(s)  , \\
&& f_0(s)   =  \sqrt{ \frac{1-s}{1+s}  }   , \\
&&
  \EE\left[  d_{_{-}}^2(\tilde{X}_{Ts},Ts) 
e^{ -d_{_{-}}^2(\tilde{X}_{Ts},Ts) }    \mid {\cal F}_0  \right]     
= \exp\Big(-\frac{d_{_{-}}^2(X_0,0)}{1+s}\Big) f_2\big(s, d_{_{-}}(X_0,0)\big) , \\  
  &&  f_2(s, d) =     
  d^2
\sqrt{\frac{(1-s)^3}{(1+s)^5}  }  +
  s \sqrt{ \frac{1-s}{(1+s)^3}  }  
   ,  \\
&&
  \EE\left[  d_{_{-}}^4(\tilde{X}_{Ts},Ts) 
e^{ -d_{_{-}}^2(\tilde{X}_{Ts},Ts) }   \mid {\cal F}_0   \right]   
  =  
\exp\Big(-\frac{d_{_{-}}^2(X_0,0)}{1+s}\Big)  f_4\big(s, d_{_{-}}(X_0,0)\big)  ,  \label{eq:4th}
\\ &&  f_4(s, d) =  
d^4
\sqrt{\frac{(1-s)^{5}}{(1+s)^{9}}  }   +
 6 d^2 s 
 \sqrt{\frac{(1-s)^{3}}{(1+s)^{7}}  }   +
  3s^2  \sqrt{ \frac{  1-s }{(1+s)^{5}}  }  
   .  
\ea
\end{lemma}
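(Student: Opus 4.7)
The plan is to reduce all three identities to a single Gaussian parameter-differentiation computation. The key observation is that under the measure conditional on $\mathcal{F}_0$, the Black-Scholes process $\tilde{X}_t = X_0 \exp(\bar\sigma W_t^* - \bar\sigma^2 t/2)$ has log-value at $t = Ts$ given by $\log(\tilde{X}_{Ts}/K) = \log(X_0/K) + \sqrt{\tau s}\, Z - \tau s/2$ with $Z \sim \mathcal{N}(0,1)$ and $\tau = \bar\sigma^2 T$. A direct substitution in the definition of $d_{_-}$, together with $\tau_{Ts} = \tau(1-s)$, yields the clean affine representation
\begin{equation*}
d_{_-}(\tilde{X}_{Ts},Ts) = \frac{d_{_-}(X_0,0) + \sqrt{s}\,Z}{\sqrt{1-s}},
\end{equation*}
so that $d_{_-}^2(\tilde{X}_{Ts},Ts) = (d+\sqrt{s}Z)^2/(1-s)$ where $d:=d_{_-}(X_0,0)$. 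First I would verify this identity as a preliminary step.

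Next, I would introduce the one-parameter family
\begin{equation*}
I(\lambda) := \EE\!\left[e^{-\lambda d_{_-}^2(\tilde{X}_{Ts},Ts)} \mid \mathcal{F}_0\right] = \int_\RR \frac{1}{\sqrt{2\pi}} \exp\!\left(-\frac{z^2}{2} - \frac{\lambda(d+\sqrt{s}z)^2}{1-s}\right) dz,
\end{equation*}
which reduces to the three quantities of interest via $I(1)$, $-I'(1)$, $I''(1)$ (differentiation under the integral is trivially justified by rapid decay in $z$). Completing the square in $z$ with coefficient $A_\lambda = (1-s+2\lambda s)/(2(1-s))$ and linear term $B_\lambda = 2\lambda d\sqrt{s}/(1-s)$, the quadratic exponent becomes $-A_\lambda(z+B_\lambda/2A_\lambda)^2$ plus a constant whose simplification I would carry out carefully; after collecting terms one finds
\begin{equation*}
I(\lambda) = \sqrt{\frac{1-s}{1-s+2\lambda s}}\,\exp\!\left(-\frac{\lambda d^2}{1-s+2\lambda s}\right).
\end{equation*}
Evaluating at $\lambda=1$ gives the first identity with $f_0(s) = \sqrt{(1-s)/(1+s)}$.

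The second and third identities then follow by differentiating $\log I(\lambda)$, which splits cleanly as
\begin{equation*}
L(\lambda) := \frac{d}{d\lambda}\log I(\lambda) = -\frac{d^2(1-s)}{(1-s+2\lambda s)^2} - \frac{s}{1-s+2\lambda s}.
\end{equation*}
One then obtains $-I'(1)= I(1)\cdot(-L(1))$ and $I''(1) = I(1)\cdot (L(1)^2 + L'(1))$. Evaluating $L(1)$ and $L'(1)$ at $\lambda=1$ and grouping powers of $d$ reproduces exactly the stated expressions $f_2(s,d)$ and $f_4(s,d)$. The main ``obstacle'' is purely bookkeeping: the expressions for $L(1)^2+L'(1)$ must be checked to collapse to the pattern $d^4(1-s)^2/(1+s)^4 + 6d^2 s(1-s)/(1+s)^3 + 3s^2/(1+s)^2$, after which multiplying by $\sqrt{(1-s)/(1+s)}$ yields the desired square-root factors. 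No analytic subtlety arises beyond this algebraic simplification.
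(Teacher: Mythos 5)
Your proof is correct and follows essentially the same route as the paper: the paper's argument likewise reduces everything to the affine representation $d_{_-}(\tilde{X}_{Ts},Ts)=(d_{_-}(X_0,0)+\sqrt{s}\,Z)/\sqrt{1-s}$ followed by completing the square in the resulting Gaussian integrals. Your organization of the moment computations via the one-parameter family $I(\lambda)$ and its log-derivative is a tidier bookkeeping device than computing the three integrals separately, but it is not a different method, and all the resulting formulas for $f_0$, $f_2$, $f_4$ check out.
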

\begin{proof}
By (\ref{eq:tXdef0}), for any $t$, we have  in distribution
$$
\tilde{X}_t = X_0 \exp\Big(\bar{\sigma} W_t^* -\frac{1}{2}\bar{\sigma}^2 t\Big)
=X_0 \exp\Big(\bar{\sigma} \sqrt{t} Z -\frac{1}{2} \bar{\sigma}^2 t\Big)
,
$$
   with $Z$ having the standard normal distribution,
   and therefore
\ban
   d_{_{-}}(\tilde{X}_{t},t)  =  \frac{ \log(\tilde{X}_{t}/K)  }{ \sqrt{\tau_{t}} }   
   - \frac{ \sqrt{\tau_{t}} }{2}
   =  \frac{ d_{_{-}}(X_0,0) +Z \sqrt{t/T}}{  \sqrt{1-t/T} } .
     \ean
One  can then carry  out the resulting Gaussian integral upon a completion
of the square in the exponential.
\end{proof}

\section{{Vega Positivity}}\label{app:pos}

The Black-Scholes price at volatility $\sigma$ can be written as $Q^{(0)}(t,x)=u(T-t,x)$ where $u$ is the solution of
\ban
&& \partial_\tau u =    \frac{1}{2} \sigma^2  x^2   \partial_x^2 u      , \quad  x >0,  \tau  > 0 , \\
 && u(0,x)  = h(x) , \quad u(\tau,0) = h(0) ,
\ean
with  $h$ the payoff function.  
Let $v(\tau,y)=u(\tau,e^y)$. The function $v$ solves
\ban
&& \partial_\tau  v  =    \frac{1}{2}  \sigma^2  \left(  \partial_{y}^2  -  {\partial_{y}}\right)   v      , \quad   y \in \RR,  \tau   > 0 , \\
 && v(0,y)  = h(e^y) , \quad \lim_{y \to -\infty} v(\tau,y) = h(0) .
\ean
By Fourier transform, it can be expressed as
\ban
    v (\tau,y) = \frac{1}{2\pi}   \int_\RR e^{\frac{ \sigma^2}{2}(-\omega^2+i \omega) \tau } \ch (\omega) e^{-i \omega y} d\omega  ,
 \mbox{ where }   \ch (\omega) = \int_\RR  h(e^y)  e^{i \omega y} dy  . 
\ean
Thus
$$
 \partial_\sigma   Q^{(0)}(T-\tau,e^y) =  \partial_\sigma  v(\tau,y) =
      \frac{ \sigma\tau}{2\pi}   \int_\RR (-\omega^2+i\omega) e^{\frac{ \sigma^2}{2}(-\omega^2+i \omega) \tau } \ch (\omega) e^{-i \omega y} d\omega .
 $$     
We have
\begin{align*}
    \int_\RR (-\omega^2+i\omega) e^{- \frac{ \sigma^2}{2}\omega^2 \tau } \ch (\omega) e^{-i \omega y} d\omega 
&=   (\partial_{y}^2 - \partial_{y} )  \iint_{\RR^2} h(e^{y'})  e^{- \frac{ \sigma^2}{2}\omega^2 \tau }  e^{i \omega (y'-y)} d\omega dy' \\
&=  2\pi (\partial_{y}^2 - \partial_{y} )   \EE \left[  h(e^{ y+ \sigma\sqrt{\tau} X}) \right]  ,
\end{align*}
where $X$  is a standard Gaussian random variable. 
By combining the last two identities with $x=e^y$, we find
\ban
 \partial_\sigma   Q^{(0)}(T-\tau,x) = g( x  )  , \mbox{ where }g(x) =   \sigma\tau x^2 \partial_x^2 \EE\left[  h(xe^{ \sigma\sqrt{\tau} X}) \right] .
\ean 
Since $h$ is convex and not affine $x\mapsto  \EE\left[  h(xe^{ \sigma \sqrt{\tau}X}) \right] $
is smooth and strictly convex and hence the Vega is positive.

\end{document}